\documentclass[11pt]{article}

\usepackage[T1]{fontenc}
\usepackage[a4paper,margin=1in]{geometry}
\usepackage{amsthm}
\usepackage{newtxtext}
\usepackage[subscriptcorrection]{newtxmath}

\usepackage{amsmath,mathtools,bbm}
\newtheorem{theorem}{Theorem}
\newtheorem{lemma}{Lemma}
\newtheorem{corollary}{Corollary}
\newtheorem{proposition}{Proposition}
\theoremstyle{definition}
\newtheorem{assumption}{Assumption}

\theoremstyle{remark}

\usepackage{graphicx}
\usepackage{float}
\usepackage{booktabs}
\usepackage{makecell}
\usepackage{threeparttable}
\usepackage[normalem]{ulem}
\usepackage{multirow}
\usepackage{xcolor}
\usepackage{setspace}
\usepackage[plain,noend]{algorithm2e}

\usepackage[authoryear,round]{natbib}
\usepackage[hidelinks]{hyperref}

\allowdisplaybreaks
\newcommand{\E}{\mathbb{E}}

\newcommand{\argmin}{\mathop{\mathrm{arg\,min}}}

\begin{document}

\title{Handling covariate shift by model averaging}
\author{%
Yifan Zhang$^{1}$ \quad Tianfa Xie$^{1}$ \quad Xinyu Zhang$^{2}$\\[0.6em]
\small $^{1}$School of Mathematics, Statistics and Mechanics, Beijing University of Technology, Beijing 100124, China\\
\small $^{2}$Academy of Mathematics and Systems Science, Chinese Academy of Sciences, Beijing 100190, China\\[0.4em]
\small \texttt{yfzhang@188.com} \quad \texttt{xietf@bjut.edu.cn} \quad \texttt{xinyu@amss.ac.cn}}
\date{}
\maketitle

\begin{abstract}
Distributional mismatch between the data used to construct a statistical procedure and the population to which it is ultimately applied is pervasive in modern data analysis. We study covariate shift, a fundamental instance of this problem, and develop an adaptive importance-weighted model averaging method for prediction when labeled observations are available from a source distribution, whereas only unlabeled covariates are observed from the target distribution. Procedures fitted directly to the source sample generally optimize prediction risk under the source distribution and may therefore be suboptimal for target prediction. Importance weighting by the density ratio between the target and source covariate marginals provides a natural correction, but a small number of large density-ratio values can substantially inflate the variance of the resulting estimator in finite samples. We address this bias–variance trade-off by treating the degree of importance-weighting correction as a source of model uncertainty. Specifically, we construct a family of adaptive importance-weighted least-squares estimators by raising the estimated density ratio to a range of exponents, with the endpoints corresponding to ordinary least squares and standard importance-weighted least squares, and form a data-driven average over these candidates. Under model misspecification, the proposed model averaging estimator is shown to be asymptotically optimal relative to the infeasible best convex combination of the candidate estimators. Under correct specification, a diverging penalty is shown to make the selected weights concentrate near the ordinary least-squares endpoint. Simulations and a real-data application show that the proposed method achieves competitive target-prediction performance across the settings considered. 
\end{abstract}

\par\medskip
\noindent\textbf{Keywords:} Asymptotic optimality; Covariate shift; Importance weighting; Model averaging; Weight concentration.
\par\medskip

\section{Introduction}
\label{sec:Introduction}

Statistical procedures are often developed using data collected from one
population, environment, or time period and subsequently applied in another.
As a result, the distribution governing the available data may differ
systematically from the distribution under which the procedure is ultimately
used. Such source--target distributional differences are a pervasive feature
of modern data analysis, arising when statistical procedures are transported
across populations, institutions, locations, or time periods. Procedures
calibrated exclusively to the source distribution may therefore be poorly
suited to the target distribution. This general phenomenon is commonly
referred to as distribution shift or dataset shift
\citep{KouwLoog2021}.  
Among the various forms of distribution shift, covariate shift is a widely studied and practically relevant setting \citep{QuinoneroCandelaetal2009,SugiyamaKawanabe2012}. It refers to the case in which the marginal distribution of the covariates changes while the conditional distribution of the response given the covariates remains invariant. This setting arises naturally in a wide range of prediction problems. 
For example, in clinical prediction, patient populations and measurement
patterns may vary across hospitals or clinical settings, creating discrepancies
between the data used to develop a prediction rule and the data on which it is deployed \citep{Finlaysonetal2021}. 
In image recognition, 
\citet{Baeketal2024} introduced ImageNet-ES to study covariate shifts induced
by environmental and camera sensor variations in the image acquisition process,
while leaving the underlying object categories unchanged.

A common approach for handling covariate shift is to reweight the source
observations by the target-to-source covariate density ratio
\citep{Shimodaira2000}. When this ratio is known, the target prediction
risk can be expressed as a weighted risk under the source distribution,
leading naturally to importance-weighted empirical risk minimization.
Because the density ratio is typically unknown, a variety of procedures
have been developed for estimating it from source and target covariates
\citep{Huangetal2007,Sugiyamaetal2008,KanamoriHidoSugiyama2009,
BickelBrucknerScheffer2009}.
Importance-weighted methods have been studied in mean and quantile
regression, as well as in kernel methods and other nonparametric learning
procedures
\citep{Shimodaira2000,SugiyamaSuzukiKanamori2012,Fengetal2024}.
Theoretical studies have investigated the relative merits of source-only
estimation and importance weighting under correctly specified and
misspecified working models
\citep{Gogolashvilietal2023,MaPathakWainwright2023,Geetal2024}.
Related work has further clarified how source--target overlap and the tail
and moment properties of the density ratio affect the performance of
importance-weighted procedures
\citep{CortesMansourMohri2010,MaPathakWainwright2023}.
More recently, density-ratio weighting and semiparametric correction have
been applied to prediction-set construction and selective inference under
covariate shift
\citep{Tibshiranietal2019,QiuDobribanTchetgen2023,
YangKuchibhotlaTchetgen2024,JinCandes2026}.

Despite its theoretical appeal, standard importance weighting can be unstable in finite samples. 
When some density ratios are large, the weighted empirical criterion may be dominated by a small number of source observations, giving these observations disproportionate influence on the resulting estimator and leading to variance inflation.
One useful way to reduce the influence of large density-ratio values is to
replace the density ratio $\delta(\boldsymbol x)$ by
$\delta(\boldsymbol x)^\lambda$, where $\lambda\in[0,1]$ is an exponent
\citep{SugiyamaKrauledatMuller2007}. Smaller values of $\lambda$ move the density-ratio values toward one and
thereby reduce the variance cost of reweighting, whereas larger values impose
a stronger correction for the discrepancy between the source and target
covariate distributions. The exponent therefore governs a finite-sample
trade-off between the extent of covariate-shift correction and the variance
inflation induced by large density-ratio values. The most effective correction
level depends on unknown features of the problem, including the severity of
the covariate shift, the tail behavior of the density ratio, the accuracy of
density-ratio estimation, the noise level, and the degree of model
misspecification. 
Common approaches to choosing this exponent are selection-based. In particular, exponent tuning under covariate shift is commonly handled by an AIC-type criterion \citep{Shimodaira2000} or by importance-weighted cross-validation \citep{SugiyamaKrauledatMuller2007}. Although these methods are useful, they ultimately select a single correction level. As a result, the final prediction rule does not account for model uncertainty across different degrees of importance-weighting correction. Moreover, selecting a single correction level
may fail to exploit the complementary strengths of candidate estimators with
different bias--variance profiles. This limitation may reduce predictive stability and accuracy. 

As an alternative to model selection, model averaging provides a natural way
to address uncertainty in the choice of correction level. Rather than
selecting a single candidate, model averaging assigns weights to candidate
estimators and forms a combined prediction rule, thereby
retaining information from multiple candidate procedures
\citep{HjortClaeskens2003}. This feature makes model averaging particularly
attractive when prediction is the primary objective and model uncertainty is
nonnegligible.
A substantial frequentist literature has developed a variety of
model-averaging methods, including criterion-based averaging
\citep{HjortClaeskens2003}, adaptive regression
by mixing \citep{Yang2001}, Mallows model averaging
\citep{Hansen2007,WanZhangZou2010}, jackknife model averaging
\citep{HansenRacine2012}, parsimonious model averaging
\citep{Zhangetal2020b}, and $K$-fold cross-validation-based averaging
\citep{ZhangLiu2023}.
There is also a substantial literature on theoretical guarantees and post-averaging inference; see, for example, \citet{Liu2015}, \citet{ZhangLiu2019}, and \citet{Yuetal2024}. 
These developments motivate averaging over candidate correction levels as a
way to account for uncertainty in the appropriate strength of importance
weighting, rather than committing to a single selected exponent. 

This paper develops a frequentist model-averaging method for prediction under covariate shift by treating the strength of importance weighting as a source of model uncertainty. The proposed adaptive importance-weighted model averaging (AIWMA) procedure combines candidate estimators indexed by a density-ratio exponent $\lambda$, using labeled source observations and unlabeled target covariates. A data-driven criterion is constructed to select the averaging weights. Under model misspecification, the resulting estimator is asymptotically optimal relative to the infeasible best convex combination of the candidates. Under correct specification, a diverging penalty makes the selected weights concentrate near the ordinary least-squares endpoint. Monte Carlo studies and a real-data application show that AIWMA often improves on competing exponent-selection rules in the settings considered. 
Our work is also related to model-averaging-based transfer-learning
methods that use auxiliary or source information to improve prediction in a
target population and mitigate negative transfer 
\citep{HuZhang2023,Zhangetal2024,QiuZhang2025}. Those studies consider settings in which labeled target responses are available and the main uncertainty lies in whether, and how, auxiliary source samples should be incorporated. The present problem is different: no labeled responses are observed from the target distribution, covariate shift is the primary discrepancy between the source and target samples, and the main uncertainty concerns the appropriate strength of importance-weighting correction. To the best of our knowledge, this is the first frequentist model-averaging framework that treats the strength of importance-weighting correction as model uncertainty for prediction under covariate shift with unlabeled target covariates. From the perspective of
covariate-shift adaptation, the proposed method goes beyond model-selection approaches that choose a single correction level. By averaging across correction levels, AIWMA accounts for uncertainty in the degree of reweighting and can balance the bias induced by insufficient correction against the variance inflation caused by large importance weights. The theory shows that AIWMA automatically adjusts the extent of importance-weighting correction according to the specification regime, without requiring prior knowledge of whether the working model is correctly specified. 

The rest of the paper is organized as follows. Section~\ref{sec:Problem-setup} introduces the covariate-shift framework and importance-weighted correction. Section~\ref{sec:Adaptive-importance-weighted-model-averaging} develops the model-averaging formulation and the weight-selection criterion. Section~\ref{sec:Theoretical-properties} presents the theoretical results. Section~\ref{sec:Simulation-studies} reports simulation evidence, and Section~\ref{sec:Real-data-analysis} presents the real-data analysis. Section~\ref{sec:Conclusion-Remarks} concludes. Technical proofs and additional numerical results are collected in the Supplementary Material.

\section{Problem setup}
\label{sec:Problem-setup}

\subsection{Covariate shift}
\label{subsec:Covariate-shift}

Let $\mathcal X \subseteq \mathbb R^d$ be a measurable covariate space, where $d$ is fixed, and let $\mathcal Y=\mathbb R$ be the response space. Suppose that we observe labeled data $\{(\boldsymbol x_i,y_i)\}_{i=1}^{n_s}$ drawn independently from an unknown source distribution $\rho_{\mathcal X\times\mathcal Y}^{S}$, where $\boldsymbol x_i\in\mathcal X$ and $y_i\in\mathcal Y$. The objective is to construct a prediction rule that performs well under a target distribution $\rho_{\mathcal X\times\mathcal Y}^{T}$, which may differ from the source distribution. Without additional structure, the target prediction problem is generally not identifiable when the source and target distributions are allowed to differ arbitrarily. 
We impose the standard covariate-shift assumption \citep{SugiyamaKrauledatMuller2007}: the conditional distribution of the response given the covariates is invariant across the source and target distributions, whereas the covariate marginal distribution may change. Specifically, the joint distributions admit the factorization 
\begin{equation}
\label{eq:covariate-shift}
\rho_{\mathcal X\times\mathcal Y}^{S}(\boldsymbol x,y)
=
\rho_{\mathcal X}^{S}(\boldsymbol x)\rho_{\mathcal Y\mid\mathcal X}(y\mid\boldsymbol x),
\qquad
\rho_{\mathcal X\times\mathcal Y}^{T}(\boldsymbol x,y)
=
\rho_{\mathcal X}^{T}(\boldsymbol x)\rho_{\mathcal Y\mid\mathcal X}(y\mid\boldsymbol x),
\end{equation}
where $\rho_{\mathcal X}^{S}$ and $\rho_{\mathcal X}^{T}$ are the source and target covariate marginal distributions, respectively, and $\rho_{\mathcal Y\mid\mathcal X}$ is the common conditional distribution. Accordingly, we assume that an observation $(\boldsymbol x, y)$ drawn from either the source or target distribution satisfies
\begin{equation*}
y=f(\boldsymbol x)+\varepsilon,
\end{equation*}
where $f:\mathcal X\to\mathbb R$ is the common conditional mean function. The error satisfies $\E(\varepsilon\mid\boldsymbol x)=0$ and
$\E(\varepsilon^2\mid\boldsymbol x)=\sigma^2$ for some constant
$0<\sigma^2<\infty$. 

We restrict attention to the finite-dimensional linear prediction functions $f_{\boldsymbol\theta}(\boldsymbol x)
=
\boldsymbol h(\boldsymbol x)^\top\boldsymbol\theta$, 
where $\boldsymbol h(\boldsymbol x)=\{h_1(\boldsymbol x),\ldots,h_p(\boldsymbol x)\}^\top$ is a prespecified vector of basis functions and $\boldsymbol\theta=(\theta_1,\ldots,\theta_p)^\top \in \mathbb{R}^{p}$ is the coefficient vector. We take the basis vector to include an intercept, $h_1(\boldsymbol x)\equiv1$, and assume that $p$ is fixed with $p<n_s$. 
The working linear model is said to be correctly specified if there exists
$\boldsymbol\theta_0\in\mathbb R^p$ such that
$f(\boldsymbol x)=\boldsymbol h(\boldsymbol x)^\top\boldsymbol\theta_0$
for $\rho_{\mathcal X}^{S}$- and $\rho_{\mathcal X}^{T}$-almost every $\boldsymbol x$. Otherwise, the working model is misspecified.
For any $\boldsymbol{\theta}\in\mathbb R^p$, the predictive performance of
$f_{\boldsymbol{\theta}}$ is measured by the target prediction risk  
\begin{equation*}
\label{eq:target-prediction-risk}
    \mathrm{PR}^{T}(\boldsymbol{\theta})
    =
    \E_{(\boldsymbol{x},y)\sim \rho_{\mathcal X\times\mathcal Y}^{T}}
    \left[
        \left\{y-\boldsymbol{h}(\boldsymbol{x})^\top\boldsymbol{\theta} \right\}^2
    \right].
\end{equation*}
Let $\boldsymbol{\theta}^{*}$ denote the optimal coefficient that minimizes the
target prediction risk. Since
$\mathrm{PR}^{T}(\boldsymbol{\theta})
=
\E_{\boldsymbol{x}\sim\rho_{\mathcal X}^{T}}
[
\{f(\boldsymbol{x})-\boldsymbol{h}(\boldsymbol{x})^\top
\boldsymbol{\theta}\}^{2}
]
+\sigma^{2}$, $\boldsymbol{\theta}^{*}$ can equivalently be characterized as
\begin{equation}
    \boldsymbol{\theta}^{*}
    =
    \argmin_{\boldsymbol{\theta}\in\mathbb R^p}
    \E_{\boldsymbol{x}\sim\rho_{\mathcal X}^{T}}
    \left[
        \left\{f(\boldsymbol{x})
        -\boldsymbol{h}(\boldsymbol{x})^\top
        \boldsymbol{\theta}\right\}^{2}
    \right].
\end{equation}
Then
$\boldsymbol{h}(\boldsymbol{x})^\top\boldsymbol{\theta}^{*}$ is the best linear approximation
to $f(\boldsymbol{x})$ under the target covariate marginal distribution.
Let $b(\boldsymbol{x}) = f(\boldsymbol{x})- \boldsymbol{h}(\boldsymbol{x})^\top\boldsymbol{\theta}^{*}$. 
We refer to $b(\boldsymbol{x})$ as the approximation error of the working linear model under the target distribution. Under correct
specification, $\boldsymbol{\theta}^{*}=\boldsymbol{\theta}_0$, and hence
$b(\boldsymbol{x})=0$ for $\rho_{\mathcal X}^{S}$- and
$\rho_{\mathcal X}^{T}$-almost every $\boldsymbol{x}$.

\subsection{Adaptive importance-weighted correction}
\label{subsec:Adaptive-importance-weighted-correction}

Consider the empirical risk based on the labeled source observations: 
\begin{equation}
\label{eq:source-empirical-risk}
Q_{n_s}^{(0)}(\boldsymbol\theta)
=
\frac{1}{n_s}
\sum_{i=1}^{n_s}
\left\{
y_i-\boldsymbol h(\boldsymbol x_i)^\top\boldsymbol\theta
\right\}^2 .
\end{equation}
For any fixed $\boldsymbol\theta$, this criterion is an unbiased estimator of the source prediction risk, 
$\mathrm{PR}^{S}(\boldsymbol\theta)
=
\E_{(\boldsymbol x,y)\sim \rho_{\mathcal X\times\mathcal Y}^{S}}
[
\{y-\boldsymbol h(\boldsymbol x)^\top\boldsymbol\theta\}^2
]$. 
In the absence of covariate shift, that is, when
$\rho_{\mathcal X\times\mathcal Y}^{S}=\rho_{\mathcal X\times\mathcal Y}^{T}$,
$Q_{n_s}^{(0)}(\boldsymbol\theta)$ is also unbiased for the target prediction risk,
$\mathrm{PR}^{T}(\boldsymbol\theta)$. Minimizing $Q_{n_s}^{(0)}(\boldsymbol{\theta})$ yields the ordinary
least-squares (OLS) estimator
\begin{equation}
\label{eq:theta-ols}
\widehat{\boldsymbol\theta}^{\mathrm{OLS}}
=
\left(\boldsymbol H^\top\boldsymbol H\right)^{-1}\boldsymbol H^\top\boldsymbol y,
\end{equation}
where $\boldsymbol y=(y_1,\ldots,y_{n_s})^\top$ and
$\boldsymbol H=\{\boldsymbol h(\boldsymbol x_1),\ldots,\boldsymbol h(\boldsymbol x_{n_s})\}^\top
\in\mathbb R^{n_s\times p}$.

Under covariate shift,
$\rho_{\mathcal X\times\mathcal Y}^{S}
\ne
\rho_{\mathcal X\times\mathcal Y}^{T}$.
Consequently, the unweighted criterion
$Q_{n_s}^{(0)}(\boldsymbol\theta)$ is generally not unbiased for
$\mathrm{PR}^{T}(\boldsymbol\theta)$, and the resulting OLS estimator need not minimize the target prediction
risk. To correct this discrepancy, a standard strategy is importance weighting \citep{Shimodaira2000}. Suppose that the target covariate marginal distribution $\rho_{\mathcal X}^{T}$ is absolutely continuous with respect to the source covariate marginal distribution $\rho_{\mathcal X}^{S}$. Define the target-to-source covariate density ratio, also referred to as the
importance weight, as the Radon--Nikodym derivative of $\rho_{\mathcal X}^{T}$ with respect to $\rho_{\mathcal X}^{S}$: 
\begin{equation} 
\label{eq:Density-ratio} 
    \delta(\boldsymbol x) = \frac{\mathrm{d}\rho_{\mathcal X}^{T}}{\mathrm{d}\rho_{\mathcal X}^{S}}(\boldsymbol x). 
\end{equation}
If $\rho_{\mathcal X}^{S}$ and $\rho_{\mathcal X}^{T}$ have probability
density functions with respect to a common reference measure, then
$\delta(\boldsymbol x)$ reduces to the ratio of the target covariate density to
the source covariate density. 
With this density ratio and under the covariate-shift assumption
\eqref{eq:covariate-shift}, any integrable function
$g:\mathcal X\times\mathcal Y\to\mathbb R$ satisfies the
importance-weighting identity
\begin{equation}
\label{eq:Importance-weighting-identity}
\E_{(\boldsymbol x,y)\sim\rho_{\mathcal X\times\mathcal Y}^{T}}
\left\{g(\boldsymbol x,y)\right\}
=
\E_{(\boldsymbol x,y)\sim\rho_{\mathcal X\times\mathcal Y}^{S}}
\left\{\delta(\boldsymbol x)g(\boldsymbol x,y)\right\}.
\end{equation}
Applying
\eqref{eq:Importance-weighting-identity} with
$g(\boldsymbol x,y)=\{y-\boldsymbol h(\boldsymbol x)^\top\boldsymbol\theta\}^{2}$ motivates the standard importance-weighted empirical criterion
\begin{equation}
\label{eq:Importance-weighted-criterion}
Q_{n_s}^{(1)}(\boldsymbol\theta)
=
\frac{1}{n_s}
\sum_{i=1}^{n_s}
\delta(\boldsymbol x_i)
\left\{
y_i-\boldsymbol h(\boldsymbol x_i)^\top\boldsymbol\theta
\right\}^2 .
\end{equation}
Minimizing $Q_{n_s}^{(1)}(\boldsymbol\theta)$ yields the importance-weighted least-squares (IWLS) estimator 
\begin{equation*}
\label{eq:theta-hat-IWLS}
\widehat{\boldsymbol\theta}^{\mathrm{IWLS}}
=
(\boldsymbol H^\top\boldsymbol\Delta\boldsymbol H)^{-1}
\boldsymbol H^\top\boldsymbol\Delta\boldsymbol y,
\end{equation*}
where $
\boldsymbol\Delta
=
\operatorname{diag}\{\delta(\boldsymbol x_1),\ldots,\delta(\boldsymbol x_{n_s})\}$.

When the density ratio is known, $Q_{n_s}^{(1)}(\boldsymbol\theta)$ is an
unbiased estimator of $\mathrm{PR}^{T}(\boldsymbol\theta)$ for every fixed
$\boldsymbol\theta$. In finite samples, however, a small number of source
observations with large density-ratio values may dominate the weighted
criterion. Their disproportionate influence can substantially inflate the
variance of the resulting IWLS estimator and thereby impair its predictive
performance under the target distribution. 
Following \citet{SugiyamaKrauledatMuller2007}, we consider an adaptive importance-weighting scheme that replaces
$\delta(\boldsymbol x)$ by $\delta(\boldsymbol x)^\lambda$ with
$\lambda\in[0,1]$, thereby shrinking extreme density-ratio values toward one. For a given $\lambda\in[0,1]$, define 
\begin{equation}
\label{eq:Q-n-lambda-theta}
Q_{n_s}^{(\lambda)}(\boldsymbol\theta)
=
\frac{1}{n_s}
\sum_{i=1}^{n_s}
\delta(\boldsymbol x_i)^\lambda
\left\{
y_i-\boldsymbol h(\boldsymbol x_i)^\top\boldsymbol\theta
\right\}^2 ,
\end{equation}
with the convention that $\delta(\boldsymbol x_i)^0=1$. 
Minimizing \eqref{eq:Q-n-lambda-theta} yields the adaptive importance-weighted least-squares (AIWLS) estimator 
\begin{equation}
\label{eq:theta-hat-AIWLS}
\widehat{\boldsymbol\theta}^{\mathrm{AIWLS}}(\lambda)
=
(\boldsymbol H^\top\boldsymbol\Delta^\lambda\boldsymbol H)^{-1}
\boldsymbol H^\top\boldsymbol\Delta^\lambda\boldsymbol y,
\end{equation}
where $\boldsymbol\Delta^\lambda
=
\operatorname{diag}\{\delta(\boldsymbol x_1)^\lambda,\ldots,\delta(\boldsymbol x_{n_s})^\lambda\}$. The endpoints recover two standard estimators: $\lambda=0$ gives OLS, whereas
$\lambda=1$ gives IWLS. Thus, the exponent $\lambda$ governs the strength of the covariate-shift correction. Smaller values of $\lambda$ move the estimator toward the unweighted least-squares fit and are typically more stable, whereas larger values of $\lambda$ apply stronger importance-weighting correction but can incur greater variance when some source observations receive large density-ratio values. 
The Supplementary Material provides an illustrative example showing that the
best-performing correction exponent need not coincide with either endpoint
$\lambda=0$ or $\lambda=1$, but may lie in the interior of $[0,1]$. 
The next subsection formalizes this trade-off.

\subsection{Asymptotic distribution of AIWLS}
\label{subsec:Asymptotic-distribution-of-AIWLS}

We derive the asymptotic distribution of 
$\widehat{\boldsymbol\theta}^{\mathrm{AIWLS}}(\lambda)$ for a fixed exponent
$\lambda\in[0,1]$. 
This result clarifies how the choice of $\lambda$ reflects
the trade-off between finite-sample stability and correction toward the target
covariate distribution. To isolate the effect of the correction exponent, we
first consider the oracle case in which the density-ratio values at the source
covariates are known. For a fixed $\lambda\in[0,1]$, define
\begin{equation}
\boldsymbol\theta^{*}(\lambda)
=
\argmin_{\boldsymbol\theta\in\mathbb R^p}
\E_{\boldsymbol x\sim\rho_{\mathcal X}^{S}}
\left[
\delta(\boldsymbol x)^\lambda
\left\{
f(\boldsymbol x)
-
\boldsymbol h(\boldsymbol x)^\top\boldsymbol\theta
\right\}^2
\right].
\label{eq:theta-star-lambda}
\end{equation}
Let
$b_\lambda(\boldsymbol x)
=
f(\boldsymbol x)
-
\boldsymbol h(\boldsymbol x)^\top
\boldsymbol\theta^{*}(\lambda)$ denote the corresponding approximation error, 
and define $\boldsymbol\Sigma_b(\lambda)
=
\E_{\boldsymbol x\sim\rho_{\mathcal X}^{S}}
[
\delta(\boldsymbol x)^{2\lambda}
b_\lambda(\boldsymbol x)^2
\boldsymbol h(\boldsymbol x)\boldsymbol h(\boldsymbol x)^\top
]$. We further define $ \boldsymbol\Sigma(\lambda)
=
\E_{\boldsymbol x\sim\rho_{\mathcal X}^{S}}
[
\delta(\boldsymbol x)^\lambda
\boldsymbol h(\boldsymbol x)\boldsymbol h(\boldsymbol x)^\top
]$ and $\boldsymbol\Sigma_\varepsilon(\lambda)
=
\sigma^2
\E_{\boldsymbol x\sim\rho_{\mathcal X}^{S}}
[
\delta(\boldsymbol x)^{2\lambda}
\boldsymbol h(\boldsymbol x)\boldsymbol h(\boldsymbol x)^\top
]$. 
For a vector $\boldsymbol v$, let $\|\boldsymbol v\|_2$ denote its
Euclidean norm, and, for a symmetric matrix $\boldsymbol A$, let
$\lambda_{\min}(\boldsymbol A)$ denote its smallest eigenvalue.
Throughout, $\xrightarrow{p}$ and $\xrightarrow{d}$ denote convergence
in probability and convergence in distribution, respectively.

\begin{theorem}
\label{thm:AIWLS-asymptotic-normality}
Suppose that the density-ratio values
$\{\delta(\boldsymbol x_i)\}_{i=1}^{n_s}$ are known. Fix any
$\lambda\in[0,1]$. Assume that there exists a constant $c_\lambda>0$
such that
$\lambda_{\min}\{\boldsymbol\Sigma(\lambda)\}\ge c_\lambda$.
In addition, suppose that the moment conditions
$\E_{\boldsymbol x\sim\rho_{\mathcal X}^{S}}
[
\{1+\delta(\boldsymbol x)^{2\lambda}\}
\|\boldsymbol h(\boldsymbol x)\|_2^2
]
<\infty$
and
$\E_{\boldsymbol x\sim\rho_{\mathcal X}^{S}}
[
\delta(\boldsymbol x)^{2\lambda}
b_\lambda(\boldsymbol x)^2
\|\boldsymbol h(\boldsymbol x)\|_2^2
]
<\infty$
hold. Then, as $n_s\to\infty$, 
\begin{equation}
\label{eq:AIWLS-asymptotic-normality}
\sqrt{n_s}
\left\{
\widehat{\boldsymbol\theta}^{\mathrm{AIWLS}}(\lambda)
-
\boldsymbol\theta^{*}(\lambda)
\right\}
\xrightarrow{d}
N\{\boldsymbol 0,\boldsymbol\Omega(\lambda)\},
\end{equation}
where $\boldsymbol\Omega(\lambda)
=
\boldsymbol\Sigma(\lambda)^{-1}
\{
\boldsymbol\Sigma_b(\lambda)
+
\boldsymbol\Sigma_\varepsilon(\lambda)
\}
\boldsymbol\Sigma(\lambda)^{-1}$.
\end{theorem}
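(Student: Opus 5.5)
The plan is the standard M-estimation argument for a weighted least-squares estimator with a misspecified mean. Since $\boldsymbol\Sigma(\lambda)$ is positive definite, the population minimizer in \eqref{eq:theta-star-lambda} is unique. It is characterized by the normal equation
\begin{equation*}
\E_{\boldsymbol x\sim\rho_{\mathcal X}^{S}}\left\{\delta(\boldsymbol x)^\lambda \boldsymbol h(\boldsymbol x) b_\lambda(\boldsymbol x)\right\}=\boldsymbol 0 .
\end{equation*}
I will write $y_i=\boldsymbol h(\boldsymbol x_i)^\top\boldsymbol\theta^{*}(\lambda)+b_\lambda(\boldsymbol x_i)+\varepsilon_i$ and substitute into \eqref{eq:theta-hat-AIWLS}. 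This gives the exact representation
\begin{equation*}
\sqrt{n_s}\left\{\widehat{\boldsymbol\theta}^{\mathrm{AIWLS}}(\lambda)-\boldsymbol\theta^{*}(\lambda)\right\}
=\left(\frac{1}{n_s}\boldsymbol H^\top\boldsymbol\Delta^\lambda\boldsymbol H\right)^{-1}
\frac{1}{\sqrt{n_s}}\sum_{i=1}^{n_s}\delta(\boldsymbol x_i)^\lambda\boldsymbol h(\boldsymbol x_i)\left\{b_\lambda(\boldsymbol x_i)+\varepsilon_i\right\}.
\end{equation*}
This holds on the event that the Gram matrix is invertible, and that event has probability tending to one by the next step.

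First I will handle the Gram matrix. Note that $\delta^\lambda\le 1+\delta^{2\lambda}$, so the moment condition $\E[\{1+\delta^{2\lambda}\}\|\boldsymbol h\|_2^2]<\infty$ makes $\delta^\lambda\boldsymbol h\boldsymbol h^\top$ integrable. The weak law of large numbers then gives $n_s^{-1}\boldsymbol H^\top\boldsymbol\Delta^\lambda\boldsymbol H\xrightarrow{p}\boldsymbol\Sigma(\lambda)$. Because $\lambda_{\min}\{\boldsymbol\Sigma(\lambda)\}\ge c_\lambda>0$, the continuous mapping theorem gives convergence of the inverse to $\boldsymbol\Sigma(\lambda)^{-1}$, with invertibility holding with probability tending to one.

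Second I will handle the score. The summands $\boldsymbol u_i=\delta(\boldsymbol x_i)^\lambda\boldsymbol h(\boldsymbol x_i)\{b_\lambda(\boldsymbol x_i)+\varepsilon_i\}$ are i.i.d. They have mean zero: the $b_\lambda$ part vanishes by the normal equation, and the $\varepsilon$ part vanishes because $\E(\varepsilon\mid\boldsymbol x)=0$. The cross term in the covariance also vanishes by $\E(\varepsilon\mid\boldsymbol x)=0$. The noise contribution is $\sigma^2\E[\delta^{2\lambda}\boldsymbol h\boldsymbol h^\top]=\boldsymbol\Sigma_\varepsilon(\lambda)$, using $\E(\varepsilon^2\mid\boldsymbol x)=\sigma^2$, and the approximation contribution is $\boldsymbol\Sigma_b(\lambda)$. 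Both are finite by the two stated moment conditions. The multivariate Lindeberg--L\'evy CLT, applied via Cram\'er--Wold, then gives $n_s^{-1/2}\sum_i\boldsymbol u_i\xrightarrow{d}N\{\boldsymbol 0,\boldsymbol\Sigma_b(\lambda)+\boldsymbol\Sigma_\varepsilon(\lambda)\}$. Slutsky's theorem combines the two limits and yields \eqref{eq:AIWLS-asymptotic-normality} with the sandwich $\boldsymbol\Omega(\lambda)$.

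The main delicate point is integrability, since the theorem controls moments only through the stated expectations. I need to check three things.
\begin{itemize}
\item Integrability of $\delta^\lambda\boldsymbol h\boldsymbol h^\top$ and of $\delta^{2\lambda}\boldsymbol h\boldsymbol h^\top$: both follow directly from the first moment condition, as noted above.
\item Existence of $\boldsymbol\theta^{*}(\lambda)$ and validity of the normal equation: this requires $\E[\delta^\lambda f^2]<\infty$, or at least $\E[\delta^\lambda b_\lambda^2]<\infty$. I will treat this as implicit in the definition \eqref{eq:theta-star-lambda} being finite. If needed, I will instead derive the first-order condition from the minimization directly, using the positive definiteness of $\boldsymbol\Sigma(\lambda)$.
\item Mean-zero property of the score: this needs $\E\|\delta^\lambda\boldsymbol h b_\lambda\|_2<\infty$, which follows from the second moment condition via Jensen's inequality.
\end{itemize}
With these checks in place, the remaining steps are routine.
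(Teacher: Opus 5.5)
Your proposal is correct and matches the paper's proof step for step. Both write the estimator exactly as the inverse of the weighted Gram matrix $n_s^{-1}\boldsymbol H^\top\boldsymbol\Delta^\lambda\boldsymbol H$ times the normalized score. Both use the bound $\delta^\lambda\le 1+\delta^{2\lambda}$ for the weak law, the population normal equation (with a Cauchy--Schwarz/Jensen integrability check) for a mean-zero score, the multivariate CLT with covariance $\boldsymbol\Sigma_b(\lambda)+\boldsymbol\Sigma_\varepsilon(\lambda)$, and Slutsky's theorem, and the paper likewise takes existence of $\boldsymbol\theta^{*}(\lambda)$ as implicit in its definition.
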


At $\lambda=1$, the importance-weighting identity in
\eqref{eq:Importance-weighting-identity}, together with the definition of
$\boldsymbol\theta^*(\lambda)$, yields
$\boldsymbol\theta^*(1)=\boldsymbol\theta^*$. Hence, the IWLS estimator is consistent for $\boldsymbol{\theta}^{*}$, the coefficient vector that minimizes the target prediction risk. 
If the working linear model is correctly specified, the objective in
\eqref{eq:theta-star-lambda} is minimized at the common coefficient
$\boldsymbol\theta_0$ for every $\lambda\in[0,1]$. Consequently,
$\boldsymbol\theta^*(\lambda)=\boldsymbol\theta_0=\boldsymbol\theta^*$ for all $\lambda\in[0,1]$. 
Under model misspecification, however, choosing $\lambda<1$ may move
$\boldsymbol\theta^*(\lambda)$ away from $\boldsymbol\theta^*$.  
Larger values of $\lambda$ provide a stronger correction
toward the target covariate distribution, with exact correction at
$\lambda=1$. However, the matrices
$\boldsymbol\Sigma_b(\lambda)$ and
$\boldsymbol\Sigma_\varepsilon(\lambda)$ in the asymptotic variance
$\boldsymbol\Omega(\lambda)$ involve second-moment terms weighted by
$\delta(\boldsymbol x)^{2\lambda}$. 
Increasing $\lambda$ makes the estimator more sensitive to observations with large density ratios and may increase its variance. 
The exponent $\lambda$
accordingly governs a trade-off between bias correction under
misspecification and variance control.
Since the
value of $\lambda$ that minimizes the target prediction risk  depends on unknown features
of the problem, including the severity of covariate shift, the magnitude of the
density-ratio values, the noise level, and the degree of model misspecification,
selecting a single exponent may lead to unstable finite-sample performance and
fail to exploit the complementary strengths of different correction levels. 
The next section addresses this uncertainty by averaging over a collection of AIWLS estimators.

\section{Adaptive importance-weighted model averaging}
\label{sec:Adaptive-importance-weighted-model-averaging}


In practice, the density ratio is typically unknown and must be estimated.
Suppose that, in addition to the labeled source sample
$\{(\boldsymbol x_i,y_i)\}_{i=1}^{n_s}$, we observe an unlabeled target
covariate sample $\{\boldsymbol x_j'\}_{j=1}^{n_t}$ drawn independently from
$\rho_{\mathcal X}^{T}$ and independently of the source sample.
Let $n=\min\{n_s,n_t\}$ denote the smaller of the two sample sizes.
Throughout the asymptotic analysis, all limits are taken as $n\to\infty$. 
Based on the two covariate samples, we estimate the density ratio at each observed source covariate using a standard density-ratio estimation method, such as the Kullback--Leibler importance estimation procedure (KLIEP)
\citep{Sugiyamaetal2008}, and denote the resulting nonnegative estimates by $\widehat\delta(\boldsymbol x_i)$, $i=1,\ldots,n_s$.

Let
$\Lambda_n=\{\lambda_1,\lambda_2,\ldots,\lambda_{M_n}\}\subset[0,1]$
be a prespecified grid of candidate exponents, with $M_n$ allowed to diverge so that the grid becomes increasingly dense as the sample size grows. 
Without loss of generality, we assume that the grid contains
both endpoints and index its elements so that
$0=\lambda_1<\lambda_2<\cdots<\lambda_{M_n}=1$. Hence, the candidate family includes ordinary least squares and standard
importance-weighted least squares as its two endpoint candidates. For each $m = 1, \ldots, M_{n}$, define the $m$th AIWLS candidate estimator by 
\begin{equation}
\label{eq:theta-hat-m}
\widehat{\boldsymbol\theta}_m
=
\left(
\boldsymbol H^\top
\widehat{\boldsymbol\Delta}_m
\boldsymbol H
\right)^{-1}
\boldsymbol H^\top
\widehat{\boldsymbol\Delta}_m
\boldsymbol y
\equiv
\boldsymbol L_m\boldsymbol y,
\end{equation}
where
$\widehat{\boldsymbol\Delta}_m
=
\operatorname{diag}\{
\widehat\delta(\boldsymbol x_1)^{\lambda_m},
\ldots,
\widehat\delta(\boldsymbol x_{n_s})^{\lambda_m}
\}$. 
The candidate estimators therefore differ in the strength
of the importance-weighting correction determined by $\lambda_m$. 
Let $\boldsymbol w=(w_1,\ldots,w_{M_{n}})^\top$ be an averaging weight vector in $\mathcal{W}_{n}
=
\{
\boldsymbol w\in[0,1]^{M_{n}}:
\sum_{m=1}^{M_{n}} w_m=1
\}$. 
Define the adaptive importance-weighted model averaging (AIWMA) estimator as
\begin{equation}
\label{eq:theta-hat-w}
\widehat{\boldsymbol\theta}(\boldsymbol w)
=
\sum_{m=1}^{M_{n}}
w_m\widehat{\boldsymbol\theta}_m
=
\boldsymbol L(\boldsymbol w)\boldsymbol y,
\end{equation}
where $\boldsymbol L(\boldsymbol w)
=
\sum_{m=1}^{M_{n}}
w_m\boldsymbol L_m$. 


We next derive a criterion for selecting the averaging weights. For a
given weight vector $\boldsymbol w$, the target prediction risk of
$\widehat{\boldsymbol\theta}(\boldsymbol w)$, evaluated at a 
target observation
$(\boldsymbol z,u)\sim\rho_{\mathcal X\times\mathcal Y}^{T}$, is
\begin{equation*}
\mathrm{PR}^{T}\{\widehat{\boldsymbol\theta}(\boldsymbol w)\}
=
\E_{(\boldsymbol z,u)\sim\rho_{\mathcal X\times\mathcal Y}^{T}}
\left[
\left\{
u-\boldsymbol h(\boldsymbol z)^\top
\widehat{\boldsymbol\theta}(\boldsymbol w)
\right\}^{2}
\right].
\end{equation*}
By the definition of $\boldsymbol\theta^*$, the approximation error
$b(\boldsymbol z)
=f(\boldsymbol z)-\boldsymbol h(\boldsymbol z)^\top\boldsymbol\theta^*$
is orthogonal to the working basis under the target covariate
distribution. It follows that
\begin{align}
\label{eq:target-risk-expansion-w}
\mathrm{PR}^{T}\{\widehat{\boldsymbol\theta}(\boldsymbol w)\}
={}&
\left\{
\widehat{\boldsymbol\theta}(\boldsymbol w)-\boldsymbol\theta^*
\right\}^\top
\boldsymbol\Sigma^{T}
\left\{
\widehat{\boldsymbol\theta}(\boldsymbol w)-\boldsymbol\theta^*
\right\}
+
\E_{\boldsymbol z\sim\rho_{\mathcal X}^{T}}
\{b(\boldsymbol z)^2\}
+
\sigma^2
\nonumber\\
={}&
\widehat{\boldsymbol\theta}(\boldsymbol w)^\top
\boldsymbol\Sigma^{T}
\widehat{\boldsymbol\theta}(\boldsymbol w)
-
2\widehat{\boldsymbol\theta}(\boldsymbol w)^\top
\boldsymbol\Sigma^{T}\boldsymbol\theta^*
+
r_0,
\end{align}
where
$\boldsymbol\Sigma^{T}
=
\E_{\boldsymbol z\sim\rho_{\mathcal X}^{T}}
\{\boldsymbol h(\boldsymbol z)\boldsymbol h(\boldsymbol z)^\top\}$
and
$r_0
=
{\boldsymbol\theta^*}^\top
\boldsymbol\Sigma^{T}\boldsymbol\theta^*
+
\E_{\boldsymbol z\sim\rho_{\mathcal X}^{T}}\{b(\boldsymbol z)^2\}
+
\sigma^2$
does not depend on $\boldsymbol w$. Therefore, constructing a criterion for the averaging weights requires
estimating the two $\boldsymbol w$-dependent terms
$\widehat{\boldsymbol\theta}(\boldsymbol w)^\top
\boldsymbol\Sigma^{T}
\widehat{\boldsymbol\theta}(\boldsymbol w)$ and
$-2\widehat{\boldsymbol\theta}(\boldsymbol w)^\top
\boldsymbol\Sigma^{T}\boldsymbol\theta^*$. 
Let
$\mathcal F
=
\sigma(
\boldsymbol x_1,\ldots,\boldsymbol x_{n_s},
\boldsymbol x_1',\ldots,\boldsymbol x_{n_t}'
)$
denote the $\sigma$-field generated by the observed source covariates
and the unlabeled target covariates. Suppose temporarily that there
exists a linear estimator
$\widehat{\boldsymbol\theta}_{u}
=\boldsymbol L_{u}\boldsymbol y$
that is conditionally unbiased for $\boldsymbol\theta^*$, namely,
$\E(\widehat{\boldsymbol\theta}_{u}\mid\mathcal F)
=\boldsymbol\theta^*$, where the conditional expectation is taken with
respect to the source errors given $\mathcal F$. Then
\begin{equation}
\label{eq:cross-term-unbiased-identity}
\E
\left[
\widehat{\boldsymbol\theta}(\boldsymbol w)^\top
\boldsymbol\Sigma^{T}
\widehat{\boldsymbol\theta}_{u}
-
\sigma^2
\operatorname{tr}
\left\{
\boldsymbol\Sigma^{T}
\boldsymbol L(\boldsymbol w)
\boldsymbol L_{u}^\top
\right\}
\mid \mathcal F
\right]
=
\E
\left[
\widehat{\boldsymbol\theta}(\boldsymbol w)^\top
\boldsymbol\Sigma^{T}
\boldsymbol\theta^*
\mid \mathcal F
\right].
\end{equation}
Motivated by this identity, we define the ideal weight-selection criterion
\begin{equation}
\label{eq:C-star-w}
C^*(\boldsymbol w)
=
\left\{
\widehat{\boldsymbol\theta}(\boldsymbol w)
-
\widehat{\boldsymbol\theta}_{u}
\right\}^\top
\boldsymbol\Sigma^{T}
\left\{
\widehat{\boldsymbol\theta}(\boldsymbol w)
-
\widehat{\boldsymbol\theta}_{u}
\right\}
+
2\sigma^2
\operatorname{tr}
\left\{
\boldsymbol\Sigma^{T}
\boldsymbol L(\boldsymbol w)
\boldsymbol L_{u}^\top
\right\}.
\end{equation}
Let
$r_u
=
\E(
\widehat{\boldsymbol\theta}_{u}^\top
\boldsymbol\Sigma^{T}
\widehat{\boldsymbol\theta}_{u}
\mid\mathcal F)
-r_0$, which is independent of $\boldsymbol w$. Combining
\eqref{eq:target-risk-expansion-w} with
\eqref{eq:cross-term-unbiased-identity} yields
\begin{equation*}
\E\left[
C^*(\boldsymbol w)-r_u
\mid\mathcal F
\right]
=
\E\left[
\mathrm{PR}^{T}
\{\widehat{\boldsymbol\theta}(\boldsymbol w)\}
\mid\mathcal F
\right].
\end{equation*}  

However, a conditionally unbiased estimator of $\boldsymbol\theta^*$ is generally unavailable in practice. We therefore replace
$\widehat{\boldsymbol\theta}_{u}$ in $C^{*}(\boldsymbol{w})$ with the IWLS candidate $\widehat{\boldsymbol\theta}_{M_n}$ corresponding to $\lambda_{M_n}=1$. This substitution is motivated by the fact that
$\widehat{\boldsymbol\theta}_{M_n}$ is asymptotically conditionally
unbiased for $\boldsymbol\theta^*$ under regularity conditions. We further replace the unknown quantities $\boldsymbol\Sigma^{T}$ and $\sigma^2$ with their estimators
$\widehat{\boldsymbol\Sigma}^{T}
=n_t^{-1}\sum_{j=1}^{n_t}
\boldsymbol h(\boldsymbol x_j')
\boldsymbol h(\boldsymbol x_j')^\top$
and $\widehat\sigma^2
=\|\boldsymbol y-\boldsymbol H
\widehat{\boldsymbol\theta}_1\|_2^2/(n_s-p)$, respectively, where $\widehat{\boldsymbol\theta}_1$ is the OLS candidate corresponding to $\lambda_1=0$. 
The factor $2$ in $C^*(\boldsymbol w)$ is dictated by the exact
conditional unbiasedness identity. For the feasible criterion, we replace
this canonical factor by a positive deterministic sequence $\phi_n$, which
controls the multiplicative calibration of the variance-correction term.
This formulation retains the unbiasedness-motivated choice $\phi_n=2$ and
also permits diverging choices used to establish the subsequent
weight-concentration result.
The resulting weight-selection
criterion is
\begin{equation}
\label{eq:C-n-w}
C(\boldsymbol w)
=
\left\{
\widehat{\boldsymbol\theta}(\boldsymbol w)
-
\widehat{\boldsymbol\theta}_{M_n}
\right\}^{\top}
\widehat{\boldsymbol\Sigma}^{T}
\left\{
\widehat{\boldsymbol\theta}(\boldsymbol w)
-
\widehat{\boldsymbol\theta}_{M_n}
\right\}
+
\phi_n\widehat\sigma^2
\operatorname{tr}
\left\{
\widehat{\boldsymbol\Sigma}^{T}
\boldsymbol L(\boldsymbol w)
\boldsymbol L_{M_n}^{\top}
\right\}.
\end{equation}

The criterion $C(\boldsymbol w)$ can be written in the quadratic form
\begin{equation*}
\label{eq:C-n-quadratic-form}
C(\boldsymbol w)
=
\boldsymbol w^\top\boldsymbol\Psi\boldsymbol w
+
\phi_n\widehat\sigma^2
\boldsymbol w^\top\boldsymbol\tau,
\end{equation*}
where $\boldsymbol\Psi$ is the $M_{n} \times M_{n}$ matrix with entries
$\Psi_{ij}
=
(\widehat{\boldsymbol\theta}_i-\widehat{\boldsymbol\theta}_{M_{n}})^\top
\widehat{\boldsymbol\Sigma}^{T}
(\widehat{\boldsymbol\theta}_j-\widehat{\boldsymbol\theta}_{M_{n}})$, and
$\boldsymbol\tau=(\tau_1,\ldots,\tau_{M_{n}})^\top$ has entries
$\tau_m
=
\operatorname{tr}
\{
\widehat{\boldsymbol\Sigma}^{T}
\boldsymbol L_m
\boldsymbol L_{M_{n}}^\top
\}$.
Hence, selecting the averaging weights reduces to a standard quadratic
programming problem over the simplex, 
$\widehat{\boldsymbol w}
\in
\argmin_{\boldsymbol w\in\mathcal{W}_{n}}
C(\boldsymbol w)$. 
The resulting
AIWMA prediction is  $\widehat f_{\mathrm{AIWMA}}(\boldsymbol x)
=
\boldsymbol h(\boldsymbol x)^\top
\widehat{\boldsymbol\theta}(\widehat{\boldsymbol w})
=
\sum_{m=1}^{M_{n}}
\widehat w_m
\boldsymbol h(\boldsymbol x)^\top
\widehat{\boldsymbol\theta}_m$. 

\section{Theoretical properties of AIWMA}
\label{sec:Theoretical-properties}

In this section, we study the theoretical properties of AIWMA under both
model misspecification and correct specification. Recall that
$\mathcal F
=
\sigma(
\boldsymbol x_1,\ldots,\boldsymbol x_{n_s},
\boldsymbol x_1',\ldots,\boldsymbol x_{n_t}'
)$
is the $\sigma$-field generated by the observed source and target covariates.
Let $(\boldsymbol z,u)$ denote an independent observation drawn from
$\rho_{\mathcal X\times\mathcal Y}^{T}$.  For $\boldsymbol w\in\mathcal{W}_{n}$, define the
target prediction error (TPE) by
\begin{equation}
\label{eq:TPE-w}
\mathrm{TPE}\left(\boldsymbol{w}\right)
=
\E\left[
\left\{
u
-
\boldsymbol h(\boldsymbol z)^\top
\widehat{\boldsymbol\theta}\left(\boldsymbol{w}\right)
\right\}^{2}
\mid
\mathcal F
\right],
\end{equation}
where the conditional expectation is taken over the source errors and this independent target observation, given $\mathcal F$.
Using $u=f(\boldsymbol z)+\varepsilon$, together with
$\E(\varepsilon\mid\boldsymbol z)=0$ and
$\E(\varepsilon^2\mid\boldsymbol z)=\sigma^2$, we obtain $\mathrm{TPE}(\boldsymbol{w})
=
\E[
\{
f(\boldsymbol z)
-
\boldsymbol h(\boldsymbol z)^\top
\widehat{\boldsymbol\theta}(\boldsymbol{w})
\}^{2}
\mid
\mathcal F
]
+
\sigma^{2}$. 
Since $\sigma^2$ does not depend on $\boldsymbol w$, minimizing
$\mathrm{TPE}(\boldsymbol w)$ is equivalent to minimizing the target excess
prediction error 
(TEPE): 
\begin{equation}
\label{eq:TEPE-w}
\mathrm{TEPE}\left(\boldsymbol{w}\right)
=
\E\left[
\left\{
f(\boldsymbol z)
-
\boldsymbol h(\boldsymbol z)^\top
\widehat{\boldsymbol\theta}\left(\boldsymbol{w}\right)
\right\}^{2}
\mid
\mathcal F
\right].
\end{equation}
To formulate the results under model misspecification, write $\boldsymbol\theta_m^*
\equiv
\boldsymbol\theta^*(\lambda_m)$,
$m=1,\ldots,M_n$, where
$\boldsymbol\theta^*(\lambda_m)$ is defined in
\eqref{eq:theta-star-lambda}. For $\boldsymbol w\in\mathcal{W}_{n}$, let
$\boldsymbol\theta^*(\boldsymbol w)=\sum_{m=1}^{M_{n}}w_m\boldsymbol\theta_m^*$ and
define
$\mathrm{TEPE}^{*}(\boldsymbol w)
=
\E_{\boldsymbol z\sim\rho_{\mathcal X}^{T}}
[\{
f(\boldsymbol z)-
\boldsymbol h(\boldsymbol z)^\top
\boldsymbol\theta^*(\boldsymbol w)
\}^{2}
]$.
Let
$\xi_n=\inf_{\boldsymbol w\in\mathcal{W}_{n}}\mathrm{TEPE}^{*}(\boldsymbol w)$
denote the minimum target excess prediction error in the class of model
averaging estimators associated with $\boldsymbol{\theta}_{m}^{*}$. 
Under correct specification,
$\boldsymbol\theta_m^*=\boldsymbol\theta_0$ for every
$m=1,\ldots,M_n$. Hence,
$\boldsymbol\theta^*(\boldsymbol w)=\boldsymbol\theta_0$ and
$\mathrm{TEPE}^*(\boldsymbol w)=0$ for every
$\boldsymbol w\in\mathcal W_n$, so that $\xi_n=0$. 
Finally, define the uniform estimation error of the exponentiated density ratios over the candidate set and the observed source covariates by
$\kappa_n
=
\max_{1\le m\le M_n}
\max_{1\le i\le n_s}
|
\widehat\delta(\boldsymbol x_i)^{\lambda_m}
-
\delta(\boldsymbol x_i)^{\lambda_m}
|$. 

\subsection{Asymptotic unbiasedness of the AIWMA criterion}
\label{subsec:Asymptotic-unbiasedness-of-the-AIWMA-criterion}

We first establish the asymptotic unbiasedness of the proposed weight-selection criterion. We impose the following regularity conditions.

\begin{assumption}
\label{assump:1}
There exists a constant $c_1>0$ such that
$\inf_{0\le\lambda\le1}\lambda_{\min}\{\boldsymbol\Sigma(\lambda)\}\ge c_1$.
\end{assumption}

\begin{assumption}
\label{assump:2}
There exists a constant $0<c_2<\infty$ such that
$\E(\varepsilon^4\mid\boldsymbol x)\le c_2$ almost surely.
\end{assumption}

\begin{assumption}
\label{assump:3}
The approximation errors and basis functions satisfy
\begin{equation*}
\sup_{0\le\lambda\le1}
\E_{\boldsymbol x\sim\rho_{\mathcal X}^{S}}     \left[
b_\lambda(\boldsymbol x)^2
\left\|\boldsymbol h(\boldsymbol x) \right\|_2^2
\right]<\infty,
\qquad
\E_{\boldsymbol x\sim\rho_{\mathcal X}^{S}}
\left[
\left\|\boldsymbol h(\boldsymbol x) \right\|_2^4
\right]<\infty .
\end{equation*}
\end{assumption}

\begin{assumption}
\label{assump:4}
There exists a constant $0<\overline\delta<\infty$ such that
$0\le\delta(\boldsymbol x)\le\overline\delta$ for
$\rho_{\mathcal X}^{S}$-almost every $\boldsymbol x$.
\end{assumption}

\begin{assumption}
\label{assump:5}
$M_{n}^2/(n\xi_n^2)=o(1)$, and $\kappa_{n}=o_p(\xi_n)$.
\end{assumption}

Assumption~\ref{assump:1} requires $\boldsymbol\Sigma(\lambda)$ to be uniformly
nonsingular over $\lambda\in[0,1]$. Assumptions~\ref{assump:2}--\ref{assump:3}
impose moment conditions on the errors, basis functions, and approximation
errors.
Assumption~\ref{assump:4} requires the density ratio to be bounded above. This
condition is stronger than the absolute continuity condition needed to define
$\delta(\boldsymbol x)$, but it is commonly used in theoretical studies of
importance weighting and covariate-shift correction; see, for example,
\citet{CortesMansourMohri2010} and  \citet{Reddi2015}. 
Assumption~\ref{assump:5} imposes rate restrictions on the sample size $n$, the 
number of candidate estimators $M_{n}$, the density-ratio estimation error
$\kappa_n$, and the minimum target excess prediction error $\xi_n$. The
condition $M_{n}^2/(n\xi_n^2)=o(1)$ implicitly requires $\xi_n\ne0$, and hence
corresponds to the misspecified setting in which the candidate averaging class
has nonzero minimum target excess prediction error. It also allows $\xi_n$ to
converge to zero, provided that it does so at a slower rate than
$M_{n}/\sqrt n$. The requirement $\kappa_n=o_p(\xi_n)$ requires the density-ratio estimation
error to be negligible relative to the minimum target excess prediction error.

We next introduce auxiliary quantities used to state the conditional expansion
of the criterion. For every $\boldsymbol w\in\mathcal{W}_{n}$, let
\begin{align}
\label{eq:en-w}
e_n(\boldsymbol w)
={}&
\E\left[
\left\{
\widehat{\boldsymbol\theta}(\boldsymbol w)
-
\widehat{\boldsymbol\theta}_{M_{n}}
\right\}^\top
\left(
\widehat{\boldsymbol\Sigma}^{T}
-
\boldsymbol\Sigma^{T}
\right)
\left\{
\widehat{\boldsymbol\theta}(\boldsymbol w)
-
\widehat{\boldsymbol\theta}_{M_{n}}
\right\}
\mid \mathcal F
\right]
\nonumber\\
&\quad
+
2\E(\widehat\sigma^2\mid\mathcal F)
\operatorname{tr}
\left[
\left(
\widehat{\boldsymbol\Sigma}^{T}
-
\boldsymbol\Sigma^{T}
\right)
\boldsymbol L(\boldsymbol w)
\boldsymbol L_{M_{n}}^\top
\right],
\\
\label{eq:dn-w}
d_n(\boldsymbol w)
={}&
-2
\boldsymbol b^\top
\boldsymbol L(\boldsymbol w)^\top
\boldsymbol\Sigma^{T}
\boldsymbol L_{M_{n}}
\boldsymbol b
+
2
\frac{\|\boldsymbol G\boldsymbol b\|_2^2}
{\operatorname{tr}(\boldsymbol G)}
\operatorname{tr}
\left\{
\boldsymbol\Sigma^{T}
\boldsymbol L(\boldsymbol w)
\boldsymbol L_{M_{n}}^\top
\right\},
\end{align}
where
$\boldsymbol b=\{b(\boldsymbol x_1),\ldots,b(\boldsymbol x_{n_s})\}^\top$
and
$\boldsymbol G
=
\boldsymbol I_{n_s}
-
\boldsymbol H(\boldsymbol H^\top\boldsymbol H)^{-1}\boldsymbol H^\top$,
with $\boldsymbol I_{n_s}$ denoting the $n_s\times n_s$ identity matrix. 
The term $e_n(\boldsymbol w)$ reflects the use of the empirical target second-moment matrix $\widehat{\boldsymbol\Sigma}^{T}$ in place of
$\boldsymbol\Sigma^{T}$, whereas $d_n(\boldsymbol w)$ is induced by model
misspecification and residual-based variance estimation. Under correct specification and when $\widehat{\boldsymbol\Sigma}^{T}=\boldsymbol\Sigma^{T}$, both terms vanish. 
Finally, define the term that does not depend on $\boldsymbol w$ by 
\begin{equation}
\label{eq:rn}
r_{n}
=
\E\left[
\left(
\widehat{\boldsymbol\theta}_{M_{n}}
-
\boldsymbol\theta^*
\right)^\top
\boldsymbol\Sigma^{T}
\left(
\widehat{\boldsymbol\theta}_{M_{n}}
-
\boldsymbol\theta^*
\right)
\mid \mathcal F
\right]
-
\E_{\boldsymbol x\sim\rho_{\mathcal X}^{T}}
\{b(\boldsymbol x)^2\}.
\end{equation}

\begin{theorem}
\label{thm:criterion-unbiasedness}
Suppose that $\phi_n=2$. Then, for every
$\boldsymbol w\in\mathcal{W}_n$,
\begin{equation}
\label{eq:cond-expansion-main}
\E\left[
C(\boldsymbol w)-r_n
\mid
\mathcal F
\right]
=
\mathrm{TEPE}(\boldsymbol w)
+
d_n(\boldsymbol w)
+
e_n(\boldsymbol w).
\end{equation}
Furthermore, if Assumptions~\ref{assump:1}--\ref{assump:5} hold, then
\begin{equation}
\label{eq:cond-asymp-unbiased-equivalent}
\E\left[
C(\boldsymbol w)-r_n
\mid
\mathcal F
\right]
=
\mathrm{TEPE}(\boldsymbol w)\{1+o_p(1)\},
\end{equation}
uniformly over $\boldsymbol w\in\mathcal{W}_n$.
\end{theorem}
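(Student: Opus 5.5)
The plan has two parts. First we establish the exact identity \eqref{eq:cond-expansion-main} by algebra conditional on $\mathcal F$. Then we show that $d_n(\boldsymbol w)$ and $e_n(\boldsymbol w)$ are uniformly negligible relative to a lower bound of order $\xi_n$ for $\mathrm{TEPE}(\boldsymbol w)$.

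\emph{Exact expansion.} Write $\boldsymbol y=\boldsymbol H\boldsymbol\theta^*+\boldsymbol b+\boldsymbol\varepsilon$. Because $\boldsymbol L_m\boldsymbol H=\boldsymbol I_p$ for every $m$ (this also holds with estimated weights), and because $\sum_m w_m=1$, we have $\widehat{\boldsymbol\theta}(\boldsymbol w)-\boldsymbol\theta^*=\boldsymbol L(\boldsymbol w)\boldsymbol b+\boldsymbol L(\boldsymbol w)\boldsymbol\varepsilon$. Here $\boldsymbol L(\boldsymbol w)$, $\boldsymbol b$, $\widehat{\boldsymbol\Sigma}^T$ and $\boldsymbol G$ are all $\mathcal F$-measurable, and $\boldsymbol\varepsilon$ has conditional mean zero and covariance $\sigma^2\boldsymbol I_{n_s}$.

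\begin{itemize}
\item Split $\widehat{\boldsymbol\Sigma}^T=\boldsymbol\Sigma^T+(\widehat{\boldsymbol\Sigma}^T-\boldsymbol\Sigma^T)$ in both terms of $C(\boldsymbol w)$. The pieces carrying $\widehat{\boldsymbol\Sigma}^T-\boldsymbol\Sigma^T$ give exactly $e_n(\boldsymbol w)$.
\item In the $\boldsymbol\Sigma^T$ part of the quadratic term, expand around $\boldsymbol\theta^*$:
\[
(\widehat{\boldsymbol\theta}(\boldsymbol w)-\boldsymbol\theta^*)^\top\boldsymbol\Sigma^T(\widehat{\boldsymbol\theta}(\boldsymbol w)-\boldsymbol\theta^*)-2(\widehat{\boldsymbol\theta}(\boldsymbol w)-\boldsymbol\theta^*)^\top\boldsymbol\Sigma^T(\widehat{\boldsymbol\theta}_{M_n}-\boldsymbol\theta^*)+(\widehat{\boldsymbol\theta}_{M_n}-\boldsymbol\theta^*)^\top\boldsymbol\Sigma^T(\widehat{\boldsymbol\theta}_{M_n}-\boldsymbol\theta^*).
\]
\item Since $(\boldsymbol z,u)$ is independent of $\mathcal F$ and $\boldsymbol\varepsilon$, and $b$ is $\rho^T_{\mathcal X}$-orthogonal to $\boldsymbol h$, the conditional expectation of the first piece equals $\mathrm{TEPE}(\boldsymbol w)-\E\{b(\boldsymbol z)^2\}$.
\item The last piece, together with $\E\{b(\boldsymbol z)^2\}$, accounts for $r_n$.
\item The cross term has conditional expectation $-2\boldsymbol b^\top\boldsymbol L(\boldsymbol w)^\top\boldsymbol\Sigma^T\boldsymbol L_{M_n}\boldsymbol b-2\sigma^2\operatorname{tr}\{\boldsymbol\Sigma^T\boldsymbol L(\boldsymbol w)\boldsymbol L_{M_n}^\top\}$.
\item Finally, $\boldsymbol G\boldsymbol H=\boldsymbol 0$ and $\operatorname{tr}\boldsymbol G=n_s-p$ give $\E(\widehat\sigma^2\mid\mathcal F)=\sigma^2+\|\boldsymbol G\boldsymbol b\|_2^2/\operatorname{tr}(\boldsymbol G)$. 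Hence the $\phi_n=2$ penalty cancels the $-2\sigma^2\operatorname{tr}$ term and leaves the second half of $d_n(\boldsymbol w)$.
\end{itemize}

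Collecting these terms gives \eqref{eq:cond-expansion-main}. No assumptions are needed for this part.

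\emph{Lower bound on TEPE.} By conditional Jensen, $\mathrm{TEPE}(\boldsymbol w)\ge \E_{\boldsymbol z}[\{f(\boldsymbol z)-\boldsymbol h(\boldsymbol z)^\top(\boldsymbol\theta^*+\boldsymbol L(\boldsymbol w)\boldsymbol b)\}^2]$. We will show
\[
\max_m\|\boldsymbol\theta^*+\boldsymbol L_m\boldsymbol b-\boldsymbol\theta_m^*\|_2=O_p(M_n^{1/2}n^{-1/2}+\kappa_n).
\]
To do this, write $\boldsymbol L_m\boldsymbol b$ with oracle weights $\delta^{\lambda_m}$ plus a perturbation. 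The perturbation is controlled by $\kappa_n$ through Assumptions~\ref{assump:1}, \ref{assump:3} and \ref{assump:4}: the weighted Gram matrix stays uniformly invertible, and the weights stay bounded by $\max(1,\overline\delta)$. The oracle part is a ratio of sample means, and a union bound with Chebyshev over the $M_n$ grid points handles it. Because the bound is linear in $\boldsymbol w$, it extends uniformly over $\mathcal W_n$. By Assumption~\ref{assump:5} this error is $o_p(\xi_n^{1/2})$, and a triangle-inequality argument then gives $\inf_{\boldsymbol w}\mathrm{TEPE}(\boldsymbol w)/\mathrm{TEPE}^*(\boldsymbol w)\ge 1-o_p(1)$. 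In particular, $\mathrm{TEPE}(\boldsymbol w)\ge\xi_n\{1-o_p(1)\}$ uniformly.

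\emph{Negligibility of $d_n$ and $e_n$.} Both terms are linear or quadratic in $\boldsymbol w$, so it suffices to control them at the vertices and pairs of vertices, paying at most an $M_n$ factor.

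For $d_n$:
\begin{itemize}
\item $\boldsymbol\theta^*_{M_n}=\boldsymbol\theta^*$, so $\boldsymbol L_{M_n}\boldsymbol b=O_p(n^{-1/2}+\kappa_n)$, while $\max_m\|\boldsymbol L_m\boldsymbol b\|_2=O_p(1)$.
\item $\|\boldsymbol G\boldsymbol b\|_2^2/\operatorname{tr}(\boldsymbol G)=O_p(1)$ by Assumption~\ref{assump:3}.
\item $\max_m|\operatorname{tr}\{\boldsymbol\Sigma^T\boldsymbol L_m\boldsymbol L_{M_n}^\top\}|=O_p(n^{-1})$.
\end{itemize}
Hence $\sup_{\boldsymbol w}|d_n(\boldsymbol w)|=O_p(n^{-1/2}+\kappa_n)=o_p(\xi_n)$.

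For $e_n$:
\begin{itemize}
\item $\|\widehat{\boldsymbol\Sigma}^T-\boldsymbol\Sigma^T\|=O_p(n^{-1/2})$ by the fourth-moment condition in Assumption~\ref{assump:3} (moments under $\rho^T$ follow via Assumption~\ref{assump:4}).
\item $\E(\|\widehat{\boldsymbol\theta}(\boldsymbol w)-\widehat{\boldsymbol\theta}_{M_n}\|_2^2\mid\mathcal F)=O_p(1)$ uniformly, using Assumption~\ref{assump:2} and the uniform invertibility above.
\end{itemize}
Hence $\sup_{\boldsymbol w}|e_n(\boldsymbol w)|=O_p(n^{-1/2})=o_p(\xi_n)$.

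Dividing by the lower bound on $\mathrm{TEPE}(\boldsymbol w)$ yields \eqref{eq:cond-asymp-unbiased-equivalent}.

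The main obstacle is the uniform control over the growing grid, in the presence of the estimated density ratio, of both $\boldsymbol L_m\boldsymbol b-(\boldsymbol\theta_m^*-\boldsymbol\theta^*)$ and the Gram matrices $n_s^{-1}\boldsymbol H^\top\widehat{\boldsymbol\Delta}_m\boldsymbol H$. I would first try a crude union bound, which the rate $M_n^2/(n\xi_n^2)=o(1)$ appears designed to absorb. If that fails, a Lipschitz-in-$\lambda$ argument on $\delta^\lambda$, using the boundedness from Assumption~\ref{assump:4}, is the fallback.
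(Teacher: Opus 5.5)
Your proposal is correct and follows essentially the paper's route. The exact expansion uses the same conditional algebra: splitting $\widehat{\boldsymbol\Sigma}^{T}$ produces $e_n(\boldsymbol w)$, and the cross-term together with $\E(\widehat\sigma^2\mid\mathcal F)=\sigma^2+\|\boldsymbol G\boldsymbol b\|_2^2/\operatorname{tr}(\boldsymbol G)$ produces $d_n(\boldsymbol w)$. The asymptotic part rests on the same three ingredients: $\sup_{\boldsymbol w}|d_n(\boldsymbol w)|=O_p(n^{-1/2}+\kappa_n)$, $\sup_{\boldsymbol w}|e_n(\boldsymbol w)|=O_p(n^{-1/2})$, and a lower bound $\mathrm{TEPE}(\boldsymbol w)\ge\xi_n\{1-o_p(1)\}$ obtained by approximating $\boldsymbol L_m\boldsymbol f$ by $\boldsymbol\theta_m^*$ uniformly in $m$.

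The differences are minor tightenings. You use only a one-sided bound: you drop the variance term and apply the $L^2(\rho_{\mathcal X}^{T})$ triangle inequality, which needs only $o_p(\xi_n^{1/2})$ accuracy. The paper instead proves the two-sided $\sup_{\boldsymbol w}|\mathrm{TEPE}(\boldsymbol w)-\mathrm{TEPE}^*(\boldsymbol w)|=o_p(\xi_n)$ by a Lipschitz argument. Your Chebyshev union bound also gives the sharper rate $M_n^{1/2}n^{-1/2}$, versus the paper's $M_n n^{-1/2}$. Finally, no $M_n$ factor is ever paid in the sup over the simplex, since linear and quadratic forms in $\boldsymbol w$ are bounded by their maximal vertex or entry values.
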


\begin{corollary}
\label{cor:criterion-exact-unbiasedness}
Suppose that $\phi_n=2$, that the working linear model is correctly
specified, and that
$\widehat{\boldsymbol\Sigma}^{T}=\boldsymbol\Sigma^{T}$. Then, for every
$\boldsymbol w\in\mathcal{W}_n$,
\begin{equation}
\label{eq:cond-exact-unbiased-correct}
\E\left[
C(\boldsymbol w)-r_n
\mid
\mathcal F
\right]
=
\mathrm{TEPE}(\boldsymbol w).
\end{equation}
\end{corollary}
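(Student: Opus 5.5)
The plan is to obtain Corollary~\ref{cor:criterion-exact-unbiasedness} directly from the exact identity \eqref{eq:cond-expansion-main} of Theorem~\ref{thm:criterion-unbiasedness}. That identity holds for every $\boldsymbol w$ when $\phi_n=2$, and it uses none of Assumptions~\ref{assump:1}--\ref{assump:5}. It therefore suffices to show that, under the two extra hypotheses, $d_n(\boldsymbol w)=0$ and $e_n(\boldsymbol w)=0$ for every $\boldsymbol w\in\mathcal W_n$. Each of the two remainder terms vanishes for its own structural reason, so the argument needs no asymptotics at all.

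First, consider $e_n(\boldsymbol w)$ in \eqref{eq:en-w}. Both of its summands contain the factor $\widehat{\boldsymbol\Sigma}^{T}-\boldsymbol\Sigma^{T}$: the first inside a quadratic form, the second inside a trace. This factor is a deterministic function of the target covariates, hence $\mathcal F$-measurable. When $\widehat{\boldsymbol\Sigma}^{T}=\boldsymbol\Sigma^{T}$ it is the zero matrix, so the integrand of the first conditional expectation is identically zero and the trace in the second term is zero. This holds regardless of the value of $\E(\widehat\sigma^2\mid\mathcal F)$, which is finite because $\sigma^2<\infty$ and $\boldsymbol G$ is a projection. Hence $e_n(\boldsymbol w)=0$.

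Second, consider $d_n(\boldsymbol w)$ in \eqref{eq:dn-w}. Under correct specification, $\boldsymbol\theta^*=\boldsymbol\theta_0$ and $b(\boldsymbol x)=0$ for $\rho_{\mathcal X}^{S}$-almost every $\boldsymbol x$, as noted after \eqref{eq:theta-star-lambda}. Since the source covariates are drawn from $\rho_{\mathcal X}^{S}$, we get $\boldsymbol b=\boldsymbol 0$ almost surely. The first term $-2\boldsymbol b^\top\boldsymbol L(\boldsymbol w)^\top\boldsymbol\Sigma^{T}\boldsymbol L_{M_n}\boldsymbol b$ is then zero. In the second term, $\|\boldsymbol G\boldsymbol b\|_2^2=0$ is divided by $\operatorname{tr}(\boldsymbol G)=n_s-p>0$, using the standing assumption $p<n_s$ and that $\boldsymbol H$ has full column rank, which is implicit in the definition of $\widehat{\boldsymbol\theta}^{\mathrm{OLS}}$. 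So this term is also zero, and $d_n(\boldsymbol w)=0$.

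Substituting $d_n(\boldsymbol w)=e_n(\boldsymbol w)=0$ into \eqref{eq:cond-expansion-main} gives \eqref{eq:cond-exact-unbiased-correct} for every $\boldsymbol w$, almost surely. The only point needing care, which is a null-set technicality rather than a real obstacle, is that $\boldsymbol b=\boldsymbol 0$ holds only almost surely. The identity therefore holds almost surely, which is the natural sense for conditional expectations. One should also check that \eqref{eq:cond-expansion-main} requires only the existence of the inverses $(\boldsymbol H^\top\widehat{\boldsymbol\Delta}_m\boldsymbol H)^{-1}$ and not the rate conditions of Assumption~\ref{assump:5}, which would fail here because $\xi_n=0$. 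This holds because that identity is stated before the clause ``Furthermore, if Assumptions~\ref{assump:1}--\ref{assump:5} hold''.
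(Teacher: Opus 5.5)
Your proposal is correct and matches the paper's proof. Under correct specification $\boldsymbol b=\boldsymbol 0$ almost surely, so $d_n(\boldsymbol w)=0$; the hypothesis $\widehat{\boldsymbol\Sigma}^{T}=\boldsymbol\Sigma^{T}$ gives $e_n(\boldsymbol w)=0$; and the claim follows from the exact identity \eqref{eq:cond-expansion-main}, which needs none of the rate assumptions. Your extra remarks on null sets and on $\operatorname{tr}(\boldsymbol G)=n_s-p>0$ are harmless refinements of the same argument.
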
 

Theorem~\ref{thm:criterion-unbiasedness} justifies the use of
$C(\boldsymbol w)$ for selecting the averaging weights. When $\phi_n=2$, after
subtracting the $\boldsymbol w$-free term $r_n$, the criterion is uniformly
asymptotically unbiased for $\mathrm{TEPE}(\boldsymbol w)$.
Moreover, as shown in
Corollary~\ref{cor:criterion-exact-unbiasedness}, in the idealized
setting where the working model is correctly specified and
$\widehat{\boldsymbol\Sigma}^{T}=\boldsymbol\Sigma^{T}$, both remainder
terms $d_n(\boldsymbol w)$ and $e_n(\boldsymbol w)$ vanish. Consequently, after subtracting $r_n$, the criterion is exactly conditionally unbiased
for $\mathrm{TEPE}(\boldsymbol w)$.

\subsection{Asymptotic optimality under model misspecification}
\label{subsec:Asymptotic-optimality-under-model-misspecification}

When the working linear model is misspecified, prediction accuracy under the
target distribution is the central objective. To establish the asymptotic optimality of the estimated weights, we replace Assumption~\ref{assump:5} with the following strengthened
rate condition.

\begin{assumption}
\label{assump:6}
$M_{n}^2/(n\xi_n^2)=o(1)$, $\phi_n/(n\xi_n)=o(1)$, and $\kappa_{n}=o_p(\xi_n)$.
\end{assumption}

Assumption~\ref{assump:6} strengthens Assumption~\ref{assump:5} by imposing an
additional rate restriction on the penalty level $\phi_n$ in the
weight-selection criterion. Since $M_{n}^2/(n\xi_n^2)=o(1)$ is still required, this
assumption implicitly requires $\xi_n\ne0$, corresponding to the misspecified
setting in which the candidate averaging class has nonzero minimum target
excess prediction error. The additional condition $\phi_n/(n\xi_n)=o(1)$
allows the penalty to be fixed or diverging, but requires it to grow slowly
enough relative to $n\xi_n$.

\begin{theorem}
\label{thm:asymptotic-optimality}
If Assumptions~\ref{assump:1}--\ref{assump:4} and
\ref{assump:6} hold, then $\widehat{\boldsymbol w}$ is asymptotically optimal
in the sense that
\begin{equation}
\label{eq:asymptotic-optimality-main}
\frac{
\mathrm{TEPE}(\widehat{\boldsymbol w})
}{
\inf_{\boldsymbol w\in\mathcal{W}_{n}}
\mathrm{TEPE}\left(\boldsymbol w\right)
}
\xrightarrow{p} 1.
\end{equation}
\end{theorem}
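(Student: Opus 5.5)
We follow the standard route for model-averaging optimality proofs (in the spirit of \citet{Hansen2007} and \citet{ZhangLiu2023}). Since $\widehat{\boldsymbol w}$ minimizes $C(\boldsymbol w)$, it also minimizes $C(\boldsymbol w)-r^\dagger$ for any $\boldsymbol w$-free quantity $r^\dagger$. It therefore suffices to find such an $r^\dagger$ and show
\begin{equation*}
\sup_{\boldsymbol w\in\mathcal W_n}
\frac{\left|C(\boldsymbol w)-r^\dagger-\mathrm{TEPE}(\boldsymbol w)\right|}{\mathrm{TEPE}(\boldsymbol w)}
=o_p(1).
\end{equation*}
Comparing $C(\widehat{\boldsymbol w})-r^\dagger\le C(\boldsymbol w^\circ)-r^\dagger$, with $\boldsymbol w^\circ$ a near-minimizer of $\mathrm{TEPE}$, then gives \eqref{eq:asymptotic-optimality-main} in the usual way. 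The argument has three ingredients, taken in the following order.

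\emph{Step 1: the denominator is of order $\xi_n$.} Decompose $\widehat{\boldsymbol\theta}(\boldsymbol w)-\boldsymbol\theta^*(\boldsymbol w)$ as
\begin{equation*}
\sum_m w_m\bigl(\boldsymbol L_m-\boldsymbol L_m^{o}\bigr)\boldsymbol y
+\sum_m w_m\bigl(\boldsymbol L_m^{o}\boldsymbol y-\boldsymbol\theta_m^*\bigr),
\end{equation*}
where $\boldsymbol L_m^{o}$ uses the true $\delta^{\lambda_m}$. The first term is controlled by $\kappa_n$, using Assumption~\ref{assump:1}, Assumption~\ref{assump:4} and a uniform law of large numbers for $n_s^{-1}\boldsymbol H^\top\boldsymbol\Delta^{\lambda}\boldsymbol H$ over $\lambda\in[0,1]$. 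The second term is $O_p(n^{-1/2})$ for each $m$ by the argument of Theorem~\ref{thm:AIWLS-asymptotic-normality}, and $O_p(M_n n^{-1/2})$ uniformly by a union/Chebyshev bound using Assumptions~\ref{assump:2}--\ref{assump:3}. Hence
\begin{equation*}
\sup_{\boldsymbol w}\left|\mathrm{TEPE}(\boldsymbol w)-\mathrm{TEPE}^*(\boldsymbol w)\right|
=O_p\bigl(M_n n^{-1/2}+\kappa_n\bigr)\bigl\{\mathrm{TEPE}^*(\boldsymbol w)^{1/2}+\cdots\bigr\}.
\end{equation*}
Together with $\mathrm{TEPE}^*\ge\xi_n$ and Assumption~\ref{assump:6}, this gives $\inf_{\boldsymbol w}\mathrm{TEPE}(\boldsymbol w)\ge\xi_n\{1+o_p(1)\}$ and $\mathrm{TEPE}(\boldsymbol w)/\mathrm{TEPE}^*(\boldsymbol w)\to1$ uniformly.

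\emph{Step 2: expand $C$.} Writing $\boldsymbol y=\boldsymbol f+\boldsymbol\varepsilon$ and taking $r^\dagger$ to collect all terms not involving $\boldsymbol w$ (for instance $\widehat{\boldsymbol\theta}_{M_n}^\top\widehat{\boldsymbol\Sigma}^T\widehat{\boldsymbol\theta}_{M_n}-r_0$ adjusted), the difference $C(\boldsymbol w)-r^\dagger-\mathrm{TEPE}(\boldsymbol w)$ splits into the following pieces, each linear or quadratic in $\boldsymbol w$:
\begin{itemize}
\item[(i)] the cross term $-2\widehat{\boldsymbol\theta}(\boldsymbol w)^\top\widehat{\boldsymbol\Sigma}^T(\widehat{\boldsymbol\theta}_{M_n}-\boldsymbol\theta^*)$, which is small because $\widehat{\boldsymbol\theta}_{M_n}-\boldsymbol\theta^*=O_p(n^{-1/2}+\kappa_n)$ by Theorem~\ref{thm:AIWLS-asymptotic-normality} at $\lambda=1$;
\item[(ii)] the errors from replacing $\widehat{\boldsymbol\Sigma}^T$ by $\boldsymbol\Sigma^T$, of size $O_p(n_t^{-1/2})$ by Assumption~\ref{assump:3};
\item[(iii)] the penalty $\phi_n\widehat\sigma^2\operatorname{tr}\{\widehat{\boldsymbol\Sigma}^T\boldsymbol L(\boldsymbol w)\boldsymbol L_{M_n}^\top\}$, which is $O_p(\phi_n/n)$ because each $\boldsymbol L_m\boldsymbol L_{M_n}^\top$ has trace of order $n_s^{-1}$ uniformly and $\widehat\sigma^2=O_p(1)$.
\end{itemize}
Since everything is linear in $\boldsymbol w$, suprema over the simplex reduce to maxima over the $M_n$ vertices (or pairs of vertices for quadratic pieces). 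Dividing by $\xi_n$, Assumption~\ref{assump:6} makes each piece $o_p(1)$: (i) and (ii) use $M_n/(\sqrt n\,\xi_n)\to0$ and $\kappa_n/\xi_n\to0$, and (iii) uses $\phi_n/(n\xi_n)\to0$.

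\emph{Main obstacle.} The hard part is the cross term (i) combined with the random part of $\widehat{\boldsymbol\theta}(\boldsymbol w)$. The term $\boldsymbol\varepsilon^\top\boldsymbol L(\boldsymbol w)^\top\widehat{\boldsymbol\Sigma}^T\boldsymbol L_{M_n}\boldsymbol\varepsilon$ has nonzero mean $\sigma^2\operatorname{tr}(\cdot)$, which the penalty is designed to cancel only when $\phi_n=2$. For general $\phi_n$ we will not rely on cancellation but simply bound both contributions by $O_p((1+\phi_n)/n)$. This requires a uniform-in-$m$ bound on $\|\boldsymbol L_m\|$ with \emph{estimated} weights, which we get from $\lambda_{\min}(n_s^{-1}\boldsymbol H^\top\widehat{\boldsymbol\Delta}_m\boldsymbol H)\ge c_1/2$ with probability tending to one. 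That lower bound in turn follows from Assumption~\ref{assump:1}, uniform convergence over $\lambda$ (Lipschitz continuity of $\delta^\lambda$ in $\lambda$ on $[0,\overline\delta]$ away from $\delta=0$, handled by monotonicity), and $\kappa_n=o_p(1)$. We will prove this uniform invertibility first, as a lemma.
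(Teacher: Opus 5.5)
Your architecture is the paper's. Your $r^\dagger$ is its $K_n=(\widehat{\boldsymbol\theta}_{M_n}-\boldsymbol\theta^*)^\top\widehat{\boldsymbol\Sigma}^{T}(\widehat{\boldsymbol\theta}_{M_n}-\boldsymbol\theta^*)-\E_{\boldsymbol x\sim\rho_{\mathcal X}^{T}}\{b(\boldsymbol x)^2\}$. Your Step~1 is its bound $\sup_{\boldsymbol w}|\mathrm{TEPE}(\boldsymbol w)-\mathrm{TEPE}^*(\boldsymbol w)|=O_p(M_n/\sqrt{n_s}+\kappa_n+n_s^{-1})=o_p(\xi_n)$, and your pieces (i) and (iii) are its $T_{2n}$ and $T_{3n}$, bounded as you describe.

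The gap is the claim that $C(\boldsymbol w)-r^\dagger-\mathrm{TEPE}(\boldsymbol w)$ consists only of (i)--(iii). That holds if $\mathrm{TEPE}(\boldsymbol w)$ is the realized excess loss at the fitted $\widehat{\boldsymbol\theta}(\boldsymbol w)$. In the paper, however, $\mathrm{TEPE}(\boldsymbol w)$ is a conditional expectation given $\mathcal F$ that also integrates out the source errors. With $\boldsymbol f=\{f(\boldsymbol x_1),\ldots,f(\boldsymbol x_{n_s})\}^\top$ and $\bar{\boldsymbol\theta}(\boldsymbol w)=\boldsymbol L(\boldsymbol w)\boldsymbol f$, it equals
\begin{multline*}
\mathrm{TEPE}(\boldsymbol w)=\E_{\boldsymbol z\sim\rho_{\mathcal X}^{T}}\{b(\boldsymbol z)^2\}
+\{\bar{\boldsymbol\theta}(\boldsymbol w)-\boldsymbol\theta^*\}^\top\boldsymbol\Sigma^{T}\{\bar{\boldsymbol\theta}(\boldsymbol w)-\boldsymbol\theta^*\}\\
+\sigma^2\operatorname{tr}\{\boldsymbol\Sigma^{T}\boldsymbol L(\boldsymbol w)\boldsymbol L(\boldsymbol w)^\top\},
\end{multline*}
which is $\mathcal F$-measurable, whereas $C(\boldsymbol w)$ depends on the realized $\boldsymbol\varepsilon$. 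So your list is missing a fourth piece: the fluctuation of the realized quadratic form around its conditional mean,
\begin{multline*}
2\{\bar{\boldsymbol\theta}(\boldsymbol w)-\boldsymbol\theta^*\}^\top\boldsymbol\Sigma^{T}\boldsymbol L(\boldsymbol w)\boldsymbol\varepsilon\\
+\boldsymbol\varepsilon^\top\boldsymbol L(\boldsymbol w)^\top\boldsymbol\Sigma^{T}\boldsymbol L(\boldsymbol w)\boldsymbol\varepsilon-\sigma^2\operatorname{tr}\{\boldsymbol\Sigma^{T}\boldsymbol L(\boldsymbol w)\boldsymbol L(\boldsymbol w)^\top\}.
\end{multline*}
This is the stochastic part of the paper's $T_{1n}(\boldsymbol w)$. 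It is not negligible for free: its linear part is of the same order as your Step~1 error (at most $O_p(M_n/\sqrt{n_s})$ uniformly in $\boldsymbol w$), and it is $o_p(\xi_n)$ only because $M_n^2/(n\xi_n^2)=o(1)$.

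The repair uses tools you already invoke. For the linear part, convexity gives $\sup_{\boldsymbol w}\|\boldsymbol L(\boldsymbol w)\boldsymbol\varepsilon\|_2\le\max_m\|\boldsymbol L_m\boldsymbol\varepsilon\|_2=O_p(M_n/\sqrt{n_s})$. This follows from $\E(\|\boldsymbol L_m\boldsymbol\varepsilon\|_2^2\mid\mathcal F)=\sigma^2\|\boldsymbol L_m\|_F^2$, the bound $\max_m\|\boldsymbol L_m\|_F=O_p(n_s^{-1/2})$ that your invertibility lemma delivers, and a union bound. Combined with $\sup_{\boldsymbol w}\|\bar{\boldsymbol\theta}(\boldsymbol w)-\boldsymbol\theta^*\|_2=O_p(1)$, this disposes of the linear part.

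For the centered quadratic, the paper writes it as $\sum_{i,j}w_iw_jQ_{ij}$ with $Q_{ij}=\boldsymbol\varepsilon^\top\boldsymbol L_i^\top\boldsymbol\Sigma^{T}\boldsymbol L_j\boldsymbol\varepsilon-\sigma^2\operatorname{tr}(\boldsymbol\Sigma^{T}\boldsymbol L_j\boldsymbol L_i^\top)$. It uses Assumption~\ref{assump:2} to get $\operatorname{Var}(Q_{ij}\mid\mathcal F)=O_p(n_s^{-2})$, then a union bound over the $M_n^2$ pairs gives $\max_{i,j}|Q_{ij}|=O_p(M_n/n_s)$. Even the crude bound $\|\boldsymbol\Sigma^{T}\|\max_m\|\boldsymbol L_m\boldsymbol\varepsilon\|_2^2+O_p(n_s^{-1})=O_p(M_n^2/n_s)$ suffices. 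The reason is that $\xi_n\le\E_{\boldsymbol x\sim\rho_{\mathcal X}^{T}}\{b(\boldsymbol x)^2\}=O(1)$, which turns $M_n^2/(n\xi_n^2)=o(1)$ into $M_n^2/(n\xi_n)=o(1)$. With this piece added, your comparison argument closes as in the paper.

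Two smaller remarks. First, what you call the main obstacle is not one: the paper bounds the whole cross term by Cauchy--Schwarz, $\|\widehat{\boldsymbol\Sigma}^{T}\|\sup_{\boldsymbol w}\|\widehat{\boldsymbol\theta}(\boldsymbol w)-\boldsymbol\theta^*\|_2\|\widehat{\boldsymbol\theta}_{M_n}-\boldsymbol\theta^*\|_2=o_p(\xi_n)$, without isolating its mean, which is the same bound you already use in (i). Second, the uniform law over the continuum $\lambda\in[0,1]$ is unnecessary. Chebyshev plus a union bound over the grid give $\max_m\|n_s^{-1}\boldsymbol H^\top\widehat{\boldsymbol\Delta}_m\boldsymbol H-\boldsymbol\Sigma(\lambda_m)\|=O_p(M_n/\sqrt{n_s}+\kappa_n)=o_p(1)$, and Weyl's inequality then yields the uniform invertibility.
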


Theorem~\ref{thm:asymptotic-optimality} shows that the proposed model averaging
procedure is asymptotically optimal in the sense that its target excess
prediction error is asymptotically equivalent to that of the infeasible best
convex combination in the same candidate class. Since the best convex
combination attains TEPE no larger than that of the best single candidate, this
result highlights the potential advantage of averaging over candidate
correction levels rather than selecting a single exponent.

\subsection{Weight concentration under correct specification}
\label{subsec:weight-concentration-under-correct-specification}

We next examine the behavior of the AIWMA weights under correct specification.
We first establish that the OLS endpoint minimizes
$\mathrm{TEPE}(\boldsymbol w)$ over the entire class of AIWMA estimators. For each $m=1,\ldots,M_{n}$, let $\boldsymbol w^{(m)}\in\mathcal{W}_{n}$ denote the weight vector that places unit mass on the $m$th candidate and zero mass on all other candidates. In particular, $\boldsymbol w^{(1)}$
corresponds to the OLS endpoint. 

\begin{proposition}
\label{prop:correct-spec-OLS-optimal}
Suppose that the working linear model is correctly specified. Then, for every
$\boldsymbol w\in\mathcal{W}_{n}$,
\begin{equation}
\label{eq:prop-ols-optimal-convex}
\mathrm{TEPE}(\boldsymbol w)
-
\mathrm{TEPE}\left(\boldsymbol w^{(1)}\right)
=
\sigma^2
\operatorname{tr}
\left[
\boldsymbol\Sigma^{T}
\left\{
\boldsymbol L(\boldsymbol w)
-
\boldsymbol L_1
\right\}
\left\{
\boldsymbol L(\boldsymbol w)
-
\boldsymbol L_1
\right\}^{\top}
\right]
\ge 0 .
\end{equation}
\end{proposition}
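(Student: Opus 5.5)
The argument is a direct bias--variance computation conditional on $\mathcal F$. Under correct specification, $\boldsymbol y=\boldsymbol H\boldsymbol\theta_0+\boldsymbol\varepsilon$ with $\E(\boldsymbol\varepsilon\mid\mathcal F)=\boldsymbol 0$ and $\E(\boldsymbol\varepsilon\boldsymbol\varepsilon^\top\mid\mathcal F)=\sigma^2\boldsymbol I_{n_s}$. The target sample is independent of the source errors, and the source errors are independent across $i$ with conditional variance $\sigma^2$. Each candidate satisfies $\boldsymbol L_m\boldsymbol H=(\boldsymbol H^\top\widehat{\boldsymbol\Delta}_m\boldsymbol H)^{-1}\boldsymbol H^\top\widehat{\boldsymbol\Delta}_m\boldsymbol H=\boldsymbol I_p$, provided the inverse exists, which is implicit in the definition of $\widehat{\boldsymbol\theta}_m$. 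Because $\sum_m w_m=1$, this gives $\boldsymbol L(\boldsymbol w)\boldsymbol H=\boldsymbol I_p$. Hence
\[
\widehat{\boldsymbol\theta}(\boldsymbol w)-\boldsymbol\theta_0=\boldsymbol L(\boldsymbol w)\boldsymbol\varepsilon
\]
for every $\boldsymbol w\in\mathcal W_n$. In particular, every averaged estimator is exactly conditionally unbiased. This cancellation is the key point; the rest is algebra.

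Next, since $f(\boldsymbol z)=\boldsymbol h(\boldsymbol z)^\top\boldsymbol\theta_0$ for $\rho^T_{\mathcal X}$-a.e.\ $\boldsymbol z$, and $\boldsymbol z$ is independent of the source data, integrating over $\boldsymbol z$ gives
\[
\mathrm{TEPE}(\boldsymbol w)=\E[\boldsymbol\varepsilon^\top\boldsymbol L(\boldsymbol w)^\top\boldsymbol\Sigma^T\boldsymbol L(\boldsymbol w)\boldsymbol\varepsilon\mid\mathcal F]=\sigma^2\operatorname{tr}\{\boldsymbol\Sigma^T\boldsymbol L(\boldsymbol w)\boldsymbol L(\boldsymbol w)^\top\}.
\]
To compare with OLS, write $\boldsymbol L(\boldsymbol w)=\boldsymbol L_1+\boldsymbol D$ with $\boldsymbol D=\boldsymbol L(\boldsymbol w)-\boldsymbol L_1$. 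Expanding yields
\[
\mathrm{TEPE}(\boldsymbol w)-\mathrm{TEPE}(\boldsymbol w^{(1)})=\sigma^2\operatorname{tr}(\boldsymbol\Sigma^T\boldsymbol D\boldsymbol D^\top)+2\sigma^2\operatorname{tr}(\boldsymbol\Sigma^T\boldsymbol D\boldsymbol L_1^\top).
\]

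The remaining step is to show that the cross term vanishes. This is the Gauss--Markov orthogonality. Since $\boldsymbol L_1^\top=\boldsymbol H(\boldsymbol H^\top\boldsymbol H)^{-1}$, we have
\[
\boldsymbol D\boldsymbol L_1^\top=(\boldsymbol L(\boldsymbol w)\boldsymbol H-\boldsymbol L_1\boldsymbol H)(\boldsymbol H^\top\boldsymbol H)^{-1}=(\boldsymbol I_p-\boldsymbol I_p)(\boldsymbol H^\top\boldsymbol H)^{-1}=\boldsymbol 0.
\]
This gives the stated identity. Nonnegativity follows because $\operatorname{tr}(\boldsymbol\Sigma^T\boldsymbol D\boldsymbol D^\top)=\operatorname{tr}(\boldsymbol D^\top\boldsymbol\Sigma^T\boldsymbol D)$ and $\boldsymbol\Sigma^T$ is positive semidefinite.

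The only point needing care is the unbiasedness identity $\boldsymbol L(\boldsymbol w)\boldsymbol H=\boldsymbol I_p$, which relies on the weights summing to one. The identity holds whatever the estimated density ratios $\widehat\delta$ are, since they are $\mathcal F$-measurable. It also requires that correct specification holds at the source points, so that $\boldsymbol b=\boldsymbol 0$ almost surely.
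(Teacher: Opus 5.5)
Your proposal is correct and follows essentially the same route as the paper. The paper also uses $\boldsymbol L(\boldsymbol w)\boldsymbol H=\boldsymbol I_p$ to reduce $\mathrm{TEPE}(\boldsymbol w)$ to $\sigma^2\operatorname{tr}\{\boldsymbol\Sigma^{T}\boldsymbol L(\boldsymbol w)\boldsymbol L(\boldsymbol w)^\top\}$, and then removes the cross term through $\boldsymbol L(\boldsymbol w)\boldsymbol L_1^\top=(\boldsymbol H^\top\boldsymbol H)^{-1}=\boldsymbol L_1\boldsymbol L_1^\top$, which is exactly your identity $\boldsymbol D\boldsymbol L_1^\top=\boldsymbol 0$.
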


Proposition~\ref{prop:correct-spec-OLS-optimal} shows that, under correct specification, the OLS endpoint
minimizes $\mathrm{TEPE}(\boldsymbol w)$ over $\boldsymbol w\in\mathcal{W}_{n}$.
We next investigate whether the criterion $C(\boldsymbol w)$ asymptotically
assigns its weight to candidate exponents near this endpoint.

Let $a_n\to0$ be a deterministic positive sequence, and define the
index set of candidate exponents near the OLS endpoint by $\mathcal N_n( a_n)
=
\{
m\in\{1,\ldots,M_{n}\}:\lambda_m\le a_n
\}$. For $\lambda\in[0,1]$, recall that
$\boldsymbol\Sigma(\lambda)
=
\E_{\boldsymbol x\sim\rho_{\mathcal X}^{S}}
[
\delta(\boldsymbol x)^\lambda
\boldsymbol h(\boldsymbol x)\boldsymbol h(\boldsymbol x)^\top
]$,
and define
$\boldsymbol\Gamma(\lambda)
=
\E_{\boldsymbol x\sim\rho_{\mathcal X}^{S}}
[
\delta(\boldsymbol x)^{1+\lambda}
\boldsymbol h(\boldsymbol x)\boldsymbol h(\boldsymbol x)^\top
]$,
$\nu(\lambda)
=
\operatorname{tr}
\{\boldsymbol\Sigma(\lambda)^{-1}
\boldsymbol\Gamma(\lambda)\}$,
and
$\eta_n
=
\min_{m\notin\mathcal N_n( a_n)}
\{\nu(\lambda_m)-\nu(0)\}$.
We impose the following additional conditions.

\begin{assumption}
\label{assump:7}
The basis functions satisfy
$\E_{\boldsymbol x\sim\rho_{\mathcal X}^{S}}\{\|\boldsymbol h(\boldsymbol x)\|_2^4\}<\infty$.
\end{assumption}

\begin{assumption}
\label{assump:8}
$\nu(\lambda_m)\ge\nu(0)$ for $m=1,\ldots,M_{n}$, and $\eta_n>0$.
\end{assumption}

\begin{assumption}
\label{assump:9}
$M_{n}=o(n^{1/2}\eta_n)$, $\kappa_n=o_p(\eta_n)$, and
$\phi_n\eta_n\to\infty$.
\end{assumption}

Assumption~\ref{assump:7} is the counterpart of
Assumption~\ref{assump:3} for the correctly specified case. When the working
linear model is correctly specified, the approximation error vanishes for every
correction exponent, that is, $b_\lambda(\boldsymbol x)=0$ for all
$\lambda\in[0,1]$. 
Assumption~\ref{assump:8} is an identification condition for the function
$\nu(\lambda)$. This condition formalizes
the variance advantage of the OLS endpoint under correct specification.
Sufficient conditions and illustrative examples under which
Assumption~\ref{assump:8} holds are provided in the Supplementary Material.
Assumption~\ref{assump:9} imposes joint rate restrictions involving the
sample size $n$, the number of candidate estimators $M_n$, the
density-ratio estimation error $\kappa_n$, the penalty level $\phi_n$,
and the separation level $\eta_n$. 


\begin{theorem}
\label{thm:weight-concentration}
If the working linear model is correctly specified and
Assumptions~\ref{assump:1}--\ref{assump:2}, \ref{assump:4}, and
\ref{assump:7}--\ref{assump:9} hold, then, as $n\to\infty$,
\begin{equation}
\label{eq:weight-concentration-main}
\sum_{m\notin\mathcal N_n( a_n)}
\widehat w_m
\xrightarrow{p}
0.
\end{equation}
Moreover, if there exists a constant $c>0$ such that $\lambda_2\ge c$ for all
sufficiently large $n$, then
\begin{equation}
\label{eq:weight-concentration-ols}
\widehat w_1
\xrightarrow{p}
1.
\end{equation}
\end{theorem}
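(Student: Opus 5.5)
The plan is to compare $C(\boldsymbol w)$ with its value at the OLS endpoint and show that, under correct specification, putting mass away from $\mathcal N_n(a_n)$ raises the criterion by a penalty of order $\phi_n\eta_n/n$. This penalty dominates everything else. Under correct specification $\boldsymbol y=\boldsymbol H\boldsymbol\theta_0+\boldsymbol\varepsilon$ and $\boldsymbol L_m\boldsymbol H=\boldsymbol I_p$. Hence $\widehat{\boldsymbol\theta}_m-\boldsymbol\theta_0=\boldsymbol L_m\boldsymbol\varepsilon$, and the quadratic part of $C$ is $\boldsymbol\varepsilon^\top\{\boldsymbol L(\boldsymbol w)-\boldsymbol L_{M_n}\}^\top\widehat{\boldsymbol\Sigma}^{T}\{\boldsymbol L(\boldsymbol w)-\boldsymbol L_{M_n}\}\boldsymbol\varepsilon$. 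This is $O_p(M_n/n)$ uniformly, because it is bounded by $\max_m\|\boldsymbol L_m\boldsymbol\varepsilon\|^2$. The penalty part is linear in $\boldsymbol w$: $\phi_n\widehat\sigma^2\sum_m w_m\tau_m$.

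The first key step is to show that $n\tau_m\to\nu(\lambda_m)$ uniformly in $m$. With true weights, $\tau_m=\operatorname{tr}\{\widehat{\boldsymbol\Sigma}^{T}(\boldsymbol H^\top\boldsymbol\Delta^{\lambda_m}\boldsymbol H)^{-1}\boldsymbol H^\top\boldsymbol\Delta^{\lambda_m+1}\boldsymbol H(\boldsymbol H^\top\boldsymbol\Delta\boldsymbol H)^{-1}\}$. By the LLN (Assumptions~\ref{assump:4} and \ref{assump:7}), $n_s^{-1}\boldsymbol H^\top\boldsymbol\Delta^{\lambda}\boldsymbol H\to\boldsymbol\Sigma(\lambda)$, $n_s^{-1}\boldsymbol H^\top\boldsymbol\Delta^{1+\lambda}\boldsymbol H\to\boldsymbol\Gamma(\lambda)$, and $n_s^{-1}\boldsymbol H^\top\boldsymbol\Delta\boldsymbol H\to\boldsymbol\Sigma(1)=\boldsymbol\Sigma^{T}$. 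Therefore $n_s\tau_m\to\operatorname{tr}\{\boldsymbol\Sigma^T\boldsymbol\Sigma(\lambda)^{-1}\boldsymbol\Gamma(\lambda)\boldsymbol\Sigma^{T,-1}\}=\nu(\lambda)$. Uniformity over $\lambda\in[0,1]$ follows from writing $\delta^\lambda=\exp(\lambda\log\delta)$ and using monotonicity/bracketing in $\lambda$ (a Glivenko--Cantelli class). Points with $\delta=0$ need separate treatment: these have $\delta^\lambda$ discontinuous at $\lambda=0$, but since the grid is fixed and $\delta^0=1$, only the finitely-many-bracket argument needs care. The rate is $O_p(n^{-1/2})$. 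Replacing $\delta$ by $\widehat\delta$ costs $O_p(\kappa_n)$ in each entry, using Assumption~\ref{assump:1} to invert. So $n\tau_m=\nu(\lambda_m)+O_p(n^{-1/2})+O_p(\kappa_n)$ uniformly, and $\widehat\sigma^2\to\sigma^2$.

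Next I compare an arbitrary $\boldsymbol w$ with its projection $\widetilde{\boldsymbol w}$, which moves the mass $s=\sum_{m\notin\mathcal N_n}w_m$ onto $m=1$. By Assumption~\ref{assump:8}, the penalty difference satisfies $n\phi_n\widehat\sigma^2\sum_m(w_m-\widetilde w_m)\tau_m\ge s\phi_n\sigma^2\{\eta_n-o_p(\eta_n)\}$. Here the uniform errors are $o_p(\eta_n)$ by Assumption~\ref{assump:9}, since $n^{-1/2}=o(\eta_n/M_n)$. The quadratic parts differ by at most a term of order $s$ times $n\max_{m,k}|\boldsymbol\varepsilon^\top(\boldsymbol L_m-\boldsymbol L_{M_n})^\top\widehat{\boldsymbol\Sigma}^T(\boldsymbol L_k-\boldsymbol L_{M_n})\boldsymbol\varepsilon|$. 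This is $O_p(M_n)$ after a union/second-moment bound using Assumption~\ref{assump:2}. The cruder bound $\sum_{m,k}$ Chebyshev gives $M_n$ through the Frobenius norm; a better route is $\|\sum_m a_m\boldsymbol L_m\boldsymbol\varepsilon\|\le\max_m\|\boldsymbol L_m\boldsymbol\varepsilon\|$ together with $E\max_m\le\sum_m$. Thus
\[
n\{C(\boldsymbol w)-C(\widetilde{\boldsymbol w})\}\ge s\bigl[\phi_n\sigma^2\eta_n\{1-o_p(1)\}-O_p(M_n)\bigr].
\]
The bracket is positive w.p.\ $\to1$ provided $M_n=o(\phi_n\eta_n)$. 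This is the main obstacle: Assumption~\ref{assump:9} gives $\phi_n\eta_n\to\infty$ and $M_n=o(n^{1/2}\eta_n)$, not directly $M_n=o(\phi_n\eta_n)$. I would therefore refine the quadratic bound. Because $\boldsymbol L_m$ depends smoothly on $\lambda_m$, the differences $\boldsymbol L_m\boldsymbol\varepsilon-\boldsymbol L_1\boldsymbol\varepsilon$ form a process in $\lambda$ with $\sup_\lambda\sqrt n\|\cdot\|=O_p(1)$. This holds by a chaining/Lipschitz argument in $\lambda$ and requires that $\delta$ be bounded below on relevant regions, or else handled via the bound $\delta^\lambda\log\delta$. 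The quadratic difference is then $O_p(1)/n$ uniformly, so only $\phi_n\eta_n\to\infty$ is needed. Since $\widehat{\boldsymbol w}$ minimizes $C$, $C(\widehat{\boldsymbol w})\le C(\widetilde{\widehat{\boldsymbol w}})$ forces $\widehat s\,[\phi_n\sigma^2\eta_n(1-o_p(1))-O_p(1)]\le0$. Hence $\widehat s\to0$ in probability.

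For the second claim, if $\lambda_2\ge c$ then, since $a_n\to0$, eventually $\mathcal N_n(a_n)=\{1\}$. Thus $\sum_{m\ne1}\widehat w_m\to0$, i.e.\ $\widehat w_1\to1$ in probability.
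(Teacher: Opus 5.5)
You have a genuine gap. The decomposition $C(\boldsymbol w)=Q_n(\boldsymbol w)+\phi_n\widehat\sigma^2\sum_m w_m\tau_m$ and the uniform expansion $n_s\tau_m=\nu(\lambda_m)+o_p(\eta_n)$ are fine. For the latter you do not need Glivenko--Cantelli, which gives no rate anyway. The entrywise Chebyshev and union bound gives $O_p(M_n/\sqrt{n}+\kappa_n)$, and this is $o_p(\eta_n)$ by Assumption~9.

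The argument breaks where you compare $\boldsymbol w$ with its projection $\widetilde{\boldsymbol w}$. As you note, the union bound only gives $|Q_n(\boldsymbol w)-Q_n(\widetilde{\boldsymbol w})|\le s\,O_p(M_n)/n$. Closing the argument from there needs $M_n=o(\phi_n\eta_n)$, which Assumption~9 does not give.

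Your proposed repair is a chaining bound $\sup_\lambda\sqrt{n}\|(\boldsymbol L_\lambda-\boldsymbol L_1)\boldsymbol\varepsilon\|_2=O_p(1)$. It is not proved, and it needs conditions outside the hypotheses. Under Assumption~4, $\delta$ may vanish or be arbitrarily small with positive source probability. Then $\lambda\mapsto\delta^\lambda$ jumps at $0$, and $\sup_{0<\delta\le\overline\delta}|\delta^\lambda\log\delta|\ge 1/(e\lambda)$ blows up as $\lambda\downarrow0$. Moreover, the candidates use $\widehat\delta$. The quantity $\kappa_n$ controls $\widehat\delta^{\lambda_m}-\delta^{\lambda_m}$ only at the grid points, not its regularity in $\lambda$. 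So no chaining in $\lambda$ is available for the feasible $\boldsymbol L_m$. As written, the proof does not go through under the stated assumptions.

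The missing idea is that you need no uniform control of the quadratic part at all. Compare $\widehat{\boldsymbol w}$ with the OLS vertex $\boldsymbol w^{(1)}$ rather than with the projection, and simply discard $Q_n(\widehat{\boldsymbol w})\ge 0$:
\[
0\ \ge\ C(\widehat{\boldsymbol w})-C(\boldsymbol w^{(1)})\ \ge\ -Q_n(\boldsymbol w^{(1)})+\phi_n\widehat\sigma^2\sum_{m=1}^{M_n}\widehat w_m(\tau_m-\tau_1).
\]
Here $Q_n(\boldsymbol w^{(1)})=\boldsymbol\varepsilon^\top(\boldsymbol L_1-\boldsymbol L_{M_n})^\top\widehat{\boldsymbol\Sigma}^{T}(\boldsymbol L_1-\boldsymbol L_{M_n})\boldsymbol\varepsilon$ is a single quadratic form. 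By Markov's inequality it is $O_p(n_s^{-1})$, and $M_n$ never enters.

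Because the vertex also moves the mass sitting inside $\mathcal N_n(a_n)$, you need the full Assumption~8, namely $\nu(\lambda_m)\ge\nu(0)$ for every $m$, and not only the gap $\eta_n$ outside $\mathcal N_n(a_n)$. This gives $n_s\sum_m\widehat w_m(\tau_m-\tau_1)\ge\eta_n\widehat s+o_p(\eta_n)$, where $\widehat s=\sum_{m\notin\mathcal N_n(a_n)}\widehat w_m$. Multiplying through by $n_s/(\phi_n\eta_n)$ and using $\widehat\sigma^2\xrightarrow{p}\sigma^2>0$ yields $\sigma^2\widehat s\le O_p\{(\phi_n\eta_n)^{-1}\}+o_p(1)$. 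So only $\phi_n\eta_n\to\infty$ is used.

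Compared with your (unproved) route, you lose the stronger conclusion $\widehat s=0$ with probability tending to one. You get only $\widehat s=o_p(1)$, which is exactly what the theorem asserts. Your argument for the second claim, that $\mathcal N_n(a_n)=\{1\}$ eventually, is correct.
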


Theorem~\ref{thm:weight-concentration} shows that, under correct
specification, the AIWMA weights concentrate on candidate exponents lying
in a shrinking neighborhood of the OLS endpoint $\lambda=0$.
Assumptions~\ref{assump:1}, \ref{assump:4}, and \ref{assump:7} imply
that $\nu(\lambda)$ is uniformly bounded over $\lambda\in[0,1]$, and
hence $\eta_n=O(1)$. Therefore, the condition
$\phi_n\eta_n\to\infty$ in Assumption~\ref{assump:9} necessarily
requires $\phi_n\to\infty$, thereby ruling out fixed choices such as
$\phi_n=2$. Thus, when the working linear model is correctly specified,
the diverging penalty prevents unnecessary importance-weighting
correction. In particular, if the nonzero candidate exponents are uniformly bounded away from zero, the result implies weight consistency for the OLS endpoint.
\section{Simulation studies}
\label{sec:Simulation-studies}

This section examines the finite-sample performance of the proposed AIWMA
procedure under covariate shift. 
We consider two candidate
exponent grids. The fixed grid is $\Lambda=\{0,0.1,\dots,0.9,1\}$, 
which yields $M=11$ candidate single-exponent AIWLS estimators. We also consider the diverging grid 
$\Lambda_n=\{\lambda_m=(m-1)/(M_n-1):m=1,\ldots,M_n\}$, where $M_n=\lceil 5n_s^{1/3}\rceil$ and $\lceil c\rceil$ denotes the smallest integer greater than or equal to c, to examine the robustness of our conclusions to the use of a denser candidate set.
Unless otherwise stated, density ratios are estimated by KLIEP using Gaussian kernels, with the bandwidth selected by five-fold cross-validation \citep{Sugiyamaetal2008}. The Supplementary Material provides implementation details and results for alternative density-ratio estimators. Because the true density ratio is known in the simulation designs, we also report oracle results to assess sensitivity to density-ratio estimation and the choice of estimator.

\subsection{Prediction performance under model misspecification}
\label{subsec:sim-prediction}

We first examine the predictive performance of AIWMA under the target
distribution when the working model is misspecified. 
The competing methods include the proposed AIWMA procedures with
$\phi_n=2$ and $\phi_n=\log n$ in the weight-selection criterion, denoted by
AIWMA-2 and AIWMA-$\log n$, respectively;  ordinary least squares
(OLS); standard importance-weighted least
squares (IWLS); the AIC-type
exponent-selection rule of \citet{Shimodaira2000} (AIW-AIC); its BIC-type
counterpart, obtained by replacing the AIC penalty coefficient with $\log n$
(AIW-BIC); the importance-weighted cross-validation rule of
\citet{SugiyamaKrauledatMuller2007} (AIW-CV); and two smooth
information-criterion averaging methods, denoted by AIW-SAIC and AIW-SBIC.
Specifically, these two averaging methods assign the normalized weight
$\exp(-\mathrm{IC}_m/2)\big/
\sum_{j=1}^{M_n}\exp(-\mathrm{IC}_j/2)$
to the $m$th candidate AIWLS estimator, where $\mathrm{IC}_m$ is the
corresponding AIC- or BIC-type criterion value for the candidate exponent
$\lambda_m$. 
We also include a clipped IWLS variant designed to mitigate variance inflation caused by extreme density-ratio values \citep{Gogolashvilietal2023}. In our implementation, the estimated density ratios are clipped at their empirical 90th percentile; the resulting method is denoted by IWLS-Clip.

For $\boldsymbol x=(x_1,\ldots,x_6)^\top$, we generate the response as  
\begin{equation*}
y
=
f(\boldsymbol x)+\varepsilon
=
1+x_1+0.8x_2+0.6x_3+0.4x_6
+
0.15(x_1^3+x_2^3)
+
0.1x_4x_5+\varepsilon,
\end{equation*}
where $\varepsilon\sim N(0,\sigma^2)$. 
The working model uses only the linear basis
$\boldsymbol h(\boldsymbol x)=(1,x_1,\ldots,x_6)^\top$ and is therefore
misspecified, as it omits the cubic and interaction terms in
$f(\boldsymbol x)$. 
The source covariate distribution is
$\rho_{\mathcal X}^{S}=N(\boldsymbol 0,1.5^2\boldsymbol I_6)$, whereas the
target covariate distribution is
$\rho_{\mathcal X}^{T}=N(\zeta\boldsymbol a,\boldsymbol I_6)$, with
$\boldsymbol a=(1,1,0,0,0,0)^\top$. The parameter $\zeta$ controls the
separation between the source and target covariate means, with larger values
corresponding to more severe covariate shift. We consider
$\zeta\in\{1.2,1.5\}$. 
For each value of $\zeta$, we calibrate $\sigma^2$ so that the
target-distribution coefficient of determination,
$R^2=\operatorname{Var}_{\boldsymbol{x}\sim\rho_{\mathcal X}^{T}}
\{f(\boldsymbol{x})\}/
[\operatorname{Var}_{\boldsymbol{x}\sim\rho_{\mathcal X}^{T}}
\{f(\boldsymbol{x})\}+\sigma^2]$,
takes each value in $\{0.1,0.2,\ldots,0.9\}$.
In each replication we generate an i.i.d.\ labeled
source sample
$\{(\boldsymbol x_i,y_i)\}_{i=1}^{n_s}
\sim\rho_{\mathcal X\times\mathcal Y}^{S}$,
with $n_s\in\{50,100,150,200\}$, together with an i.i.d.\ unlabeled target
covariate sample
$\{\boldsymbol x_i'\}_{i=1}^{n_t}\sim\rho_{\mathcal X}^{T}$
of size $n_t=1000$. All methods are fitted using the labeled source sample,
with the unlabeled target covariates additionally used, when required, to
estimate the density ratio and construct the weight-selection criterion. For
evaluation, we generate an independent target covariate sample
$\{\boldsymbol z_j\}_{j=1}^{n_{\mathrm{eval}}}$
from $\rho_{\mathcal X}^{T}$, with $n_{\mathrm{eval}}=1000$. Prediction
performance is assessed by the empirical target excess prediction error 
\begin{equation*}
\label{eq:sim-epr}
\mathrm{TEPE}(\widehat f)
=
\frac{1}{n_{\mathrm{eval}}}
\sum_{j=1}^{n_{\mathrm{eval}}}
\left\{
\widehat f(\boldsymbol z_j)-f(\boldsymbol z_j)
\right\}^{2},
\end{equation*}
where $f$ is the true regression function and $\widehat f(\boldsymbol z_j)$ is the prediction at the $j$th target evaluation point produced by the method being evaluated. All reported results are averaged over $B=1000$ Monte Carlo
replications. We normalize the empirical TEPE of each method by that of
AIWMA-$\log n$, so that a value below one indicates better prediction
performance than AIWMA-$\log n$, whereas a value above one indicates worse
performance. 

\begin{figure}[htbp]
\centering
\includegraphics[
width=0.90\linewidth,
height=0.90\textheight,
keepaspectratio
]{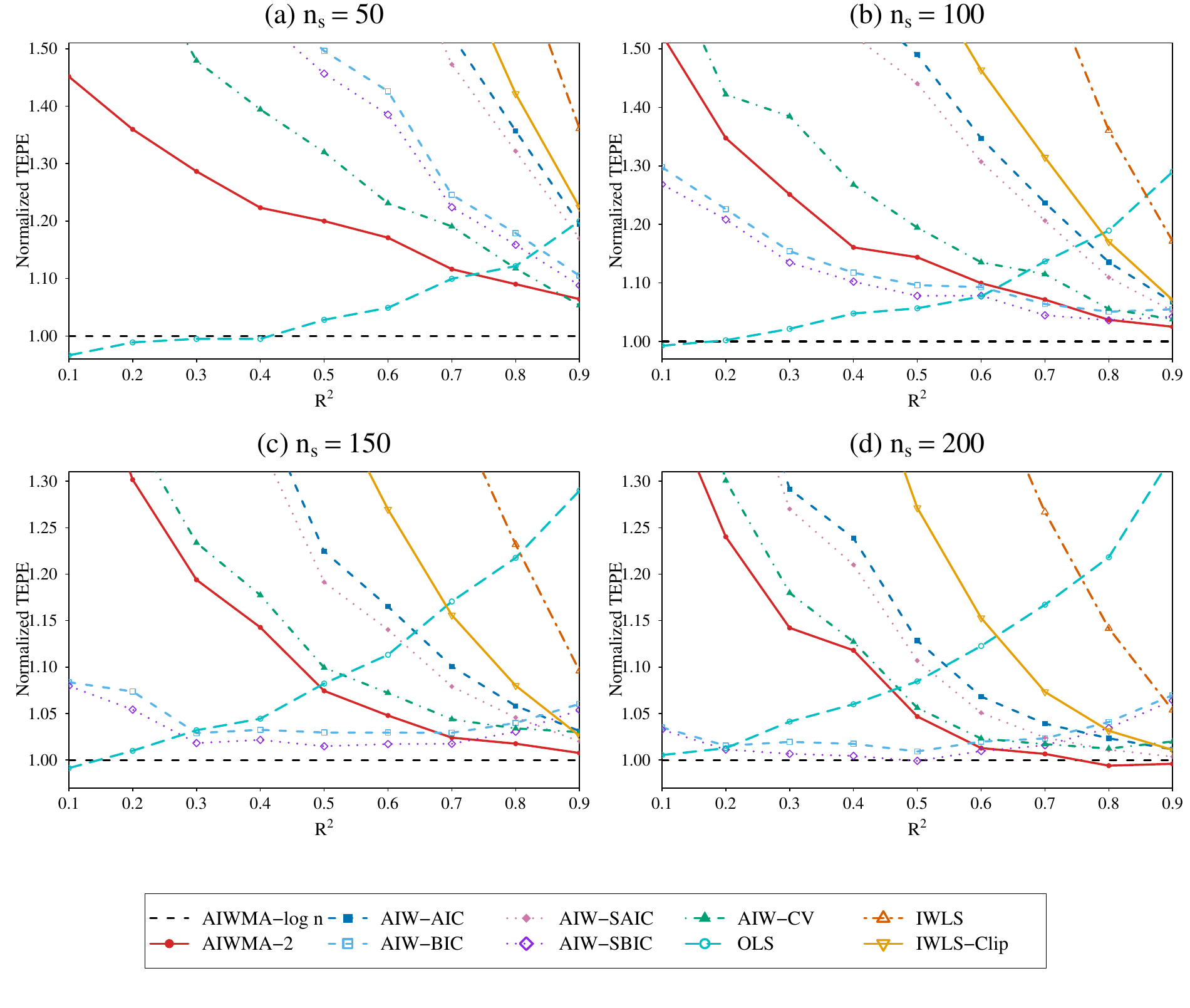}
\caption{Normalized target excess prediction error (TEPE) relative to
AIWMA-$\log n$ in the misspecification design, using KLIEP
density-ratio estimation and a fixed candidate grid with $\zeta=1.2$.}
\label{fig:multi-kliep-fixed-zeta12}

\par\smallskip
{\small
\noindent\textbf{Alt text:}
Four panels compare ten prediction methods over target $R^2$ values from
0.1 to 0.9 for source sample sizes $n_s=50$, 100, 150, and 200.
The vertical axis is the target excess prediction error normalized by that
of AIWMA-$\log n$; hence, the value for AIWMA-$\log n$ equals one by
construction, and smaller values indicate better relative performance.
AIWMA-$\log n$ remains competitive across the considered settings.
OLS is particularly competitive when the target $R^2$ and source sample
size are small, whereas methods applying stronger importance-weighting
correction become more competitive as the target $R^2$ increases.
\par}
\end{figure}

\begin{figure}[htbp]
\centering
\includegraphics[
width=0.90\linewidth,
height=0.90\textheight,
keepaspectratio
]{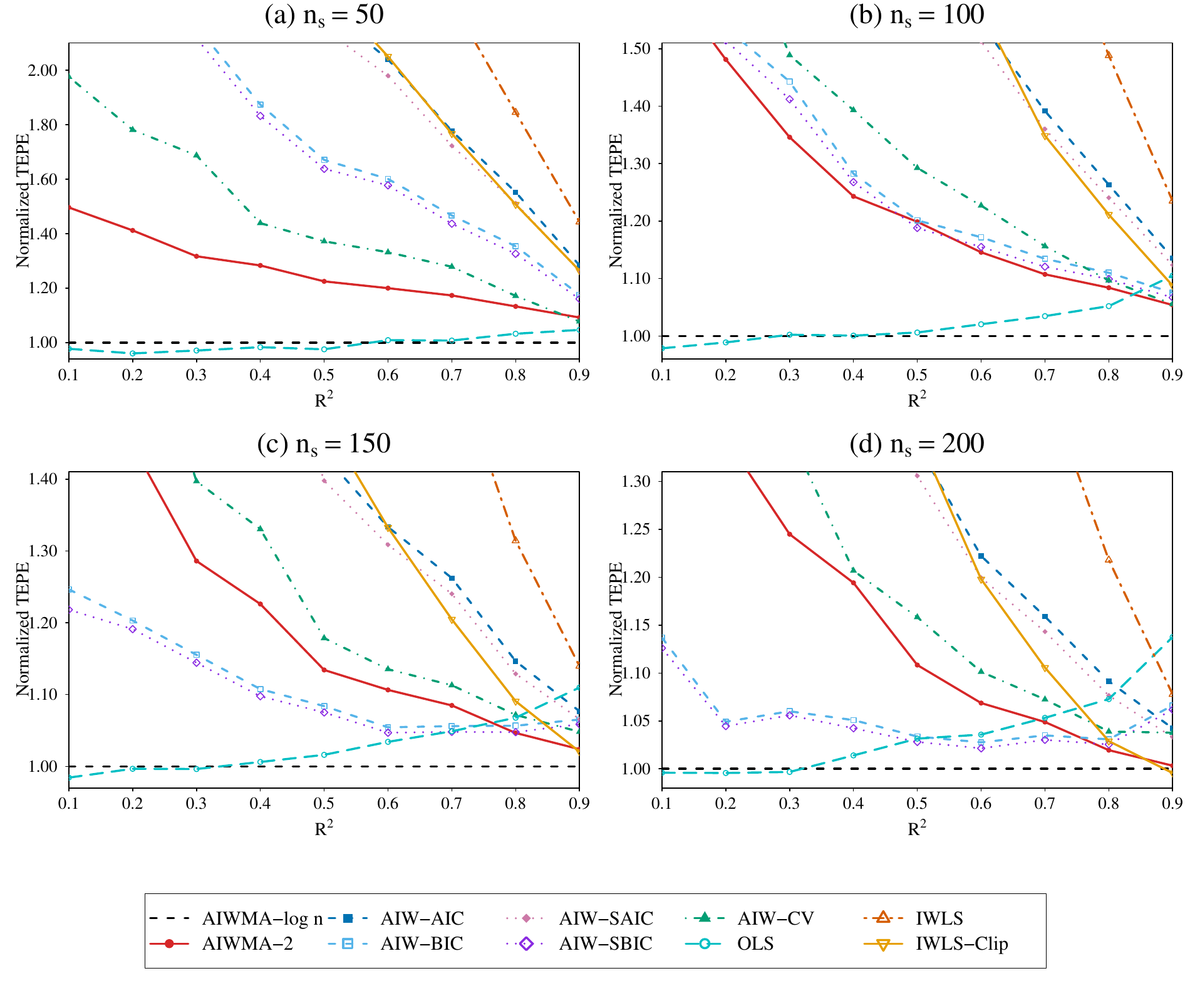}
\caption{Normalized target excess prediction error (TEPE) relative to
AIWMA-$\log n$ in the misspecification design, using KLIEP
density-ratio estimation and a fixed candidate grid with $\zeta=1.5$.}
\label{fig:multi-kliep-fixed-zeta15}

\par\smallskip
{\small
\noindent\textbf{Alt text:}
Four panels compare ten prediction methods over target $R^2$ values from
0.1 to 0.9 for source sample sizes $n_s=50$, 100, 150, and 200.
The vertical axis is the target excess prediction error normalized by that
of AIWMA-$\log n$; hence, the value for AIWMA-$\log n$ equals one by
construction, and smaller values indicate better relative performance.
AIWMA-$\log n$ remains competitive across the considered settings.
OLS is particularly competitive when the target $R^2$ and source sample
size are small, whereas methods applying stronger importance-weighting
correction become more competitive as the target $R^2$ increases. 
\par}
\end{figure}

Figures~\ref{fig:multi-kliep-fixed-zeta12}--\ref{fig:multi-kliep-fixed-zeta15}
present the results for the misspecification design using
KLIEP density-ratio estimation and the fixed candidate grid.
Across the considered source sample sizes, target $R^2$ values, and shift
strengths, AIWMA-$\log n$ remains competitive relative to the alternative
methods. OLS is particularly competitive when $R^2$ and $n_s$ are small.
In this low signal-to-noise regime, the variance reduction achieved by
avoiding density-ratio weighting can outweigh the bias reduction obtained by
adjusting for the target covariate distribution.
As $R^2$ increases, however, the response noise decreases, making the
approximation bias of the unweighted working model more consequential.
Consequently, the relative performance of OLS deteriorates because it does
not adjust for the target covariate distribution.
Candidates with larger exponents, by contrast, become more competitive
because stronger correction toward the target distribution becomes more
valuable relative to its associated variance cost.
By averaging across candidate exponents, AIWMA-$\log n$ adapts to this
transition without committing to a single correction level and remains
competitive throughout the considered range of target $R^2$ values.
The corresponding diverging-grid results, reported in the Supplementary
Material, are qualitatively similar, indicating that the main conclusions
are not driven by a particular discretization of the exponent interval.
The Supplementary Material also reports results based on alternative
density-ratio estimators, the oracle density ratio, and a one-dimensional
illustrative example, all of which lead to broadly similar conclusions. 

\subsection{Weight concentration under correct specification}
\label{subsec:sim-weight-consistency}

We next examine whether the AIWMA criterion assigns increasing weight to the
OLS endpoint when the working model is correctly specified. The data-generating
process is
\begin{equation*}
y
=
f(\boldsymbol x)+\varepsilon
=
1+1.5x_1-x_2+\varepsilon,
\qquad
\varepsilon\sim N(0,\sigma^2).
\end{equation*}
The working basis is correctly specified as
$\boldsymbol h(\boldsymbol x)=(1,x_1,x_2)^\top$. 
The source covariates are generated from
$\rho_{\mathcal X}^{S}
=
N(\boldsymbol 0,1.5^2\boldsymbol I_2)$,
whereas the target covariates are generated from
$\rho_{\mathcal X}^{T}
=
N(\boldsymbol\mu,\boldsymbol I_2)$. We set
$\boldsymbol\mu=\zeta(1,1)^\top$ with
$\zeta\in\{0.5,1.0,1.5\}$, so that larger values of $\zeta$ correspond to
larger separation between the source and target covariate means.
To vary the signal-to-noise level, we consider target-distribution
coefficients of determination $R^2\in\{0.3,0.5,0.7\}$ and, for each
specified value of $R^2$, choose the error variance accordingly, as in the
misspecification experiments.  
The labeled source sample sizes are
$n_s\in\{25,100,200,400,800,1600\}$, and the unlabeled target covariate sample
size is fixed at $n_t=2000$.
Motivated by Theorem~\ref{thm:weight-concentration}, we focus on the AIWMA
criterion with $\phi_n=\log n$. We report results using
KLIEP density-ratio estimation under both the fixed and diverging candidate
exponent grids. All results are averaged over $B=1000$
Monte Carlo replications. The primary diagnostic is the average weight assigned
to the OLS endpoint, denoted by $w_{\lambda=0}$.

\begin{figure}[htbp]
\centering
\includegraphics[
width=0.90\linewidth,
height=0.90\textheight,
keepaspectratio
]{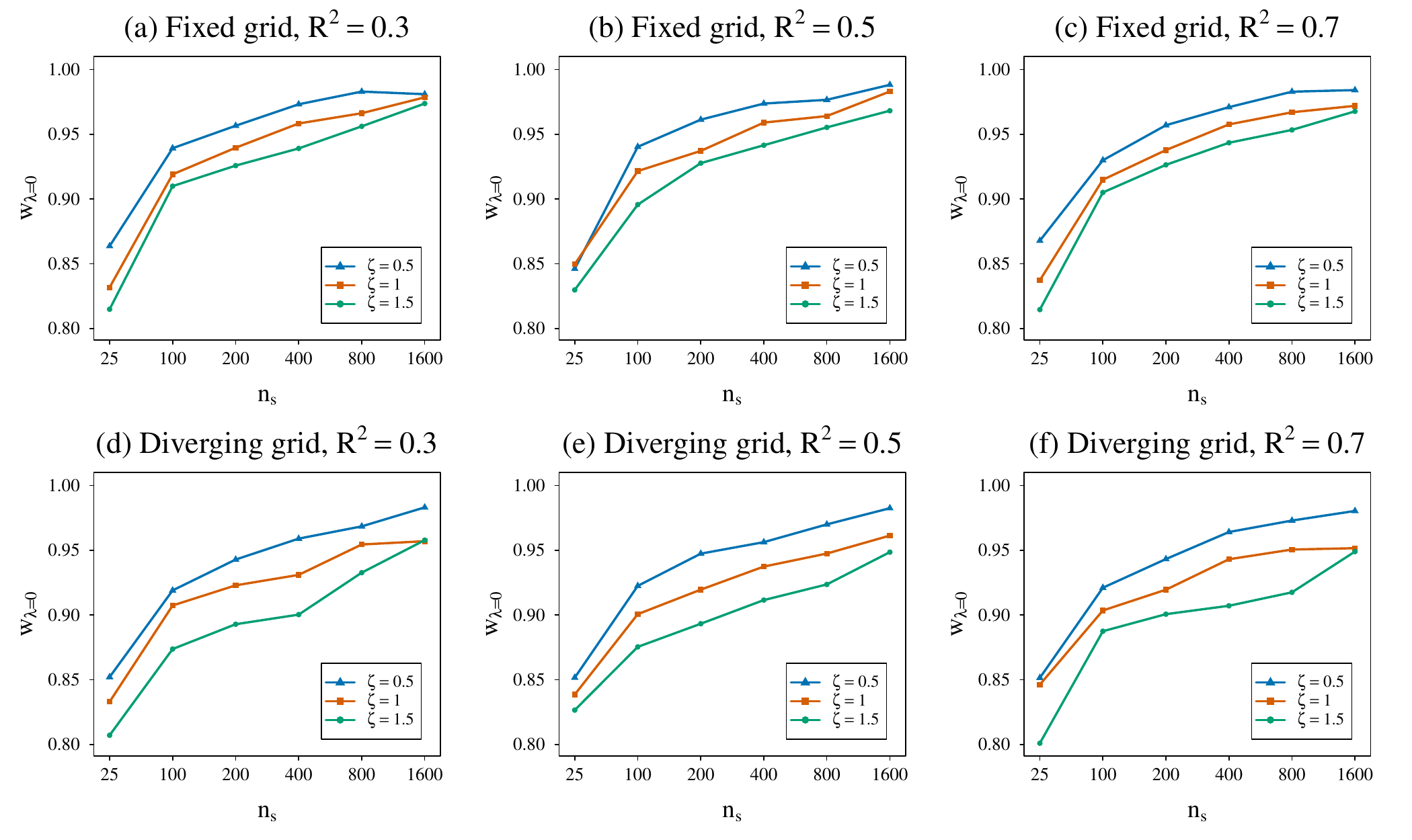}
\caption{Average estimated weight assigned to the OLS endpoint under correct
specification using KLIEP density-ratio estimation. The upper and lower rows
correspond to the fixed and diverging candidate exponent grids, respectively.}
\label{fig:weight-concentration-kliep}

\par\smallskip
\begin{minipage}{0.90\linewidth}
\small
\raggedright
\noindent\textbf{Alt text:}
Six panels show the average estimated weight assigned to the OLS endpoint as
the source sample size increases from 25 to 1600. The three columns correspond
to target $R^2$ values of 0.3, 0.5, and 0.7, while the upper and lower rows
correspond to the fixed and diverging candidate exponent grids. Each panel
contains curves for shift parameters $\zeta=0.5$, 1.0, and 1.5. The average
OLS weight generally increases toward one with the source sample size.
Concentration is stronger under milder shifts, while the diverging grid assigns
slightly less weight exactly to the OLS endpoint because some weight is
distributed among exponents close to zero.
\end{minipage}
\end{figure}

Figure~\ref{fig:weight-concentration-kliep} shows a pattern of weight concentration near the OLS endpoint. Under the fixed candidate exponent grid,
the average OLS weight generally increases with $n_s$ across all three shift
configurations and moves toward one as the source sample size grows. The concentration is stronger under the milder shift and weaker under the stronger shift, although the gap narrows as $n_s$ increases. One possible explanation is that stronger shifts make the IWLS endpoint less stable, which may reduce the accuracy of the weight-selection criterion; larger source sample sizes are then needed to compensate for this loss. 
Across the three signal-to-noise settings considered, $w_{\lambda=0}$
consistently increases with $n_s$ and moves toward one. 
The diverging-grid results show the same qualitative pattern. Relative to the fixed-grid case, the average weight assigned exactly to
$\lambda=0$ is slightly smaller for the same shift level, target $R^2$ level,
and source sample size, because the denser exponent grid allows weight to be
distributed among candidates close to the OLS endpoint. Nevertheless, $w_{\lambda=0}$ still increases with $n_s$ and remains close to one for large sample sizes. Additional results using
alternative density-ratio estimators and the oracle density ratio are reported in the Supplementary
Material and lead to the same qualitative conclusions. Taken together, the simulations support the two roles of AIWMA. Under
misspecification, averaging over correction levels stabilizes target prediction
relative to selecting a single exponent or using standard  importance weighting.
Under correct specification, the diverging-penalty version increasingly favors
the OLS endpoint, reducing unnecessary variance inflation from importance
weighting.

\section{Real-data analysis}
\label{sec:Real-data-analysis}

We further evaluate AIWMA using U.S. Natality public-use files from the National Vital Statistics System, maintained by the National Center for Health Statistics \citep{AbrevayaDahl2008,GrunebaumChervenak2026}. These annual files contain individual-level information derived from U.S. birth certificates, including infant birth outcomes, maternal demographic characteristics, health behaviors, prenatal-care utilization, pregnancy-related risk factors, and infant characteristics. 
We use infant birth weight, measured in kilograms, as the response. The working covariates include maternal age, body mass index, number of prenatal visits, numbers of cigarettes smoked before pregnancy and during the first trimester, gestational weight gain, the obstetric estimate of gestational age, maternal education, race or ethnicity, maternal nativity, marital status, parity, timing of prenatal-care initiation, participation in the Special Supplemental Nutrition Program for Women, Infants, and Children, pre-pregnancy diabetes, gestational diabetes, pre-pregnancy hypertension, gestational hypertension, previous preterm birth, and infant sex. We adopt a linear working model with an intercept, $\boldsymbol h(\boldsymbol x)=(1,\boldsymbol x^\top)^\top$, where $\boldsymbol x$ denotes the vector of standardized covariates listed above. To construct a temporal source--target split, we treat the 2016 records as the source population and the 2024 records as the target population. Further details on data preprocessing and covariate construction are provided in the Supplementary Material.

In each replication, we draw $n_s\in \{50,100,150,200\}$ labeled source observations, $n_t=1000$ unlabeled target observations, and an independent labeled target evaluation sample of size $n_{\mathrm{eval}}=1000$. Results are averaged over $B=1000$ replications. The density-ratio estimation methods, candidate exponent grids, and competing procedures are the same as those used in Section~\ref{sec:Simulation-studies}. Because the true regression function is unavailable, predictive accuracy is evaluated using the empirical target prediction error defined by 
\begin{equation*}
\label{eq:natality-tpe}
\mathrm{TPE}^{(\ell)}(g)
=
\frac{1}{n_{\mathrm{eval}}}
\sum_{j=1}^{n_{\mathrm{eval}}}
\left\{
y_j^{(\ell)}
-
\widehat f_g^{(\ell)}
\bigl(\boldsymbol x_j^{(\ell)}\bigr)
\right\}^{2},
\end{equation*}
where
$\{(\boldsymbol x_j^{(\ell)},y_j^{(\ell)})\}_{j=1}^{n_{\mathrm{eval}}}$
is the labeled target evaluation sample in replication $\ell$, and
$\widehat f_g^{(\ell)}$ is the prediction rule fitted by method $g$. 
We report the average TPE gap relative to the best-performing method in each
replication:
\begin{equation*}
\label{eq:natality-delta-tpe}
\Delta\mathrm{TPE}(g)
=
\frac{1}{B}
\sum_{\ell=1}^{B}
\left[
\mathrm{TPE}^{(\ell)}(g)
-
\min_{g'\in\mathcal G}
\mathrm{TPE}^{(\ell)}(g')
\right],
\end{equation*} 
where $\mathcal G$ denotes the collection of competing methods. Smaller values
indicate better target prediction performance, with values close to zero
indicating that a method performs nearly as well as the best procedure on
average across replications. 
Table~\ref{tab:delta-tpe-kliep-natality} shows that AIWMA-$\log n$
attains the smallest average $\Delta\mathrm{TPE}$ across all reported source sample sizes under both candidate grids. Consistent with the simulation results, AIWMA-$\log n$
therefore exhibits favorable target prediction performance under covariate
shift.

\begin{table}[htbp]
\caption{Values of $10^2\times\Delta\mathrm{TPE}$ based on KLIEP
density-ratio estimation under the fixed and diverging candidate grids.
Within each row, boldface and underlining indicate the smallest and
second-smallest values, respectively.}
\label{tab:delta-tpe-kliep-natality}
\centering
\footnotesize
\setlength{\tabcolsep}{2.8pt}
\renewcommand{\arraystretch}{1.08}

\begin{tabular*}{\textwidth}{
  @{\extracolsep{\fill}}
  l
  *{10}{r}
  @{}
}
\toprule
&
\multicolumn{10}{c}{Method}
\\
\cmidrule(lr){2-11}
$n_s$
& AIWMA-$\log n$
& AIWMA-2
& AIW-AIC
& AIW-BIC
& AIW-SAIC
& AIW-SBIC
& AIW-CV
& OLS
& IWLS
& IWLS-Clip
\\
\midrule

\multicolumn{11}{@{}l}{\textit{Panel A: Fixed grid}}
\\[2pt]

50
& \textbf{1.0310}
& \underline{1.0672}
& 1.2711
& 1.2594
& 1.1487
& 1.1610
& 1.0996
& 1.1170
& 1.4232
& 1.4205
\\

100
& \textbf{0.2170}
& 0.2366
& 0.2770
& 0.2630
& 0.2518
& 0.2516
& \underline{0.2341}
& 0.2608
& 0.3494
& 0.3474
\\

150
& \textbf{0.0774}
& \underline{0.0820}
& 0.1195
& 0.1069
& 0.1109
& 0.1054
& 0.1052
& 0.1134
& 0.1829
& 0.1821
\\

200
& \textbf{0.0506}
& \underline{0.0533}
& 0.0719
& 0.0678
& 0.0637
& 0.0630
& 0.0704
& 0.0876
& 0.0986
& 0.0968
\\

\addlinespace[4pt]
\multicolumn{11}{@{}l}{\textit{Panel B: Diverging grid}}
\\[2pt]

50
& \textbf{1.0328}
& \underline{1.0660}
& 1.2723
& 1.2419
& 1.1429
& 1.1552
& 1.1041
& 1.1176
& 1.4238
& 1.4212
\\

100
& \textbf{0.2170}
& 0.2365
& 0.2784
& 0.2646
& 0.2509
& 0.2508
& \underline{0.2345}
& 0.2612
& 0.3499
& 0.3478
\\

150
& \textbf{0.0774}
& \underline{0.0820}
& 0.1198
& 0.1064
& 0.1107
& 0.1055
& 0.1049
& 0.1134
& 0.1830
& 0.1822
\\

200
& \textbf{0.0504}
& \underline{0.0535}
& 0.0721
& 0.0674
& 0.0639
& 0.0631
& 0.0705
& 0.0879
& 0.0989
& 0.0960
\\

\bottomrule
\end{tabular*}
\end{table}

\section{Concluding remarks}
\label{sec:Conclusion-Remarks}

This paper develops an adaptive importance-weighted model-averaging method for
prediction under covariate shift. By averaging candidate AIWLS estimators
associated with different degrees of
importance-weighting correction, the proposed method provides
a data-driven compromise between the stability of unweighted estimation and
the bias correction afforded by importance weighting. Theoretical results
establish asymptotic optimality under model misspecification and concentration
of the averaging weights near the OLS endpoint under correct specification. Simulation studies and a real-data analysis provide
empirical support for these theoretical results and illustrate the competitive 
finite-sample performance of the proposed method.

The present results also lay the groundwork for several directions for future research. The current analysis is restricted to squared-error loss and fixed-dimensional
linear prediction models. Extending the proposed framework to more general loss
functions, regularized estimators, and flexible prediction methods would require
new risk expansions and corresponding criteria for determining the averaging
weights. We also focus on a single-source, single-target setting in which no
labeled target observations are available. Extensions to partially labeled
target samples and multiple source populations with heterogeneous relevance to
the target distribution warrant further investigation. Finally, the candidate
family considered here varies only the exponent applied to the density-ratio
correction. A broader model-averaging framework could additionally account for
uncertainty in basis complexity, regularization parameters, and density-ratio estimation. 




\section*{Supplementary Material}
\label{sec:Supplementary-Material}

The Supplementary Material, included after the References in this arXiv version, contains an illustrative example of the bias--variance trade-off induced by the correction exponent, further discussion of the regularity conditions, proofs of the theoretical results, implementation details for the numerical studies, additional simulation results, and data-processing details for the real-data analysis.




\bibliographystyle{plainnat}
\bibliography{reference}

\clearpage

\setcounter{section}{0}
\setcounter{subsection}{0}
\setcounter{equation}{0}
\setcounter{figure}{0}
\setcounter{table}{0}

\renewcommand{\thesection}{S\arabic{section}}
\renewcommand{\thesubsection}{\thesection.\arabic{subsection}}

\numberwithin{equation}{section}
\renewcommand{\theequation}{\thesection.\arabic{equation}}

\renewcommand{\thefigure}{S\arabic{figure}}
\renewcommand{\thetable}{S\arabic{table}}

\makeatletter
\@ifundefined{c@theorem}{}{%
  \numberwithin{theorem}{section}%
  \renewcommand{\thetheorem}{\thesection.\arabic{theorem}}%
}
\@ifundefined{c@lemma}{}{%
  \numberwithin{lemma}{section}%
  \renewcommand{\thelemma}{\thesection.\arabic{lemma}}%
}
\@ifundefined{c@proposition}{}{%
  \numberwithin{proposition}{section}%
  \renewcommand{\theproposition}{\thesection.\arabic{proposition}}%
}
\@ifundefined{c@corollary}{}{%
  \numberwithin{corollary}{section}%
  \renewcommand{\thecorollary}{\thesection.\arabic{corollary}}%
}
\@ifundefined{c@assumption}{}{%
  \numberwithin{assumption}{section}%
  \renewcommand{\theassumption}{\thesection.\arabic{assumption}}%
}
\@ifundefined{c@definition}{}{%
  \numberwithin{definition}{section}%
  \renewcommand{\thedefinition}{\thesection.\arabic{definition}}%
}
\@ifundefined{c@example}{}{%
  \numberwithin{example}{section}%
  \renewcommand{\theexample}{\thesection.\arabic{example}}%
}
\@ifundefined{c@remark}{}{%
  \numberwithin{remark}{section}%
  \renewcommand{\theremark}{\thesection.\arabic{remark}}%
}
\makeatother

\begin{center}
{\Large\bfseries Supplementary Material for\\[3pt]
``Handling covariate shift by model averaging''\par}
\vspace{8pt}
{\normalsize Yifan Zhang, Tianfa Xie and Xinyu Zhang\par}
\end{center}
\vspace{12pt}

\noindent

This Supplementary Material includes an illustrative example of the bias--variance trade-off induced by the correction exponent, further discussion of the regularity conditions, proofs of the theoretical results, implementation details for the numerical studies, additional simulation results, and data-processing details for the real-data analysis. 

For notational simplicity, unless otherwise stated, we write $M=M_n$ and $\mathcal W=\mathcal W_n$ throughout the Supplementary Material. For a matrix $\boldsymbol A$, let $\Vert \boldsymbol A\Vert_F$, $\Vert \boldsymbol{A} \Vert$ denote its Frobenius norm and spectral norm, respectively. For a vector $\boldsymbol v$, let $\|\boldsymbol v\|_2$ denote its
Euclidean norm, and, for a symmetric matrix $\boldsymbol A$, let
$\lambda_{\min}(\boldsymbol A)$ denote its smallest eigenvalue.
Throughout, $\xrightarrow{p}$ and $\xrightarrow{d}$ denote convergence
in probability and convergence in distribution, respectively. 

\section{An illustrative example of the bias--variance trade-off induced by the correction exponent}
\label{app:tradeoff-example}

This section presents an illustrative example showing how the correction
exponent $\lambda$ induces a bias--variance trade-off. Consider the
one-dimensional regression model
\begin{equation*}
y=f(x)+\varepsilon=\operatorname{sinc}(x)+\varepsilon,
\qquad
\varepsilon\sim N(0,0.1^2),
\end{equation*}
where $\operatorname{sinc}(x)=\sin(x)/x$ for $x\ne0$ and
$\operatorname{sinc}(0)=1$.
The source covariates are generated from
$\rho_{\mathcal X}^{S}=N(1,0.5^2)$, whereas the target covariates are generated
from $\rho_{\mathcal X}^{T}=N(2,0.25^2)$. The working model uses the linear
basis $\boldsymbol h(x)=(1,x)^\top$ and is therefore misspecified. For a labeled
source sample $\{(x_i,y_i)\}_{i=1}^{n_s}$ with $n_s=50$, we compute the AIWLS
estimator for each $\lambda\in[0,1]$ using the known target-to-source covariate
density ratio $\delta(x)$. The endpoints $\lambda=0$ and $\lambda=1$
correspond to OLS and standard importance-weighted least squares, respectively.
To assess target prediction accuracy, we generate an independent target
evaluation sample $\{z_j\}_{j=1}^{n_{\mathrm{eval}}}$ with
$n_{\mathrm{eval}}=1000$ and $z_j\sim\rho_{\mathcal X}^{T}$. For each
$\lambda$, we compute the empirical target excess prediction error (TEPE)
\begin{equation*}
\mathrm{TEPE}(\lambda)
=
\frac{1}{n_{\mathrm{eval}}}
\sum_{j=1}^{n_{\mathrm{eval}}}
\left\{
\boldsymbol h(z_j)^\top
\widehat{\boldsymbol\theta}_{\mathrm{AIWLS}}(\lambda)
-
f(z_j)
\right\}^2 .
\end{equation*}

The resulting TEPE curve is shown in
Figure~\ref{fig:tradeoff-example}. In this example, the minimum is attained at
an intermediate correction level rather than at either endpoint. Thus, neither
unweighted fitting nor standard importance weighting gives the smallest
empirical target excess prediction error in this finite sample. This example
illustrates that, under covariate shift, the practical question is not only
whether to correct for the source--target covariate discrepancy, but also how
strongly the correction should be applied.

\begin{figure}[H]
\centering
\includegraphics[width=0.9\linewidth]{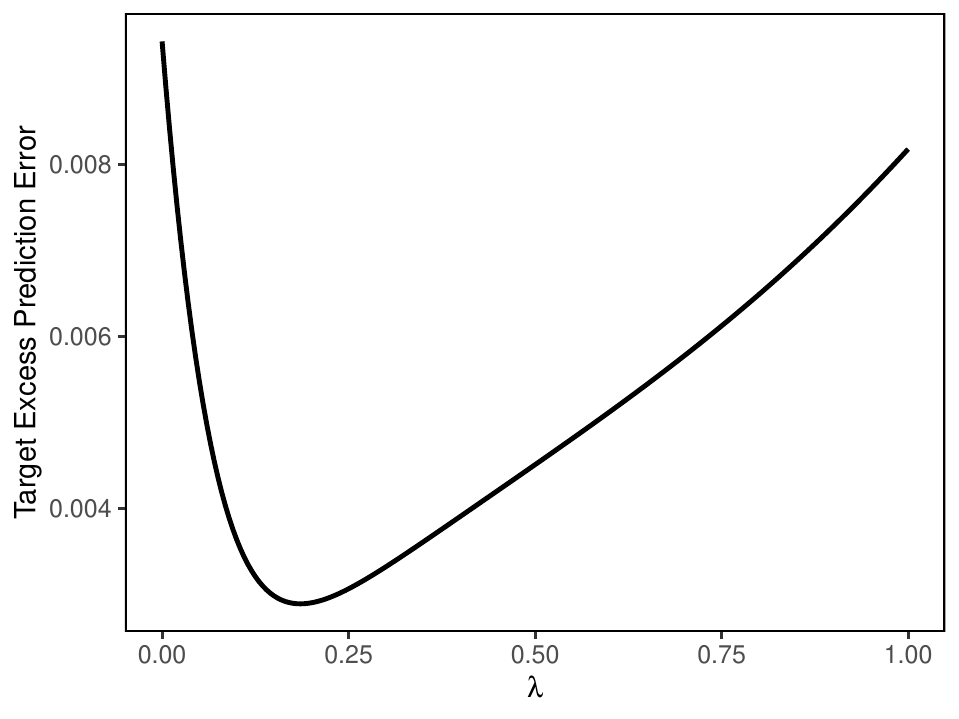}
\caption{TEPE versus the correction exponent $\lambda$. The
endpoints $\lambda=0$ and $\lambda=1$ correspond to OLS and standard
importance-weighted least squares, respectively.}
\label{fig:tradeoff-example}
\end{figure}

\section{Further discussion of regularity conditions}
\label{sec:supp-assumptions}

Assumption 8 is an identification condition for the OLS endpoint over the
candidate exponent set. It requires $\nu(\lambda_m)$ to be no smaller at any
candidate exponent than at $\lambda_1=0$, and requires a positive gap for
candidate exponents outside a shrinking neighborhood of zero. A sufficient
condition for Assumption 8 is that $\nu(\lambda)$ is strictly increasing on
$[0,1]$. In this case, for any candidate exponent set with $\lambda_1=0$,
\begin{equation*}
\nu(\lambda_m)\ge \nu(0),\qquad m=1,\ldots,M,
\end{equation*}
and
\begin{equation*}
\eta_n
=
\min_{m\notin\mathcal N_n( a_{n})}
\{\nu(\lambda_m)-\nu(0)\}
>0
\end{equation*}
whenever $\mathcal N_n( a_{n})\ne\{1,\ldots,M\}$. The following
Gaussian example gives a sufficient condition under which this monotonicity
holds. Suppose that
\begin{equation*}
\rho_{\mathcal X}^{S}
=
N(\boldsymbol 0,\sigma_S^2\boldsymbol I_d),
\qquad
\rho_{\mathcal X}^{T}
=
N(\boldsymbol\mu,\sigma_T^2\boldsymbol I_d),
\end{equation*}
where $0<\sigma_T^2<\sigma_S^2$, and let
$\boldsymbol h(\boldsymbol x)=(1,\boldsymbol x^\top)^\top$. Define
\begin{equation*}
r=\frac{\sigma_S^2}{\sigma_T^2}>1,
\qquad
\Delta_\mu=\frac{\|\boldsymbol\mu\|_2^2}{\sigma_T^2},
\qquad
D_\lambda=1+\lambda(r-1),
\qquad
E_\lambda=r+\lambda(r-1).
\end{equation*}
A direct calculation yields
\begin{equation*}
\begin{aligned}
\nu(\lambda)
&=
\left(
\frac{rD_\lambda}{E_\lambda}
\right)^{d/2}
\exp\left[
\frac{\Delta_\mu}{2}
\left\{
\frac{\lambda(1+\lambda)}{E_\lambda}
+
\frac{\lambda(1-\lambda)}{D_\lambda}
\right\}
\right] \\
&\quad \times
\left[
1
+
d\frac{D_\lambda}{E_\lambda}
+
\frac{r\Delta_\mu}{D_\lambda E_\lambda^2}
\right].
\end{aligned}
\end{equation*}
Therefore, a sufficient condition for Assumption 8 is
\begin{equation*}
\inf_{\lambda\in[0,1]}
\frac{\partial}{\partial\lambda}\log\nu(\lambda)>0.
\end{equation*}
Under this condition, $\nu(\lambda)$ is strictly increasing on $[0,1]$, and
hence Assumption 8 holds for any candidate exponent set. When
$\Delta_\mu=0$, the above expression reduces to
\begin{equation*}
\nu(\lambda)
=
\left(
\frac{rD_\lambda}{E_\lambda}
\right)^{d/2}
\left[
1+d\frac{D_\lambda}{E_\lambda}
\right].
\end{equation*}
Since $D_\lambda/E_\lambda$ is strictly increasing in $\lambda$ whenever
$r>1$, $\nu(\lambda)$ is strictly increasing on $[0,1]$. Moreover,
$\partial\log\nu(\lambda)/\partial\lambda$ is continuous in
$(\Delta_\mu,\lambda)$. Therefore, for each fixed $(r,d)$, there exists a
constant $\Delta_0=\Delta_0(r,d)>0$ such that
\begin{equation*}
\inf_{\lambda\in[0,1]}
\frac{\partial}{\partial\lambda}\log\nu(\lambda)>0
\qquad
\text{whenever }0\le \Delta_\mu<\Delta_0 .
\end{equation*}
Consequently, Assumption 8 holds whenever
\begin{equation*}
\frac{\|\boldsymbol\mu\|_2^2}{\sigma_T^2}<\Delta_0(r,d).
\end{equation*}
This example covers a class of covariate shifts in which the source covariate
distribution is more dispersed than the target covariate distribution and the
standardized squared mean difference
$\|\boldsymbol\mu\|_2^2/\sigma_T^2$ is sufficiently small.

\section{Proofs of the theoretical results}
\label{sec:supp-proofs}

\subsection{Proof of Theorem~1}

Fix an arbitrary $\lambda\in[0,1]$. By (9),
\begin{equation*}
\widehat{\boldsymbol\theta}^{\mathrm{AIWLS}}(\lambda)
=
(\boldsymbol H^\top\boldsymbol\Delta^\lambda\boldsymbol H)^{-1}
\boldsymbol H^\top\boldsymbol\Delta^\lambda\boldsymbol y .
\end{equation*}
Using
\begin{equation*}
y_i
=
f(\boldsymbol x_i)+\varepsilon_i
=
\boldsymbol h(\boldsymbol x_i)^\top\boldsymbol\theta^*(\lambda)
+
b_\lambda(\boldsymbol x_i)
+
\varepsilon_i ,
\end{equation*}
we obtain
\begin{equation}
\label{eq:proof-thm1-basic-expansion}
\sqrt{n_s}
\left\{
\widehat{\boldsymbol\theta}^{\mathrm{AIWLS}}(\lambda)
-
\boldsymbol\theta^*(\lambda)
\right\}
=
\boldsymbol S_{n_s}(\lambda)^{-1}
\boldsymbol U_{n_s}(\lambda),
\end{equation}
where
\begin{equation*}
\boldsymbol S_{n_s}(\lambda)
=
\frac{1}{n_s}
\boldsymbol H^\top\boldsymbol\Delta^\lambda\boldsymbol H
=
\frac{1}{n_s}
\sum_{i=1}^{n_s}
\delta(\boldsymbol x_i)^\lambda
\boldsymbol h(\boldsymbol x_i)\boldsymbol h(\boldsymbol x_i)^\top
\end{equation*}
and
\begin{equation*}
\boldsymbol U_{n_s}(\lambda)
=
\frac{1}{\sqrt{n_s}}
\sum_{i=1}^{n_s}
\delta(\boldsymbol x_i)^\lambda
\boldsymbol h(\boldsymbol x_i)
\{b_\lambda(\boldsymbol x_i)+\varepsilon_i\}.
\end{equation*}

We first show that $\boldsymbol S_{n_s}(\lambda)$ converges to
$\boldsymbol\Sigma(\lambda)$. For each $j,k=1,\ldots,p$,
\begin{equation*}
\left|
\delta(\boldsymbol x)^\lambda h_j(\boldsymbol x)h_k(\boldsymbol x)
\right|
\le
\delta(\boldsymbol x)^\lambda
\|\boldsymbol h(\boldsymbol x)\|_2^2
\le
\{1+\delta(\boldsymbol x)^{2\lambda}\}
\|\boldsymbol h(\boldsymbol x)\|_2^2 .
\end{equation*}
The right-hand side is integrable by the moment conditions in Theorem~1.
Since $p$ is fixed, the weak law of large numbers applied entrywise gives
\begin{equation}
\label{eq:proof-thm1-gram-convergence}
\boldsymbol S_{n_s}(\lambda)
\xrightarrow{p}
\boldsymbol\Sigma(\lambda).
\end{equation}
By the nonsingularity condition in Theorem~1,
$\boldsymbol\Sigma(\lambda)$ is nonsingular. Therefore,
$\boldsymbol S_{n_s}(\lambda)$ is nonsingular with probability tending to one,
and the continuous mapping theorem gives
\begin{equation}
\label{eq:proof-thm1-inverse-convergence}
\boldsymbol S_{n_s}(\lambda)^{-1}
\xrightarrow{p}
\boldsymbol\Sigma(\lambda)^{-1}.
\end{equation}

It remains to derive the limiting distribution of
$\boldsymbol U_{n_s}(\lambda)$. Define
\begin{equation*}
\boldsymbol\xi_i(\lambda)
=
\delta(\boldsymbol x_i)^\lambda
\boldsymbol h(\boldsymbol x_i)
\{b_\lambda(\boldsymbol x_i)+\varepsilon_i\},
\qquad i=1,\ldots,n_s .
\end{equation*}
The random vectors $\boldsymbol\xi_i(\lambda)$ are independent and
identically distributed. Since $\boldsymbol\theta^*(\lambda)$ is defined by
(10), differentiating the objective in (10) at its minimizer gives
\begin{equation}
\label{eq:proof-thm1-normal-equation}
\E_{\boldsymbol x\sim\rho_{\mathcal X}^{S}}
\left[
\delta(\boldsymbol x)^\lambda
\boldsymbol h(\boldsymbol x)b_\lambda(\boldsymbol x)
\right]
=
\boldsymbol0 .
\end{equation}
The expectation in \eqref{eq:proof-thm1-normal-equation} is well defined
because, by the Cauchy–Schwarz inequality and the moment conditions in Theorem~1,
\begin{equation*}
\E_{\boldsymbol x\sim\rho_{\mathcal X}^{S}}
\left[
\delta(\boldsymbol x)^\lambda
\|\boldsymbol h(\boldsymbol x)\|_2
|b_\lambda(\boldsymbol x)|
\right]
\le
\left\{
\E_{\boldsymbol x\sim\rho_{\mathcal X}^{S}}
\left[
\delta(\boldsymbol x)^{2\lambda}
b_\lambda(\boldsymbol x)^2
\|\boldsymbol h(\boldsymbol x)\|_2^2
\right]
\right\}^{1/2}
<\infty .
\end{equation*}
Together with $\E(\varepsilon_i\mid\boldsymbol x_i)=0$,
\eqref{eq:proof-thm1-normal-equation} implies
\begin{equation*}
\E\{\boldsymbol\xi_i(\lambda)\}=\boldsymbol0 .
\end{equation*}

Using $\E(\varepsilon_i\mid\boldsymbol x_i)=0$ and
$\E(\varepsilon_i^2\mid\boldsymbol x_i)=\sigma^2$, we have
\begin{align*}
\operatorname{Var}\{\boldsymbol\xi_i(\lambda)\}
&=
\E
\left[
\delta(\boldsymbol x_i)^{2\lambda}
\boldsymbol h(\boldsymbol x_i)\boldsymbol h(\boldsymbol x_i)^\top
\{b_\lambda(\boldsymbol x_i)+\varepsilon_i\}^2
\right] \\
&=
\E_{\boldsymbol x\sim\rho_{\mathcal X}^{S}}
\left[
\delta(\boldsymbol x)^{2\lambda}
b_\lambda(\boldsymbol x)^2
\boldsymbol h(\boldsymbol x)\boldsymbol h(\boldsymbol x)^\top
\right]
+
\sigma^2
\E_{\boldsymbol x\sim\rho_{\mathcal X}^{S}}
\left[
\delta(\boldsymbol x)^{2\lambda}
\boldsymbol h(\boldsymbol x)\boldsymbol h(\boldsymbol x)^\top
\right] \\
&=
\boldsymbol\Sigma_b(\lambda)
+
\boldsymbol\Sigma_\varepsilon(\lambda).
\end{align*}
The entries of this variance matrix are finite by the moment conditions in
Theorem~1. Hence the multivariate central limit theorem gives
\begin{equation}
\label{eq:proof-thm1-score-clt}
\boldsymbol U_{n_s}(\lambda)
=
\frac{1}{\sqrt{n_s}}
\sum_{i=1}^{n_s}
\boldsymbol\xi_i(\lambda)
\xrightarrow{d}
N\{
\boldsymbol0,
\boldsymbol\Sigma_b(\lambda)+\boldsymbol\Sigma_\varepsilon(\lambda)
\}.
\end{equation}

Combining \eqref{eq:proof-thm1-basic-expansion},
\eqref{eq:proof-thm1-inverse-convergence}, and
\eqref{eq:proof-thm1-score-clt}, Slutsky's theorem yields
\begin{equation*}
\sqrt{n_s}
\left\{
\widehat{\boldsymbol\theta}^{\mathrm{AIWLS}}(\lambda)
-
\boldsymbol\theta^*(\lambda)
\right\}
\xrightarrow{d}
N\{\boldsymbol0,\boldsymbol\Omega(\lambda)\},
\end{equation*}
where
\begin{equation*}
\boldsymbol\Omega(\lambda)
=
\boldsymbol\Sigma(\lambda)^{-1}
\left\{
\boldsymbol\Sigma_b(\lambda)
+
\boldsymbol\Sigma_\varepsilon(\lambda)
\right\}
\boldsymbol\Sigma(\lambda)^{-1}.
\end{equation*}
This proves Theorem~1.

\qed

\subsection{Auxiliary lemmas}

\begin{lemma}
\label{lem:proof-thm2-moment-implications}
Suppose that Assumptions~1, 3, and 4 hold. Then
\begin{equation}
\label{eq:proof-thm2-bh-moment}
\E_{\boldsymbol x\sim\rho_{\mathcal X}^{S}}
\left[
b(\boldsymbol x)^2
\|\boldsymbol h(\boldsymbol x)\|_2^2
\right]
<\infty,
\end{equation}
\begin{equation}
\label{eq:proof-thm2-b-moment}
\E_{\boldsymbol x\sim\rho_{\mathcal X}^{S}}
\{b(\boldsymbol x)^2\}
<\infty,
\end{equation}
\begin{equation}
\label{eq:proof-thm2-fh-moment}
\E_{\boldsymbol x\sim\rho_{\mathcal X}^{S}}
\left[
f(\boldsymbol x)^2
\|\boldsymbol h(\boldsymbol x)\|_2^2
\right]
<\infty,
\end{equation}
\begin{equation}
\label{eq:proof-thm2-target-h-moment}
\E_{\boldsymbol x\sim\rho_{\mathcal X}^{T}}
\left[
\|\boldsymbol h(\boldsymbol x)\|_2^4
\right]
<\infty,
\end{equation}
\begin{equation}
\label{eq:proof-thm2-target-fh-moment}
\E_{\boldsymbol x\sim\rho_{\mathcal X}^{T}}
\left[
f(\boldsymbol x)^2
\|\boldsymbol h(\boldsymbol x)\|_2^2
\right]
<\infty,
\end{equation}
and
\begin{equation}
\label{eq:proof-thm2-theta-star-unif-bound}
\sup_{0\le\lambda\le1}
\|\boldsymbol\theta^*(\lambda)\|_2
<\infty .
\end{equation}
\end{lemma}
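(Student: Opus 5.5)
The key observation is that $b=b_1$, since $\boldsymbol\theta^*(1)=\boldsymbol\theta^*$ by the importance-weighting identity. Consequently \eqref{eq:proof-thm2-bh-moment} follows immediately from the first part of Assumption~3 evaluated at $\lambda=1$. Because $h_1\equiv1$, we have $\|\boldsymbol h(\boldsymbol x)\|_2^2\ge1$, so $b(\boldsymbol x)^2\le b(\boldsymbol x)^2\|\boldsymbol h(\boldsymbol x)\|_2^2$, and \eqref{eq:proof-thm2-b-moment} follows as well.

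Next I would prove the uniform bound \eqref{eq:proof-thm2-theta-star-unif-bound}, since the $f$-moments need it. Using $b_\lambda$ in place of $b$, the identity $f=b_\lambda+\boldsymbol h^\top\boldsymbol\theta^*(\lambda)$ holds for every $\lambda$. The normal equation from Theorem~1's proof gives
\[
\boldsymbol\Sigma(\lambda)\boldsymbol\theta^*(\lambda)=\E_S\{\delta^\lambda\boldsymbol h f\}.
\]
This is not directly usable without integrability of $\delta^\lambda\boldsymbol h f$, so I would argue differently. Set $\boldsymbol\theta=\boldsymbol\theta^*(\lambda)$ and $\boldsymbol\theta'=\boldsymbol\theta^*(0)$. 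Then
\[
\boldsymbol h^\top(\boldsymbol\theta-\boldsymbol\theta')=b_0-b_\lambda .
\]
Multiplying by $\delta^\lambda\boldsymbol h$ and taking the source expectation yields
\[
\boldsymbol\Sigma(\lambda)(\boldsymbol\theta-\boldsymbol\theta')=\E_S\{\delta^\lambda\boldsymbol h(b_0-b_\lambda)\}.
\]
Assumption~4 gives $\delta^\lambda\le\max(1,\overline\delta)$, and Cauchy--Schwarz together with Assumption~3 bounds the right side in norm by $2\max(1,\overline\delta)\sup_\lambda\{\E_S b_\lambda^2\|\boldsymbol h\|_2^2\}^{1/2}$, uniformly in $\lambda$. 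Assumption~1 then gives
\[
\|\boldsymbol\theta-\boldsymbol\theta'\|_2\le C/c_1 .
\]
It remains to show $\boldsymbol\theta^*(0)$ is finite, which holds because it is a single well-defined vector; its finiteness is implicit in the definition (a minimizer exists and is unique by Assumption~1). Hence the supremum is finite.

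For \eqref{eq:proof-thm2-fh-moment}, write $f=b+\boldsymbol h^\top\boldsymbol\theta^*$, so that
\[
f^2\le 2b^2+2\|\boldsymbol h\|_2^2\|\boldsymbol\theta^*\|_2^2 .
\]
Then
\[
\E_S f^2\|\boldsymbol h\|_2^2\le 2\E_S b^2\|\boldsymbol h\|_2^2+2\|\boldsymbol\theta^*\|_2^2\E_S\|\boldsymbol h\|_2^4<\infty
\]
by \eqref{eq:proof-thm2-bh-moment}, the second part of Assumption~3, and \eqref{eq:proof-thm2-theta-star-unif-bound} at $\lambda=1$. The target moments follow via the identity \eqref{eq:Importance-weighting-identity} applied to the nonnegative functions $g=\|\boldsymbol h\|_2^4$ and $g=f^2\|\boldsymbol h\|_2^2$; this is valid for nonnegative $g$ by monotone convergence. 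Assumption~4 then gives $\E_T g=\E_S\delta g\le\overline\delta\,\E_S g<\infty$, which yields \eqref{eq:proof-thm2-target-h-moment} and \eqref{eq:proof-thm2-target-fh-moment}.

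The main obstacle is the uniform bound on $\boldsymbol\theta^*(\lambda)$, specifically avoiding any integrability assumption on $f\boldsymbol h$ itself before it has been established. The difference trick against $\lambda=0$, which uses only Assumption~3's uniform control of $b_\lambda$, resolves this; all remaining steps are routine applications of Cauchy--Schwarz and the bounded density ratio.
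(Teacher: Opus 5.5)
Your proof is correct. For \eqref{eq:proof-thm2-bh-moment}--\eqref{eq:proof-thm2-fh-moment} and the two target moments it matches the paper's argument. Your remark that the importance-weighting identity is applied to nonnegative $g$, so no prior target integrability is needed, is slightly more careful than the paper.

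The genuine difference is in \eqref{eq:proof-thm2-theta-star-unif-bound}. The paper uses the normal equation $\boldsymbol\Sigma(\lambda)\boldsymbol\theta^*(\lambda)=\E_S\{\delta^\lambda\boldsymbol h f\}$ and substitutes $f=\boldsymbol h^\top\boldsymbol\theta^*+b$. This gives $\boldsymbol\theta^*(\lambda)-\boldsymbol\theta^*=\boldsymbol\Sigma(\lambda)^{-1}\E_S\{\delta^\lambda\boldsymbol h b\}$, whose norm is at most $c_1^{-1}\max\{1,\overline\delta\}(\E_S\|\boldsymbol h\|_2^2)^{1/2}(\E_S b^2)^{1/2}$. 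The paper therefore anchors at $\lambda=1$ and needs only the single function $b=b_1$, not the supremum over $\lambda$ in Assumption~3. You anchor at $\lambda=0$ through the pointwise identity $\boldsymbol h^\top\{\boldsymbol\theta^*(\lambda)-\boldsymbol\theta^*(0)\}=b_0-b_\lambda$. That bypasses the first-order condition but uses the uniform-in-$\lambda$ part of Assumption~3. Both routes are valid under the lemma's hypotheses.

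The obstacle you cite for the normal-equation route is not real, though. The integrand $\delta^\lambda\boldsymbol h f=\delta^\lambda\boldsymbol h\boldsymbol h^\top\boldsymbol\theta^*+\delta^\lambda\boldsymbol h b$ is integrable by Assumptions~3--4, Cauchy--Schwarz and \eqref{eq:proof-thm2-b-moment}, which is exactly what the paper relies on. Likewise, you do not need \eqref{eq:proof-thm2-theta-star-unif-bound} in the $f$-moment step, since $\boldsymbol\theta^*$ is a fixed finite vector; this is why the paper can prove the $f$-moments first.
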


\begin{proof}
By the importance-weighting identity, the objective in (10) with
$\lambda=1$ coincides with the target prediction risk objective defining
$\boldsymbol\theta^*$. Hence
$\boldsymbol\theta^*(1)=\boldsymbol\theta^*$ and
$b_1(\boldsymbol x)=b(\boldsymbol x)$. Taking $\lambda=1$ in Assumption~3
gives \eqref{eq:proof-thm2-bh-moment}. Since the basis includes an intercept,
$\|\boldsymbol h(\boldsymbol x)\|_2^2\ge1$, and
\eqref{eq:proof-thm2-b-moment} follows from
\eqref{eq:proof-thm2-bh-moment}.

Using
$f(\boldsymbol x)=\boldsymbol h(\boldsymbol x)^\top\boldsymbol\theta^*
+b(\boldsymbol x)$, we have
\begin{align*}
f(\boldsymbol x)^2\|\boldsymbol h(\boldsymbol x)\|_2^2
&\le
2\|\boldsymbol\theta^*\|_2^2
\|\boldsymbol h(\boldsymbol x)\|_2^4
+
2b(\boldsymbol x)^2
\|\boldsymbol h(\boldsymbol x)\|_2^2 .
\end{align*}
The first term on the right-hand side is integrable by Assumption~3, and the
second is integrable by \eqref{eq:proof-thm2-bh-moment}. This proves
\eqref{eq:proof-thm2-fh-moment}.

By the importance-weighting identity and Assumption~4,
\begin{align*}
\E_{\boldsymbol x\sim\rho_{\mathcal X}^{T}}
\left[
\|\boldsymbol h(\boldsymbol x)\|_2^4
\right]
&=
\E_{\boldsymbol x\sim\rho_{\mathcal X}^{S}}
\left[
\delta(\boldsymbol x)
\|\boldsymbol h(\boldsymbol x)\|_2^4
\right] \\
&\le
\overline\delta
\E_{\boldsymbol x\sim\rho_{\mathcal X}^{S}}
\left[
\|\boldsymbol h(\boldsymbol x)\|_2^4
\right]
<\infty .
\end{align*}
This proves \eqref{eq:proof-thm2-target-h-moment}. The same argument, together
with \eqref{eq:proof-thm2-fh-moment}, gives
\eqref{eq:proof-thm2-target-fh-moment}.

It remains to prove \eqref{eq:proof-thm2-theta-star-unif-bound}. By the
first-order condition for the population objective in (10),
\begin{equation*}
\boldsymbol\Sigma(\lambda)\boldsymbol\theta^*(\lambda)
=
\E_{\boldsymbol x\sim\rho_{\mathcal X}^{S}}
\left[
\delta(\boldsymbol x)^\lambda
\boldsymbol h(\boldsymbol x)f(\boldsymbol x)
\right].
\end{equation*}
Using $f(\boldsymbol x)=\boldsymbol h(\boldsymbol x)^\top
\boldsymbol\theta^*+b(\boldsymbol x)$, we obtain
\begin{equation*}
\boldsymbol\theta^*(\lambda)-\boldsymbol\theta^*
=
\boldsymbol\Sigma(\lambda)^{-1}
\E_{\boldsymbol x\sim\rho_{\mathcal X}^{S}}
\left[
\delta(\boldsymbol x)^\lambda
\boldsymbol h(\boldsymbol x)b(\boldsymbol x)
\right].
\end{equation*}
Assumption~1 gives
$\sup_{0\le\lambda\le1}\|\boldsymbol\Sigma(\lambda)^{-1}\|\le c_1^{-1}$.
Moreover, by Assumption~4, the Cauchy–Schwarz inequality,
\eqref{eq:proof-thm2-b-moment}, and Assumption~3,
\begin{align*}
&
\sup_{0\le\lambda\le1}
\left\|
\E_{\boldsymbol x\sim\rho_{\mathcal X}^{S}}
\left[
\delta(\boldsymbol x)^\lambda
\boldsymbol h(\boldsymbol x)b(\boldsymbol x)
\right]
\right\|_2
\\
&\quad
\le
\max\{1,\overline\delta\}
\left\{
\E_{\boldsymbol x\sim\rho_{\mathcal X}^{S}}
\|\boldsymbol h(\boldsymbol x)\|_2^2
\right\}^{1/2}
\left\{
\E_{\boldsymbol x\sim\rho_{\mathcal X}^{S}}
b(\boldsymbol x)^2
\right\}^{1/2}
<\infty .
\end{align*}
Therefore,
$\sup_{0\le\lambda\le1}
\|\boldsymbol\theta^*(\lambda)-\boldsymbol\theta^*\|_2<\infty$.
Since $\boldsymbol\theta^*$ is fixed and finite-dimensional,
\eqref{eq:proof-thm2-theta-star-unif-bound} follows.
\end{proof}

\begin{lemma}
\label{lem:proof-thm2-uniform-bounds}
Let
\begin{equation*}
\widehat{\boldsymbol S}_m
=
\frac{1}{n_s}
\boldsymbol H^\top
\widehat{\boldsymbol\Delta}_m
\boldsymbol H .
\end{equation*}
Suppose
that Assumptions~1 and 3--5 hold. Then
\begin{equation}
\label{eq:proof-thm2-unif-S-rate}
\max_{1\le m\le M}
\left\|
\widehat{\boldsymbol S}_m
-
\boldsymbol\Sigma(\lambda_m)
\right\|
=
O_p\left(\frac{M}{\sqrt{n_s}}+\kappa_n\right).
\end{equation}
Moreover,
\begin{equation}
\label{eq:proof-thm2-unif-Sinv-rate}
\max_{1\le m\le M}
\left\|
\widehat{\boldsymbol S}_m^{-1}
\right\|
=
O_p(1),
\end{equation}
\begin{equation}
\label{eq:proof-thm2-L-fro-rate}
\max_{1\le m\le M}
\|\boldsymbol L_m\|_F
=
O_p(n_s^{-1/2}),
\qquad
\sup_{\boldsymbol w\in\mathcal W}
\|\boldsymbol L(\boldsymbol w)\|_F
=
O_p(n_s^{-1/2}),
\end{equation}
\begin{equation}
\label{eq:proof-thm2-trace-rate}
\sup_{\boldsymbol w\in\mathcal W}
\left|
\operatorname{tr}
\left\{
\boldsymbol\Sigma^{T}
\boldsymbol L(\boldsymbol w)
\boldsymbol L_{M}^\top
\right\}
\right|
=
O_p(n_s^{-1}),
\end{equation}
\begin{equation}
\label{eq:proof-thm2-Lmb-generic}
\max_{1\le m\le M}
\|\boldsymbol L_m\boldsymbol b\|_2
=
O_p(1),
\end{equation}
and
\begin{equation}
\label{eq:proof-thm2-LMb-full}
\|\boldsymbol L_{M}\boldsymbol b\|_2
=
O_p(n_s^{-1/2}+\kappa_n).
\end{equation}
Finally,
\begin{equation}
\label{eq:proof-thm2-At-rate}
\|\widehat{\boldsymbol\Sigma}^{T}-\boldsymbol\Sigma^{T}\|
=
O_p(n_t^{-1/2}).
\end{equation}
\end{lemma}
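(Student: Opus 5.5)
The plan is to prove the bounds in the order stated, since each one feeds into the next. The key device is a decomposition of $\widehat{\boldsymbol S}_m-\boldsymbol\Sigma(\lambda_m)$ into three pieces:
\[
\widehat{\boldsymbol S}_m-\boldsymbol\Sigma(\lambda_m)
=\frac{1}{n_s}\sum_{i=1}^{n_s}\{\widehat\delta(\boldsymbol x_i)^{\lambda_m}-\delta(\boldsymbol x_i)^{\lambda_m}\}\boldsymbol h(\boldsymbol x_i)\boldsymbol h(\boldsymbol x_i)^\top
+\{\boldsymbol S_{n_s}(\lambda_m)-\boldsymbol\Sigma(\lambda_m)\},
\]
where $\boldsymbol S_{n_s}(\lambda)$ is the oracle Gram matrix from the proof of Theorem~1.

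\textbf{Bound \eqref{eq:proof-thm2-unif-S-rate}.} The first piece has spectral norm at most $\kappa_n n_s^{-1}\sum_i\|\boldsymbol h(\boldsymbol x_i)\|_2^2=O_p(\kappa_n)$, using the second-moment part of Assumption~3 and the law of large numbers. For the second piece, Assumption~4 gives $\delta^{\lambda}\le\max\{1,\overline\delta\}$, and the fourth moment of $\boldsymbol h$ (Assumption~3) gives $\E\|\boldsymbol S_{n_s}(\lambda)-\boldsymbol\Sigma(\lambda)\|_F^2\le C/n_s$ uniformly in $\lambda$. A union bound over the $M$ grid points then yields $\E\max_m\|\cdot\|_F^2\le CM/n_s$, so the maximum is $O_p(\sqrt{M/n_s})\subseteq O_p(M/\sqrt{n_s})$.

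\textbf{Bound \eqref{eq:proof-thm2-unif-Sinv-rate}.} Assumption~5 gives $M/\sqrt{n_s}\to0$, because $\xi_n$ is bounded (it is at most $\mathrm{TEPE}^*$ at OLS, which is finite by Lemma~\ref{lem:proof-thm2-moment-implications}). It also gives $\kappa_n=o_p(1)$. Hence $\max_m\|\widehat{\boldsymbol S}_m-\boldsymbol\Sigma(\lambda_m)\|=o_p(1)$. With Assumption~1 and Weyl's inequality, $\min_m\lambda_{\min}(\widehat{\boldsymbol S}_m)\ge c_1/2$ with probability tending to one.

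\textbf{Bounds \eqref{eq:proof-thm2-L-fro-rate} and \eqref{eq:proof-thm2-trace-rate}.} Write $\boldsymbol L_m=n_s^{-1}\widehat{\boldsymbol S}_m^{-1}\boldsymbol H^\top\widehat{\boldsymbol\Delta}_m$. Then
\[
\|\boldsymbol L_m\|_F\le n_s^{-1}\|\widehat{\boldsymbol S}_m^{-1}\|\max_i\widehat\delta(\boldsymbol x_i)^{\lambda_m}\|\boldsymbol H\|_F .
\]
Here $\widehat\delta^{\lambda_m}\le\max\{1,\overline\delta\}+\kappa_n$ and $\|\boldsymbol H\|_F=O_p(\sqrt{n_s})$, which gives the rate $O_p(n_s^{-1/2})$. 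The bound for $\boldsymbol L(\boldsymbol w)$ follows from convexity of the norm over the simplex. For the trace, use $|\operatorname{tr}(\boldsymbol\Sigma^T\boldsymbol L(\boldsymbol w)\boldsymbol L_M^\top)|\le\|\boldsymbol\Sigma^T\|\,\|\boldsymbol L(\boldsymbol w)\|_F\|\boldsymbol L_M\|_F$, where $\|\boldsymbol\Sigma^T\|<\infty$ by \eqref{eq:proof-thm2-target-h-moment}.

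\textbf{Bound \eqref{eq:proof-thm2-Lmb-generic}.} Write $\boldsymbol L_m\boldsymbol b=\widehat{\boldsymbol S}_m^{-1}n_s^{-1}\sum_i\widehat\delta_i^{\lambda_m}\boldsymbol h(\boldsymbol x_i)b(\boldsymbol x_i)$. By Cauchy--Schwarz the average is bounded by $(\max\{1,\overline\delta\}+\kappa_n)\,n_s^{-1}\sum_i\|\boldsymbol h(\boldsymbol x_i)\|_2|b(\boldsymbol x_i)|=O_p(1)$, using \eqref{eq:proof-thm2-bh-moment}. This bound is uniform in $m$.

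\textbf{Bound \eqref{eq:proof-thm2-LMb-full}.} This is the step I expect to need the most care, because it needs the first-order condition at $\lambda=1$. Since $\boldsymbol\theta^*(1)=\boldsymbol\theta^*$, we have $\E_S[\delta\boldsymbol h b]=\boldsymbol0$. Split
\[
n_s^{-1}\sum_i\widehat\delta_i\boldsymbol h_ib_i
=n_s^{-1}\sum_i\delta_i\boldsymbol h_ib_i+n_s^{-1}\sum_i(\widehat\delta_i-\delta_i)\boldsymbol h_ib_i .
\]
The first term is a centered i.i.d.\ mean. Its second moment is at most $\overline\delta^2\E[b^2\|\boldsymbol h\|_2^2]/n_s$ by \eqref{eq:proof-thm2-bh-moment}, so it is $O_p(n_s^{-1/2})$ by Chebyshev. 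The second term is at most $\kappa_n\,n_s^{-1}\sum_i\|\boldsymbol h_i\||b_i|=O_p(\kappa_n)$. Multiplying by $\|\widehat{\boldsymbol S}_M^{-1}\|=O_p(1)$ gives the claim.

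\textbf{Bound \eqref{eq:proof-thm2-At-rate}.} $\widehat{\boldsymbol\Sigma}^T$ is an i.i.d.\ mean of $\boldsymbol h(\boldsymbol x_j')\boldsymbol h(\boldsymbol x_j')^\top$ with finite second moments by \eqref{eq:proof-thm2-target-h-moment}. Chebyshev applied to the Frobenius norm gives $O_p(n_t^{-1/2})$.

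The only genuinely delicate points are the uniformity over a diverging grid in \eqref{eq:proof-thm2-unif-S-rate}, handled by the crude union bound above, and the centering argument behind \eqref{eq:proof-thm2-LMb-full}.
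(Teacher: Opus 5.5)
Your proposal is correct and follows essentially the same route as the paper's proof: the oracle/feasible split of $\widehat{\boldsymbol S}_m-\boldsymbol\Sigma(\lambda_m)$ with a Chebyshev and union bound over the grid, and Weyl's inequality once $\xi_n=O(1)$ makes Assumption~5 force $M/\sqrt{n_s}+\kappa_n=o_p(1)$. It also shares the factorization $\boldsymbol L_m=\widehat{\boldsymbol S}_m^{-1}(n_s^{-1}\boldsymbol H^\top\widehat{\boldsymbol\Delta}_m)$ and the centering $\E_{\boldsymbol x\sim\rho_{\mathcal X}^{S}}\{\delta(\boldsymbol x)\boldsymbol h(\boldsymbol x)b(\boldsymbol x)\}=\boldsymbol 0$ at the IWLS endpoint; the only differences are cosmetic. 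You bound $\xi_n$ via the OLS vertex, which needs the uniform bound on $\boldsymbol\theta^*(\lambda)$ from Lemma~\ref{lem:proof-thm2-moment-implications}, whereas the paper uses the IWLS vertex with $\mathrm{TEPE}^*(\boldsymbol w^{(M)})=\E_{\boldsymbol x\sim\rho_{\mathcal X}^{T}}\{b(\boldsymbol x)^2\}$, and you bound $\widehat\delta(\boldsymbol x_i)^{\lambda_m}\le\max\{1,\overline\delta\}+\kappa_n$ directly rather than via $\max_i\widehat\delta(\boldsymbol x_i)=O_p(1)$.
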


\begin{proof}
For each $m$, define
\begin{equation*}
\boldsymbol S_m
=
\frac{1}{n_s}
\boldsymbol H^\top\boldsymbol\Delta_m\boldsymbol H,
\qquad
\boldsymbol\Delta_m
=
\operatorname{diag}
\{
\delta(\boldsymbol x_1)^{\lambda_m},\ldots,
\delta(\boldsymbol x_{n_s})^{\lambda_m}
\}.
\end{equation*}
Then
\begin{equation*}
\max_{1\le m\le M}
\|
\widehat{\boldsymbol S}_m-\boldsymbol\Sigma(\lambda_m)
\|
\le
I_{1n}+I_{2n},
\end{equation*}
where
\begin{equation*}
I_{1n}
=
\max_{1\le m\le M}
\|\boldsymbol S_m-\boldsymbol\Sigma(\lambda_m)\|,
\qquad
I_{2n}
=
\max_{1\le m\le M}
\|\widehat{\boldsymbol S}_m-\boldsymbol S_m\|.
\end{equation*}

We first bound $I_{1n}$. Fix $j,k\in\{1,\ldots,p\}$ and define
\begin{equation*}
Z_{i,m,jk}
=
\delta(\boldsymbol x_i)^{\lambda_m}
h_j(\boldsymbol x_i)h_k(\boldsymbol x_i)
-
\E_{\boldsymbol x\sim\rho_{\mathcal X}^{S}}
\left[
\delta(\boldsymbol x)^{\lambda_m}
h_j(\boldsymbol x)h_k(\boldsymbol x)
\right].
\end{equation*}
Since $\lambda_m\in[0,1]$ and Assumption~4 gives
$\delta(\boldsymbol x)^{\lambda_m}\le\max\{1,\overline\delta\}$,
Assumption~3 implies
$\sup_{1\le m\le M}\E(Z_{i,m,jk}^2)\le C$ for a finite constant $C$.
For every $K>0$, Chebyshev's inequality and the union bound give
\begin{align*}
&
\Pr\left(
\max_{1\le m\le M}
\left|
\frac{1}{n_s}
\sum_{i=1}^{n_s}
Z_{i,m,jk}
\right|
>
K\frac{M}{\sqrt{n_s}}
\right)
\\
&\quad
\le
\sum_{m=1}^{M}
\frac{
\operatorname{Var}(n_s^{-1}\sum_{i=1}^{n_s}Z_{i,m,jk})
}{
K^2M^2/n_s
}
\le
\frac{C}{K^2M}.
\end{align*}
Since $p$ is fixed, this entrywise bound implies
\begin{equation*}
I_{1n}
=
O_p\left(\frac{M}{\sqrt{n_s}}\right).
\end{equation*}

By the definition of $\kappa_n$ over the candidate exponent set,
\begin{align*}
I_{2n}
&\le
\kappa_n
\frac{1}{n_s}
\sum_{i=1}^{n_s}
\|\boldsymbol h(\boldsymbol x_i)
\boldsymbol h(\boldsymbol x_i)^\top\|
\\
&=
\kappa_n
\frac{1}{n_s}
\sum_{i=1}^{n_s}
\|\boldsymbol h(\boldsymbol x_i)\|_2^2
=
O_p(\kappa_n),
\end{align*}
where the last equality follows from Assumption~3. Combining the bounds for
$I_{1n}$ and $I_{2n}$ proves \eqref{eq:proof-thm2-unif-S-rate}.

We next prove the uniform boundedness of the inverse matrices. Let
$\boldsymbol w^{(M)}$ denote the unit vector selecting the candidate
$\lambda_M=1$. Since $\lambda_M=1$ belongs to the candidate set,
\begin{equation*}
\xi_n
\le
\mathrm{TEPE}^*(\boldsymbol w^{(M)})
=
\E_{\boldsymbol x\sim\rho_{\mathcal X}^{T}}
\{b(\boldsymbol x)^2\}.
\end{equation*}
By the importance-weighting identity, Assumption~4, and
\eqref{eq:proof-thm2-b-moment},
\begin{equation*}
\E_{\boldsymbol x\sim\rho_{\mathcal X}^{T}}
\{b(\boldsymbol x)^2\}
=
\E_{\boldsymbol x\sim\rho_{\mathcal X}^{S}}
\{\delta(\boldsymbol x)b(\boldsymbol x)^2\}
\le
\overline\delta
\E_{\boldsymbol x\sim\rho_{\mathcal X}^{S}}
\{b(\boldsymbol x)^2\}
<\infty .
\end{equation*}
Thus $\xi_n=O(1)$. Assumption~5 gives $M/\sqrt n=o(1)$ and
$\kappa_n=o_p(1)$. Since $n=\min(n_s,n_t)$, it follows that
$M/\sqrt{n_s}+\kappa_n=o_p(1)$. Therefore, by
\eqref{eq:proof-thm2-unif-S-rate}, Assumption~1, and Weyl's inequality, with
probability tending to one,
\begin{equation*}
\min_{1\le m\le M}
\lambda_{\min}(\widehat{\boldsymbol S}_m)
\ge
\frac{c_1}{2}.
\end{equation*}
This proves \eqref{eq:proof-thm2-unif-Sinv-rate} and also shows that the
sample Gram matrices are nonsingular with probability tending to one.

Since
\begin{equation*}
\boldsymbol L_m
=
\widehat{\boldsymbol S}_m^{-1}
\left(
\frac{1}{n_s}
\boldsymbol H^\top
\widehat{\boldsymbol\Delta}_m
\right),
\end{equation*}
we next control the second factor. Because $\lambda_M=1$ belongs to the
candidate set,
\begin{equation*}
\max_{1\le i\le n_s}
|\widehat\delta(\boldsymbol x_i)-\delta(\boldsymbol x_i)|
\le
\kappa_n .
\end{equation*}
Together with Assumption~4 and $\kappa_n=o_p(1)$, this gives
$\max_{1\le i\le n_s}\widehat\delta(\boldsymbol x_i)=O_p(1)$. Since
$\widehat\delta(\boldsymbol x_i)\ge0$,
\begin{equation*}
\max_{1\le i\le n_s}
\max_{1\le m\le M}
\widehat\delta(\boldsymbol x_i)^{\lambda_m}
=
O_p(1).
\end{equation*}
Consequently,
\begin{align*}
\max_{1\le m\le M}
\left\|
\frac{1}{n_s}
\boldsymbol H^\top
\widehat{\boldsymbol\Delta}_m
\right\|_F^2
&=
\max_{1\le m\le M}
\frac{1}{n_s^2}
\sum_{i=1}^{n_s}
\widehat\delta(\boldsymbol x_i)^{2\lambda_m}
\|\boldsymbol h(\boldsymbol x_i)\|_2^2
\\
&=
O_p(n_s^{-1}).
\end{align*}
Combining this bound with \eqref{eq:proof-thm2-unif-Sinv-rate} proves the
first part of \eqref{eq:proof-thm2-L-fro-rate}. The second part follows from
the convexity of the Frobenius norm and the fact that $\mathcal W$ is the
simplex.

By \eqref{eq:proof-thm2-target-h-moment}, $\|\boldsymbol\Sigma^T\|<\infty$.
Hence the trace bound \eqref{eq:proof-thm2-trace-rate} follows from
\begin{align*}
\left|
\operatorname{tr}
\left\{
\boldsymbol\Sigma^{T}
\boldsymbol L(\boldsymbol w)
\boldsymbol L_{M}^\top
\right\}
\right|
&\le
\|\boldsymbol\Sigma^{T}\|
\|\boldsymbol L(\boldsymbol w)\|_F
\|\boldsymbol L_{M}\|_F .
\end{align*}

For \eqref{eq:proof-thm2-Lmb-generic}, observe that
\begin{equation*}
\boldsymbol L_m\boldsymbol b
=
\widehat{\boldsymbol S}_m^{-1}
\frac{1}{n_s}
\sum_{i=1}^{n_s}
\widehat\delta(\boldsymbol x_i)^{\lambda_m}
\boldsymbol h(\boldsymbol x_i)b(\boldsymbol x_i).
\end{equation*}
The boundedness of the estimated powers, the Cauchy–Schwarz inequality,
\eqref{eq:proof-thm2-b-moment}, and Assumption~3 imply
\begin{equation*}
\max_{1\le m\le M}
\left\|
\frac{1}{n_s}
\sum_{i=1}^{n_s}
\widehat\delta(\boldsymbol x_i)^{\lambda_m}
\boldsymbol h(\boldsymbol x_i)b(\boldsymbol x_i)
\right\|_2
=
O_p(1).
\end{equation*}
Together with \eqref{eq:proof-thm2-unif-Sinv-rate}, this proves
\eqref{eq:proof-thm2-Lmb-generic}.

For the standard importance-weighted endpoint, $\lambda_M=1$. Hence
\begin{align*}
\boldsymbol L_{M}\boldsymbol b
={}&
\widehat{\boldsymbol S}_{M}^{-1}
\frac{1}{n_s}
\sum_{i=1}^{n_s}
\delta(\boldsymbol x_i)
\boldsymbol h(\boldsymbol x_i)b(\boldsymbol x_i)
\\
&+
\widehat{\boldsymbol S}_{M}^{-1}
\frac{1}{n_s}
\sum_{i=1}^{n_s}
\{\widehat\delta(\boldsymbol x_i)-\delta(\boldsymbol x_i)\}
\boldsymbol h(\boldsymbol x_i)b(\boldsymbol x_i).
\end{align*}
The average in the first term has mean zero because
\begin{equation*}
\E_{\boldsymbol x\sim\rho_{\mathcal X}^{S}}
\{\delta(\boldsymbol x)\boldsymbol h(\boldsymbol x)b(\boldsymbol x)\}
=
\E_{\boldsymbol x\sim\rho_{\mathcal X}^{T}}
\{\boldsymbol h(\boldsymbol x)b(\boldsymbol x)\}
=
\boldsymbol0,
\end{equation*}
where the last equality is the projection orthogonality under the target
covariate distribution. Its second moment is finite by
\eqref{eq:proof-thm2-bh-moment} and Assumption~4. Therefore, Chebyshev's
inequality gives
\begin{equation*}
\left\|
\frac{1}{n_s}
\sum_{i=1}^{n_s}
\delta(\boldsymbol x_i)
\boldsymbol h(\boldsymbol x_i)b(\boldsymbol x_i)
\right\|_2
=
O_p(n_s^{-1/2}).
\end{equation*}
The second average is bounded by
\begin{equation*}
\kappa_n
\frac{1}{n_s}
\sum_{i=1}^{n_s}
\|\boldsymbol h(\boldsymbol x_i)\|_2|b(\boldsymbol x_i)|
=
O_p(\kappa_n).
\end{equation*}
Combining these bounds with \eqref{eq:proof-thm2-unif-Sinv-rate} proves
\eqref{eq:proof-thm2-LMb-full}.

Finally, \eqref{eq:proof-thm2-At-rate} follows by applying Chebyshev's
inequality entrywise to
\begin{equation*}
\widehat{\boldsymbol\Sigma}^{T}
=
\frac{1}{n_t}
\sum_{j=1}^{n_t}
\boldsymbol h(\boldsymbol x_j')
\boldsymbol h(\boldsymbol x_j')^\top,
\end{equation*}
using \eqref{eq:proof-thm2-target-h-moment} and the fact that $p$ is fixed.
\end{proof}

\subsection{Proof of Theorem~2}

Throughout this proof, conditional expectations are taken given
$\mathcal F$ unless explicitly stated otherwise. Conditional on
$\mathcal F$, the matrices $\boldsymbol L_m$,
$\boldsymbol L(\boldsymbol w)$, $\boldsymbol L_M$, $\boldsymbol G$,
$\widehat{\boldsymbol\Sigma}^{T}$, and
$\widehat{\boldsymbol\Sigma}^{T}-\boldsymbol\Sigma^{T}$ are nonrandom. Let
$\boldsymbol\varepsilon=(\varepsilon_1,\ldots,\varepsilon_{n_s})^\top$. Since
$\boldsymbol y=\boldsymbol H\boldsymbol\theta^*+\boldsymbol b+
\boldsymbol\varepsilon$ and $\boldsymbol L_m\boldsymbol H=\boldsymbol I_p$,
we have
\begin{equation}
\label{eq:proof-thm2-decomposition}
\widehat{\boldsymbol\theta}(\boldsymbol w)-\boldsymbol\theta^*
=
\boldsymbol L(\boldsymbol w)\boldsymbol b
+
\boldsymbol L(\boldsymbol w)\boldsymbol\varepsilon,
\qquad
\widehat{\boldsymbol\theta}_{M}-\boldsymbol\theta^*
=
\boldsymbol L_{M}\boldsymbol b
+
\boldsymbol L_{M}\boldsymbol\varepsilon .
\end{equation}

By the projection orthogonality under the target covariate distribution,
$\E_{\boldsymbol z\sim\rho_{\mathcal X}^{T}}
\{\boldsymbol h(\boldsymbol z)b(\boldsymbol z)\}=\boldsymbol0$. Using
$f(\boldsymbol z)=\boldsymbol h(\boldsymbol z)^\top\boldsymbol\theta^*
+b(\boldsymbol z)$, the definition of $\mathrm{TEPE}(\boldsymbol w)$ gives
\begin{equation}
\label{eq:proof-thm2-tepe-expansion}
\mathrm{TEPE}(\boldsymbol w)
=
\E_{\boldsymbol z\sim\rho_{\mathcal X}^{T}}\{b(\boldsymbol z)^2\}
+
\E\left[
\left\{
\widehat{\boldsymbol\theta}(\boldsymbol w)-\boldsymbol\theta^*
\right\}^\top
\boldsymbol\Sigma^{T}
\left\{
\widehat{\boldsymbol\theta}(\boldsymbol w)-\boldsymbol\theta^*
\right\}
\mid\mathcal F
\right].
\end{equation}

Since $\phi_n=2$, the definition of $e_n(\boldsymbol w)$ gives
\begin{align*}
\E\{C(\boldsymbol w)\mid\mathcal F\}
={}&
\E\left[
\left\{
\widehat{\boldsymbol\theta}(\boldsymbol w)
-
\widehat{\boldsymbol\theta}_{M}
\right\}^\top
\boldsymbol\Sigma^{T}
\left\{
\widehat{\boldsymbol\theta}(\boldsymbol w)
-
\widehat{\boldsymbol\theta}_{M}
\right\}
\mid\mathcal F
\right]
\\
&+
2\E(\widehat\sigma^2\mid\mathcal F)
\operatorname{tr}
\left\{
\boldsymbol\Sigma^{T}
\boldsymbol L(\boldsymbol w)
\boldsymbol L_{M}^\top
\right\}
+
e_n(\boldsymbol w).
\end{align*}
Moreover,
\begin{equation*}
\widehat{\boldsymbol\theta}(\boldsymbol w)-\widehat{\boldsymbol\theta}_{M}
=
\{\widehat{\boldsymbol\theta}(\boldsymbol w)-\boldsymbol\theta^*\}
-
\{\widehat{\boldsymbol\theta}_{M}-\boldsymbol\theta^*\}.
\end{equation*}
Therefore, using \eqref{eq:proof-thm2-tepe-expansion} and the definition of
$r_n$,
\begin{align}
\label{eq:proof-thm2-first-term-expansion}
&
\E\left[
\left\{
\widehat{\boldsymbol\theta}(\boldsymbol w)
-
\widehat{\boldsymbol\theta}_{M}
\right\}^\top
\boldsymbol\Sigma^{T}
\left\{
\widehat{\boldsymbol\theta}(\boldsymbol w)
-
\widehat{\boldsymbol\theta}_{M}
\right\}
\mid\mathcal F
\right]
\nonumber\\
&\quad
=
\mathrm{TEPE}(\boldsymbol w)
+
r_n
-
2
\E\left[
\left\{
\widehat{\boldsymbol\theta}(\boldsymbol w)
-
\boldsymbol\theta^*
\right\}^\top
\boldsymbol\Sigma^{T}
\left(
\widehat{\boldsymbol\theta}_{M}
-
\boldsymbol\theta^*
\right)
\mid\mathcal F
\right].
\end{align}

By \eqref{eq:proof-thm2-decomposition}, the last conditional expectation
equals
\begin{equation}
\label{eq:proof-thm2-cross-term}
\boldsymbol b^\top
\boldsymbol L(\boldsymbol w)^\top
\boldsymbol\Sigma^{T}
\boldsymbol L_{M}\boldsymbol b
+
\sigma^2
\operatorname{tr}
\left\{
\boldsymbol\Sigma^{T}
\boldsymbol L(\boldsymbol w)
\boldsymbol L_{M}^\top
\right\},
\end{equation}
because $\E(\boldsymbol\varepsilon\mid\mathcal F)=\boldsymbol0$ and
$\E(\boldsymbol\varepsilon\boldsymbol\varepsilon^\top\mid\mathcal F)
=\sigma^2\boldsymbol I_{n_s}$.

Since $\lambda_1=0$, $\widehat{\boldsymbol\theta}_1$ is the OLS estimator.
Hence
$\widehat\sigma^2=\|\boldsymbol G\boldsymbol y\|_2^2/
\operatorname{tr}(\boldsymbol G)$, where
$\operatorname{tr}(\boldsymbol G)=n_s-p$. Moreover,
$\boldsymbol G\boldsymbol y=\boldsymbol G\boldsymbol b+
\boldsymbol G\boldsymbol\varepsilon$. The symmetry and idempotency of
$\boldsymbol G$ imply
\begin{equation}
\label{eq:proof-thm2-sigmahat-exp}
\E(\widehat\sigma^2\mid\mathcal F)
=
\sigma^2
+
\frac{\|\boldsymbol G\boldsymbol b\|_2^2}
{\operatorname{tr}(\boldsymbol G)}.
\end{equation}
Substituting \eqref{eq:proof-thm2-cross-term} and
\eqref{eq:proof-thm2-sigmahat-exp} into
\eqref{eq:proof-thm2-first-term-expansion} gives
\begin{equation}
\label{eq:proof-thm2-exact-expansion}
\E\{C(\boldsymbol w)\mid\mathcal F\}
=
\mathrm{TEPE}(\boldsymbol w)
+
r_n
+
d_n(\boldsymbol w)
+
e_n(\boldsymbol w).
\end{equation}
Subtracting $r_n$ proves (23).

It remains to prove the uniform asymptotic equivalence in
(24). From the
definition of $d_n(\boldsymbol w)$ and
Lemma~\ref{lem:proof-thm2-uniform-bounds},
\begin{equation}
\label{eq:proof-thm2-dn-rate}
\sup_{\boldsymbol w\in\mathcal W}
|d_n(\boldsymbol w)|
=
O_p(n_s^{-1/2}+\kappa_n+n_s^{-1}).
\end{equation}
Indeed,
\begin{equation*}
\sup_{\boldsymbol w\in\mathcal W}
\|\boldsymbol L(\boldsymbol w)\boldsymbol b\|_2
\le
\max_{1\le m\le M}
\|\boldsymbol L_m\boldsymbol b\|_2
=
O_p(1),
\end{equation*}
$\|\boldsymbol L_M\boldsymbol b\|_2=O_p(n_s^{-1/2}+\kappa_n)$, and
\begin{equation*}
\frac{\|\boldsymbol G\boldsymbol b\|_2^2}{\operatorname{tr}(\boldsymbol G)}
\le
\frac{\|\boldsymbol b\|_2^2}{n_s-p}
=
O_p(1),
\end{equation*}
where the last bound follows from \eqref{eq:proof-thm2-b-moment}.

We next bound $e_n(\boldsymbol w)$. First,
\begin{equation}
\label{eq:proof-thm2-theta-diff-second}
\sup_{\boldsymbol w\in\mathcal W}
\E\left[
\left\|
\widehat{\boldsymbol\theta}(\boldsymbol w)
-
\widehat{\boldsymbol\theta}_{M}
\right\|_2^2
\mid\mathcal F
\right]
=
O_p(1).
\end{equation}
To see this, write
$\bar{\boldsymbol\theta}_m=\boldsymbol L_m\boldsymbol f$, where
$\boldsymbol f=\{f(\boldsymbol x_1),\ldots,f(\boldsymbol x_{n_s})\}^\top$,
and let
$\bar{\boldsymbol\theta}(\boldsymbol w)=
\sum_{m=1}^{M}w_m\bar{\boldsymbol\theta}_m$. Then
\begin{equation*}
\widehat{\boldsymbol\theta}(\boldsymbol w)
=
\bar{\boldsymbol\theta}(\boldsymbol w)
+
\boldsymbol L(\boldsymbol w)\boldsymbol\varepsilon,
\qquad
\widehat{\boldsymbol\theta}_M
=
\bar{\boldsymbol\theta}_M
+
\boldsymbol L_M\boldsymbol\varepsilon .
\end{equation*}
By \eqref{eq:proof-thm2-fh-moment} and
Lemma~\ref{lem:proof-thm2-uniform-bounds},
$\max_{1\le m\le M}\|\bar{\boldsymbol\theta}_m\|_2=O_p(1)$. Also,
\begin{equation*}
\sup_{\boldsymbol w\in\mathcal W}
\E\left[
\left\|
\{\boldsymbol L(\boldsymbol w)-\boldsymbol L_M\}
\boldsymbol\varepsilon
\right\|_2^2
\mid\mathcal F
\right]
=
\sigma^2
\sup_{\boldsymbol w\in\mathcal W}
\|\boldsymbol L(\boldsymbol w)-\boldsymbol L_M\|_F^2
=
O_p(n_s^{-1}).
\end{equation*}
This proves \eqref{eq:proof-thm2-theta-diff-second}.

From \eqref{eq:proof-thm2-sigmahat-exp} and
\eqref{eq:proof-thm2-b-moment},
$\E(\widehat\sigma^2\mid\mathcal F)=O_p(1)$. Therefore, by the definition of
$e_n(\boldsymbol w)$,
\begin{align*}
\sup_{\boldsymbol w\in\mathcal W}|e_n(\boldsymbol w)|
&\le
\|\widehat{\boldsymbol\Sigma}^{T}-\boldsymbol\Sigma^{T}\|
\sup_{\boldsymbol w\in\mathcal W}
\E\left[
\left\|
\widehat{\boldsymbol\theta}(\boldsymbol w)
-
\widehat{\boldsymbol\theta}_{M}
\right\|_2^2
\mid\mathcal F
\right]
\\
&\quad
+
2\E(\widehat\sigma^2\mid\mathcal F)
\|\widehat{\boldsymbol\Sigma}^{T}-\boldsymbol\Sigma^{T}\|
\sup_{\boldsymbol w\in\mathcal W}
\|\boldsymbol L(\boldsymbol w)\|_F
\|\boldsymbol L_M\|_F .
\end{align*}
Using \eqref{eq:proof-thm2-At-rate},
\eqref{eq:proof-thm2-theta-diff-second}, and
\eqref{eq:proof-thm2-L-fro-rate}, we obtain
\begin{equation}
\label{eq:proof-thm2-en-rate}
\sup_{\boldsymbol w\in\mathcal W}
|e_n(\boldsymbol w)|
=
O_p(n_t^{-1/2}).
\end{equation}

Combining \eqref{eq:proof-thm2-dn-rate} and
\eqref{eq:proof-thm2-en-rate}, and using Assumption~5, which implies
$1/(\sqrt{n_s}\xi_n)=o(1)$, $1/(\sqrt{n_t}\xi_n)=o(1)$,
$1/(n_s\xi_n)=o(1)$, and $\kappa_n/\xi_n=o_p(1)$, gives
\begin{equation}
\label{eq:proof-thm2-rem-relative}
\sup_{\boldsymbol w\in\mathcal W}
|d_n(\boldsymbol w)+e_n(\boldsymbol w)|
=
o_p(\xi_n).
\end{equation}

We finally relate $\mathrm{TEPE}(\boldsymbol w)$ to
$\mathrm{TEPE}^*(\boldsymbol w)$. By the first-order condition for the
population objective in (10),
\begin{equation*}
\boldsymbol\Sigma(\lambda)\boldsymbol\theta^*(\lambda)
=
\E_{\boldsymbol x\sim\rho_{\mathcal X}^{S}}
\left[
\delta(\boldsymbol x)^\lambda
\boldsymbol h(\boldsymbol x)f(\boldsymbol x)
\right].
\end{equation*}
Using the same Chebyshev and union-bound argument as in
Lemma~\ref{lem:proof-thm2-uniform-bounds}, together with
\eqref{eq:proof-thm2-fh-moment} and the definition of $\kappa_n$, gives
\begin{equation}
\label{eq:proof-thm2-mu-rate}
\max_{1\le m\le M}
\left\|
\frac{1}{n_s}
\boldsymbol H^\top
\widehat{\boldsymbol\Delta}_m
\boldsymbol f
-
\E_{\boldsymbol x\sim\rho_{\mathcal X}^{S}}
\left[
\delta(\boldsymbol x)^{\lambda_m}
\boldsymbol h(\boldsymbol x)f(\boldsymbol x)
\right]
\right\|_2
=
O_p\left(\frac{M}{\sqrt{n_s}}+\kappa_n\right).
\end{equation}
Since
\begin{equation*}
\bar{\boldsymbol\theta}_m
=
\widehat{\boldsymbol S}_m^{-1}
\frac{1}{n_s}
\boldsymbol H^\top\widehat{\boldsymbol\Delta}_m\boldsymbol f,
\end{equation*}
\eqref{eq:proof-thm2-unif-S-rate},
\eqref{eq:proof-thm2-unif-Sinv-rate},
\eqref{eq:proof-thm2-mu-rate}, and
\eqref{eq:proof-thm2-theta-star-unif-bound} imply
\begin{equation}
\label{eq:proof-thm2-bartheta-rate}
\max_{1\le m\le M}
\|\bar{\boldsymbol\theta}_m-\boldsymbol\theta^*(\lambda_m)\|_2
=
O_p\left(\frac{M}{\sqrt{n_s}}+\kappa_n\right).
\end{equation}
Therefore, because $\mathcal W$ is the simplex,
\begin{equation}
\label{eq:proof-thm2-barthetaw-rate}
\sup_{\boldsymbol w\in\mathcal W}
\left\|
\sum_{m=1}^{M}w_m\bar{\boldsymbol\theta}_m
-
\boldsymbol\theta^*(\boldsymbol w)
\right\|_2
=
O_p\left(\frac{M}{\sqrt{n_s}}+\kappa_n\right).
\end{equation}

By the definition of $\mathrm{TEPE}(\boldsymbol w)$ and the decomposition
\begin{equation*}
\widehat{\boldsymbol\theta}(\boldsymbol w)
=
\sum_{m=1}^{M}w_m\bar{\boldsymbol\theta}_m
+
\boldsymbol L(\boldsymbol w)\boldsymbol\varepsilon,
\end{equation*}
we have
\begin{align}
\label{eq:proof-thm2-tepe-bar-decomp}
\mathrm{TEPE}(\boldsymbol w)
={}&
\E_{\boldsymbol z\sim\rho_{\mathcal X}^{T}}
\left[
\left\{
f(\boldsymbol z)
-
\boldsymbol h(\boldsymbol z)^\top
\sum_{m=1}^{M}w_m\bar{\boldsymbol\theta}_m
\right\}^{2}
\right]
\nonumber\\
&+
\sigma^2
\operatorname{tr}
\left\{
\boldsymbol\Sigma^{T}
\boldsymbol L(\boldsymbol w)
\boldsymbol L(\boldsymbol w)^\top
\right\}.
\end{align}
The trace term in \eqref{eq:proof-thm2-tepe-bar-decomp} is uniformly
$O_p(n_s^{-1})$. The target quadratic loss is locally Lipschitz on bounded
sets as a function of the coefficient vector, because its Hessian is
$2\boldsymbol\Sigma^T$. Moreover,
$\sum_{m=1}^{M}w_m\bar{\boldsymbol\theta}_m$ and
$\boldsymbol\theta^*(\boldsymbol w)$ are uniformly bounded with probability
tending to one. Hence \eqref{eq:proof-thm2-barthetaw-rate} gives
\begin{equation}
\label{eq:proof-thm2-tepe-approx}
\sup_{\boldsymbol w\in\mathcal W}
\left|
\mathrm{TEPE}(\boldsymbol w)
-
\mathrm{TEPE}^*(\boldsymbol w)
\right|
=
O_p\left(
\frac{M}{\sqrt{n_s}}
+
\kappa_n
+
\frac{1}{n_s}
\right)
=
o_p(\xi_n),
\end{equation}
where the last equality follows from Assumption~5. Since
$\xi_n=\inf_{\boldsymbol w\in\mathcal W}\mathrm{TEPE}^*(\boldsymbol w)$,
\eqref{eq:proof-thm2-tepe-approx} implies
\begin{equation}
\label{eq:proof-thm2-inf-tepe}
\inf_{\boldsymbol w\in\mathcal W}
\mathrm{TEPE}(\boldsymbol w)
=
\xi_n\{1+o_p(1)\}.
\end{equation}
Combining \eqref{eq:proof-thm2-rem-relative} and
\eqref{eq:proof-thm2-inf-tepe}, we have
\begin{equation}
\label{eq:proof-thm2-rem-over-tepe}
\sup_{\boldsymbol w\in\mathcal W}
\left|
\frac{
d_n(\boldsymbol w)+e_n(\boldsymbol w)
}{
\mathrm{TEPE}(\boldsymbol w)
}
\right|
=
o_p(1).
\end{equation}
Finally, \eqref{eq:proof-thm2-exact-expansion} and
\eqref{eq:proof-thm2-rem-over-tepe} imply
\begin{equation*}
\sup_{\boldsymbol w\in\mathcal W}
\left|
\frac{
\E\left[
C(\boldsymbol w)-r_n\mid\mathcal F
\right]
-
\mathrm{TEPE}(\boldsymbol w)
}{
\mathrm{TEPE}(\boldsymbol w)
}
\right|
=
o_p(1).
\end{equation*}
This completes the proof.

\qed

\subsection{Proof of Corollary 1}

If the working model is correctly specified, then there exists
$\boldsymbol\theta_0$ such that
$f(\boldsymbol x)=\boldsymbol h(\boldsymbol x)^\top\boldsymbol\theta_0$ for
$\rho_{\mathcal X}^{S}$- and $\rho_{\mathcal X}^{T}$-almost every
$\boldsymbol x$. Hence $\boldsymbol\theta^*=\boldsymbol\theta_0$ and
$\boldsymbol b=\boldsymbol0$ with probability one. Together with
$\widehat{\boldsymbol\Sigma}^{T}=\boldsymbol\Sigma^{T}$, this implies
$d_n(\boldsymbol w)=e_n(\boldsymbol w)=0$ for every
$\boldsymbol w\in\mathcal W_n$. The result follows directly from (23).

\qed

\subsection{Proof of Theorem~3}

Throughout the proof, all stochastic orders are under
Assumptions~1--4 and 6. Since Assumption~6 implies Assumption~5, the auxiliary
bounds in Lemmas~\ref{lem:proof-thm2-moment-implications} and
\ref{lem:proof-thm2-uniform-bounds} remain valid. In addition, the argument
leading to \eqref{eq:proof-thm2-tepe-approx} gives
\begin{equation}
\label{eq:proof-thm3-tepe-approx}
\sup_{\boldsymbol w\in\mathcal W}
\left|
\mathrm{TEPE}(\boldsymbol w)
-
\mathrm{TEPE}^*(\boldsymbol w)
\right|
=
O_p\left(
\frac{M}{\sqrt{n_s}}
+
\kappa_n
+
\frac{1}{n_s}
\right)
=
o_p(\xi_n),
\end{equation}
where the last equality follows from Assumption~6. Because
$\xi_n=\inf_{\boldsymbol w\in\mathcal W}\mathrm{TEPE}^*(\boldsymbol w)$,
\eqref{eq:proof-thm3-tepe-approx} implies
\begin{equation}
\label{eq:proof-thm3-inf-tepe}
\inf_{\boldsymbol w\in\mathcal W}
\mathrm{TEPE}(\boldsymbol w)
=
\xi_n\{1+o_p(1)\}.
\end{equation}

We next approximate the criterion uniformly over $\mathcal W$. Define
\begin{equation*}
K_n
=
\left(
\widehat{\boldsymbol\theta}_{M}
-
\boldsymbol\theta^*
\right)^\top
\widehat{\boldsymbol\Sigma}^{T}
\left(
\widehat{\boldsymbol\theta}_{M}
-
\boldsymbol\theta^*
\right)
-
\E_{\boldsymbol x\sim\rho_{\mathcal X}^{T}}
\{b(\boldsymbol x)^2\}.
\end{equation*}
Using
\begin{equation*}
\widehat{\boldsymbol\theta}(\boldsymbol w)-\widehat{\boldsymbol\theta}_{M}
=
\{\widehat{\boldsymbol\theta}(\boldsymbol w)-\boldsymbol\theta^*\}
-
\{\widehat{\boldsymbol\theta}_{M}-\boldsymbol\theta^*\},
\end{equation*}
the definition of $C(\boldsymbol w)$, and the expansion of
$\mathrm{TEPE}(\boldsymbol w)$ in \eqref{eq:proof-thm2-tepe-expansion}, we
obtain
\begin{equation}
\label{eq:proof-thm3-C-minus-TEPE-decomp}
C(\boldsymbol w)-K_n-\mathrm{TEPE}(\boldsymbol w)
=
T_{1n}(\boldsymbol w)
-
2T_{2n}(\boldsymbol w)
+
T_{3n}(\boldsymbol w),
\end{equation}
where
\begin{align*}
T_{1n}(\boldsymbol w)
={}&
\left\{
\widehat{\boldsymbol\theta}(\boldsymbol w)
-
\boldsymbol\theta^*
\right\}^\top
\widehat{\boldsymbol\Sigma}^{T}
\left\{
\widehat{\boldsymbol\theta}(\boldsymbol w)
-
\boldsymbol\theta^*
\right\}
\\
&-
\E\left[
\left\{
\widehat{\boldsymbol\theta}(\boldsymbol w)
-
\boldsymbol\theta^*
\right\}^\top
\boldsymbol\Sigma^{T}
\left\{
\widehat{\boldsymbol\theta}(\boldsymbol w)
-
\boldsymbol\theta^*
\right\}
\mid\mathcal F
\right],
\\
T_{2n}(\boldsymbol w)
={}&
\left\{
\widehat{\boldsymbol\theta}(\boldsymbol w)
-
\boldsymbol\theta^*
\right\}^\top
\widehat{\boldsymbol\Sigma}^{T}
\left(
\widehat{\boldsymbol\theta}_{M}
-
\boldsymbol\theta^*
\right),
\\
T_{3n}(\boldsymbol w)
={}&
\phi_n\widehat\sigma^2
\operatorname{tr}
\left\{
\widehat{\boldsymbol\Sigma}^{T}
\boldsymbol L(\boldsymbol w)
\boldsymbol L_{M}^{\top}
\right\}.
\end{align*}
It is enough to show that, for $j=1,2,3$,
\begin{equation}
\label{eq:proof-thm3-Tj-target}
\sup_{\boldsymbol w\in\mathcal W}
|T_{jn}(\boldsymbol w)|
=
o_p(\xi_n).
\end{equation}

We first consider $T_{1n}(\boldsymbol w)$. Let
$\bar{\boldsymbol\theta}_m=\boldsymbol L_m\boldsymbol f$, where
$\boldsymbol f=\{f(\boldsymbol x_1),\ldots,f(\boldsymbol x_{n_s})\}^\top$.
From the proof of Theorem~2, uniformly over $\boldsymbol w\in\mathcal W$,
\begin{equation*}
\left\|
\sum_{m=1}^{M}w_m\bar{\boldsymbol\theta}_m
-
\boldsymbol\theta^*(\boldsymbol w)
\right\|_2
=
O_p\left(\frac{M}{\sqrt{n_s}}+\kappa_n\right).
\end{equation*}
Together with Lemma~\ref{lem:proof-thm2-moment-implications}, this implies
\begin{equation}
\label{eq:proof-thm3-deterministic-bound}
\sup_{\boldsymbol w\in\mathcal W}
\left\|
\sum_{m=1}^{M}w_m\bar{\boldsymbol\theta}_m
-
\boldsymbol\theta^*
\right\|_2
=
O_p(1).
\end{equation}
Furthermore,
\begin{equation}
\label{eq:proof-thm3-Leps-bound}
\sup_{\boldsymbol w\in\mathcal W}
\|\boldsymbol L(\boldsymbol w)\boldsymbol\varepsilon\|_2
\le
\max_{1\le m\le M}
\|\boldsymbol L_m\boldsymbol\varepsilon\|_2
=
O_p\left(\frac{M}{\sqrt{n_s}}\right).
\end{equation}
Indeed, conditional on $\mathcal F$,
\begin{equation*}
\E\{\|\boldsymbol L_m\boldsymbol\varepsilon\|_2^2\mid\mathcal F\}
=
\sigma^2\|\boldsymbol L_m\|_F^2 .
\end{equation*}
Since Lemma~\ref{lem:proof-thm2-uniform-bounds} gives
$\max_m\|\boldsymbol L_m\|_F=O_p(n_s^{-1/2})$, Markov's inequality and the
union bound over $m=1,\ldots,M$ yield
\eqref{eq:proof-thm3-Leps-bound}.

Since
\begin{equation*}
\widehat{\boldsymbol\theta}(\boldsymbol w)-\boldsymbol\theta^*
=
\left\{
\sum_{m=1}^{M}w_m\bar{\boldsymbol\theta}_m
-
\boldsymbol\theta^*
\right\}
+
\boldsymbol L(\boldsymbol w)\boldsymbol\varepsilon,
\end{equation*}
we decompose $T_{1n}(\boldsymbol w)$ as
\begin{align*}
T_{1n}(\boldsymbol w)
={}&
\left\{
\sum_{m=1}^{M}w_m\bar{\boldsymbol\theta}_m-\boldsymbol\theta^*
\right\}^\top
(\widehat{\boldsymbol\Sigma}^{T}-\boldsymbol\Sigma^{T})
\left\{
\sum_{m=1}^{M}w_m\bar{\boldsymbol\theta}_m-\boldsymbol\theta^*
\right\}
\\
&+
2
\left\{
\sum_{m=1}^{M}w_m\bar{\boldsymbol\theta}_m-\boldsymbol\theta^*
\right\}^\top
\widehat{\boldsymbol\Sigma}^{T}
\boldsymbol L(\boldsymbol w)\boldsymbol\varepsilon
\\
&+
\boldsymbol\varepsilon^\top
\boldsymbol L(\boldsymbol w)^\top
\boldsymbol\Sigma^{T}
\boldsymbol L(\boldsymbol w)
\boldsymbol\varepsilon
-
\sigma^2
\operatorname{tr}
\left\{
\boldsymbol\Sigma^{T}
\boldsymbol L(\boldsymbol w)
\boldsymbol L(\boldsymbol w)^\top
\right\}
\\
&+
\boldsymbol\varepsilon^\top
\boldsymbol L(\boldsymbol w)^\top
(\widehat{\boldsymbol\Sigma}^{T}-\boldsymbol\Sigma^{T})
\boldsymbol L(\boldsymbol w)\boldsymbol\varepsilon .
\end{align*}
The first component is uniformly $O_p(n_t^{-1/2})=o_p(\xi_n)$ by
\eqref{eq:proof-thm3-deterministic-bound},
\eqref{eq:proof-thm2-At-rate}, and Assumption~6. The second component is
uniformly $O_p(M/\sqrt{n_s})=o_p(\xi_n)$ by
\eqref{eq:proof-thm3-deterministic-bound},
\eqref{eq:proof-thm3-Leps-bound}, $\|\widehat{\boldsymbol\Sigma}^{T}\|=O_p(1)$,
and Assumption~6. The fourth component is uniformly bounded by
\begin{equation*}
\|\widehat{\boldsymbol\Sigma}^{T}-\boldsymbol\Sigma^{T}\|
\sup_{\boldsymbol w\in\mathcal W}
\|\boldsymbol L(\boldsymbol w)\boldsymbol\varepsilon\|_2^2
=
O_p\left(n_t^{-1/2}\frac{M^2}{n_s}\right)
=
o_p(\xi_n).
\end{equation*}

It remains to bound the centered quadratic component. For $i,j=1,\ldots,M$,
define
\begin{equation*}
Q_{ij}
=
\boldsymbol\varepsilon^\top
\boldsymbol L_i^\top
\boldsymbol\Sigma^{T}
\boldsymbol L_j
\boldsymbol\varepsilon
-
\sigma^2
\operatorname{tr}
\left\{
\boldsymbol\Sigma^{T}
\boldsymbol L_j
\boldsymbol L_i^\top
\right\}.
\end{equation*}
The centered quadratic component equals
$\sum_{i=1}^{M}\sum_{j=1}^{M}w_iw_jQ_{ij}$ and is bounded in absolute value by
$\max_{1\le i,j\le M}|Q_{ij}|$. Conditional on $\mathcal F$, Assumption~2
implies the quadratic-form variance bound
\begin{equation*}
\operatorname{Var}(Q_{ij}\mid\mathcal F)
\le
C
\left\|
\boldsymbol L_i^\top
\boldsymbol\Sigma^{T}
\boldsymbol L_j
\right\|_F^2
=
O_p(n_s^{-2})
\end{equation*}
uniformly over $i,j$. Chebyshev's inequality and the union bound then give
\begin{equation*}
\max_{1\le i,j\le M}
|Q_{ij}|
=
O_p\left(\frac{M}{n_s}\right)
=
o_p(\xi_n).
\end{equation*}
Combining these bounds gives
\begin{equation}
\label{eq:proof-thm3-T1-rate}
\sup_{\boldsymbol w\in\mathcal W}
|T_{1n}(\boldsymbol w)|
=
o_p(\xi_n).
\end{equation}

We next consider $T_{2n}(\boldsymbol w)$. Since $\lambda_M=1$, the
importance-weighting identity gives $\boldsymbol\theta^*(1)=\boldsymbol\theta^*$.
By \eqref{eq:proof-thm2-bartheta-rate} and
\eqref{eq:proof-thm3-Leps-bound},
\begin{equation}
\label{eq:proof-thm3-M-endpoint-rate}
\|\widehat{\boldsymbol\theta}_{M}-\boldsymbol\theta^*\|_2
=
O_p\left(\frac{M}{\sqrt{n_s}}+\kappa_n\right)
=
o_p(\xi_n),
\end{equation}
where the last equality follows from Assumption~6. Also,
\begin{equation*}
\sup_{\boldsymbol w\in\mathcal W}
\|\widehat{\boldsymbol\theta}(\boldsymbol w)-\boldsymbol\theta^*\|_2
=
O_p(1),
\qquad
\|\widehat{\boldsymbol\Sigma}^{T}\|=O_p(1).
\end{equation*}
Therefore,
\begin{equation}
\label{eq:proof-thm3-T2-rate}
\sup_{\boldsymbol w\in\mathcal W}
|T_{2n}(\boldsymbol w)|
=
o_p(\xi_n).
\end{equation}

It remains to consider $T_{3n}(\boldsymbol w)$. Lemma~\ref{lem:proof-thm2-uniform-bounds}
and $\|\widehat{\boldsymbol\Sigma}^{T}\|=O_p(1)$ give
\begin{equation*}
\sup_{\boldsymbol w\in\mathcal W}
\left|
\operatorname{tr}
\left\{
\widehat{\boldsymbol\Sigma}^{T}
\boldsymbol L(\boldsymbol w)
\boldsymbol L_{M}^\top
\right\}
\right|
=
O_p(n_s^{-1}).
\end{equation*}
Moreover, $\widehat\sigma^2=O_p(1)$. Indeed,
$\widehat\sigma^2=\|\boldsymbol G\boldsymbol y\|_2^2/\operatorname{tr}(\boldsymbol G)$
and
$\boldsymbol G\boldsymbol y=\boldsymbol G\boldsymbol b+
\boldsymbol G\boldsymbol\varepsilon$. Since $\boldsymbol G$ is idempotent,
$\|\boldsymbol G\boldsymbol b\|_2^2\le\|\boldsymbol b\|_2^2$, and hence
\begin{equation*}
\frac{\|\boldsymbol G\boldsymbol b\|_2^2}{\operatorname{tr}(\boldsymbol G)}
=
O_p(1)
\end{equation*}
by \eqref{eq:proof-thm2-b-moment}. Also,
\begin{equation*}
\E\left[
\frac{\|\boldsymbol G\boldsymbol\varepsilon\|_2^2}{\operatorname{tr}(\boldsymbol G)}
\mid \mathcal F
\right]
=
\sigma^2,
\end{equation*}
so Markov's inequality gives
$\|\boldsymbol G\boldsymbol\varepsilon\|_2^2/\operatorname{tr}(\boldsymbol G)
=O_p(1)$. Thus
\begin{equation}
\label{eq:proof-thm3-T3-rate}
\sup_{\boldsymbol w\in\mathcal W}
|T_{3n}(\boldsymbol w)|
=
O_p\left(\frac{\phi_n}{n_s}\right)
=
o_p(\xi_n),
\end{equation}
where the last equality follows from Assumption~6. Combining
\eqref{eq:proof-thm3-C-minus-TEPE-decomp},
\eqref{eq:proof-thm3-T1-rate}, \eqref{eq:proof-thm3-T2-rate}, and
\eqref{eq:proof-thm3-T3-rate} gives
\begin{equation}
\label{eq:proof-thm3-criterion-approx}
\sup_{\boldsymbol w\in\mathcal W}
\left|
C(\boldsymbol w)
-
K_n
-
\mathrm{TEPE}(\boldsymbol w)
\right|
=
o_p(\xi_n).
\end{equation}

Let
\begin{equation*}
\boldsymbol w_n^o
\in
\argmin_{\boldsymbol w\in\mathcal W}
\mathrm{TEPE}(\boldsymbol w).
\end{equation*}
Such a minimizer exists because $\mathcal W$ is compact and
$\mathrm{TEPE}(\boldsymbol w)$ is continuous in $\boldsymbol w$. Since
$\widehat{\boldsymbol w}$ minimizes $C(\boldsymbol w)$ over $\mathcal W$, we
have
\begin{align*}
\mathrm{TEPE}(\widehat{\boldsymbol w})
&\le
C(\widehat{\boldsymbol w})-K_n
+
\sup_{\boldsymbol w\in\mathcal W}
|C(\boldsymbol w)-K_n-\mathrm{TEPE}(\boldsymbol w)|
\\
&\le
C(\boldsymbol w_n^o)-K_n
+
\sup_{\boldsymbol w\in\mathcal W}
|C(\boldsymbol w)-K_n-\mathrm{TEPE}(\boldsymbol w)|
\\
&\le
\mathrm{TEPE}(\boldsymbol w_n^o)
+
2
\sup_{\boldsymbol w\in\mathcal W}
|C(\boldsymbol w)-K_n-\mathrm{TEPE}(\boldsymbol w)|
\\
&=
\inf_{\boldsymbol w\in\mathcal W}\mathrm{TEPE}(\boldsymbol w)
+
o_p(\xi_n).
\end{align*}
By \eqref{eq:proof-thm3-inf-tepe},
\begin{equation*}
\inf_{\boldsymbol w\in\mathcal W}\mathrm{TEPE}(\boldsymbol w)
=
\xi_n\{1+o_p(1)\}.
\end{equation*}
Hence
\begin{equation*}
\mathrm{TEPE}(\widehat{\boldsymbol w})
\le
\inf_{\boldsymbol w\in\mathcal W}
\mathrm{TEPE}(\boldsymbol w)\{1+o_p(1)\}.
\end{equation*}
The reverse inequality follows from the definition of the infimum. Therefore,
\begin{equation*}
\frac{
\mathrm{TEPE}(\widehat{\boldsymbol w})
}{
\inf_{\boldsymbol w\in\mathcal W}
\mathrm{TEPE}(\boldsymbol w)
}
=
1+o_p(1).
\end{equation*}
This proves (26) and completes the proof.

\qed

\subsection{Proof of Proposition~1}

Under correct specification, there exists
$\boldsymbol\theta_0\in\mathbb R^p$ such that
$f(\boldsymbol x)=\boldsymbol h(\boldsymbol x)^\top\boldsymbol\theta_0$
for $\rho_{\mathcal X}^{S}$- and $\rho_{\mathcal X}^{T}$-almost every
$\boldsymbol x$. Hence $\boldsymbol\theta^*=\boldsymbol\theta_0$ and
$b(\boldsymbol x)=0$ for both $\rho_{\mathcal X}^{S}$- and
$\rho_{\mathcal X}^{T}$-almost every $\boldsymbol x$. Therefore, with
probability one,
\begin{equation*}
\boldsymbol y
=
\boldsymbol H\boldsymbol\theta^*
+
\boldsymbol\varepsilon,
\qquad
\boldsymbol\varepsilon=(\varepsilon_1,\ldots,\varepsilon_{n_s})^\top .
\end{equation*}

For each $m=1,\ldots,M$, the matrix $\boldsymbol L_m$ satisfies
$\boldsymbol L_m\boldsymbol H=\boldsymbol I_p$. Hence, for any
$\boldsymbol w\in\mathcal W$,
\begin{equation*}
\boldsymbol L(\boldsymbol w)\boldsymbol H
=
\sum_{m=1}^{M}w_m\boldsymbol L_m\boldsymbol H
=
\boldsymbol I_p .
\end{equation*}
It follows that
\begin{equation*}
\widehat{\boldsymbol\theta}(\boldsymbol w)
=
\boldsymbol L(\boldsymbol w)\boldsymbol y
=
\boldsymbol\theta^*
+
\boldsymbol L(\boldsymbol w)\boldsymbol\varepsilon .
\end{equation*}
Since
$f(\boldsymbol z)=\boldsymbol h(\boldsymbol z)^\top\boldsymbol\theta^*$
for $\rho_{\mathcal X}^{T}$-almost every $\boldsymbol z$, the definition of
$\mathrm{TEPE}$ gives
\begin{align}
\label{eq:proof-prop1-risk-w}
\mathrm{TEPE}(\boldsymbol w)
&=
\E\left[
\left\{
\boldsymbol h(\boldsymbol z)^\top
\boldsymbol L(\boldsymbol w)\boldsymbol\varepsilon
\right\}^{2}
\mid
\mathcal F
\right]
\nonumber\\
&=
\E\left[
\boldsymbol\varepsilon^\top
\boldsymbol L(\boldsymbol w)^\top
\boldsymbol\Sigma^{T}
\boldsymbol L(\boldsymbol w)
\boldsymbol\varepsilon
\mid
\mathcal F
\right]
\nonumber\\
&=
\sigma^2
\operatorname{tr}
\left\{
\boldsymbol\Sigma^{T}
\boldsymbol L(\boldsymbol w)
\boldsymbol L(\boldsymbol w)^\top
\right\},
\end{align}
where the second equality uses the independence of the target covariate
$\boldsymbol z$ from the source errors and $\mathcal F$, and the last equality
uses
$\E(\boldsymbol\varepsilon\mid\mathcal F)=\boldsymbol0$ and
$\E(\boldsymbol\varepsilon\boldsymbol\varepsilon^\top\mid\mathcal F)
=\sigma^2\boldsymbol I_{n_s}$.

Since $\lambda_1=0$,
\begin{equation*}
\boldsymbol L_1
=
(\boldsymbol H^\top\boldsymbol H)^{-1}\boldsymbol H^\top,
\qquad
\boldsymbol L_1^\top
=
\boldsymbol H(\boldsymbol H^\top\boldsymbol H)^{-1}.
\end{equation*}
Using $\boldsymbol L(\boldsymbol w)\boldsymbol H=\boldsymbol I_p$, we have
\begin{equation*}
\boldsymbol L(\boldsymbol w)\boldsymbol L_1^\top
=
(\boldsymbol H^\top\boldsymbol H)^{-1}
=
\boldsymbol L_1\boldsymbol L_1^\top .
\end{equation*}
Taking transposes also gives
\begin{equation*}
\boldsymbol L_1\boldsymbol L(\boldsymbol w)^\top
=
\boldsymbol L_1\boldsymbol L_1^\top .
\end{equation*}
Consequently,
\begin{equation}
\label{eq:proof-prop1-LwL1-identity}
\left\{
\boldsymbol L(\boldsymbol w)-\boldsymbol L_1
\right\}
\left\{
\boldsymbol L(\boldsymbol w)-\boldsymbol L_1
\right\}^{\top}
=
\boldsymbol L(\boldsymbol w)\boldsymbol L(\boldsymbol w)^\top
-
\boldsymbol L_1\boldsymbol L_1^\top .
\end{equation}
Combining \eqref{eq:proof-prop1-risk-w} and
\eqref{eq:proof-prop1-LwL1-identity}, we obtain
\begin{align*}
&
\mathrm{TEPE}(\boldsymbol w)
-
\mathrm{TEPE}\{\boldsymbol w^{(1)}\}
\\
&\quad =
\sigma^2
\operatorname{tr}
\left[
\boldsymbol\Sigma^{T}
\left\{
\boldsymbol L(\boldsymbol w)-\boldsymbol L_1
\right\}
\left\{
\boldsymbol L(\boldsymbol w)-\boldsymbol L_1
\right\}^{\top}
\right].
\end{align*}
Since $\boldsymbol\Sigma^{T}$ is positive semidefinite and
\[
\left\{
\boldsymbol L(\boldsymbol w)-\boldsymbol L_1
\right\}
\left\{
\boldsymbol L(\boldsymbol w)-\boldsymbol L_1
\right\}^{\top}
\]
is positive semidefinite, the trace on the right-hand side is nonnegative.
This proves (27).

\qed

\subsection{Proof of Theorem~4}

Write $\mathcal N_n=\mathcal N_n( a_{n})$ and
$\mathcal N_n^c=\{1,\ldots,M\}\setminus\mathcal N_n$. The argument is carried
out on the event where the relevant sample Gram matrices are nonsingular; the
bounds below imply that this event has probability tending to one. Under
correct specification, there exists $\boldsymbol\theta_0\in\mathbb R^p$ such
that
$f(\boldsymbol x)=\boldsymbol h(\boldsymbol x)^\top\boldsymbol\theta_0$
for $\rho_{\mathcal X}^{S}$- and
$\rho_{\mathcal X}^{T}$-almost every $\boldsymbol x$. By uniqueness of the
target projection parameter, $\boldsymbol\theta^*=\boldsymbol\theta_0$. Hence
$b(\boldsymbol x)=0$ for both $\rho_{\mathcal X}^{S}$- and
$\rho_{\mathcal X}^{T}$-almost every $\boldsymbol x$. Therefore, with
probability one,
\begin{equation*}
\boldsymbol y
=
\boldsymbol H\boldsymbol\theta^*
+
\boldsymbol\varepsilon,
\qquad
\boldsymbol\varepsilon=(\varepsilon_1,\ldots,\varepsilon_{n_s})^\top .
\end{equation*}
Since $\boldsymbol L_m\boldsymbol H=\boldsymbol I_p$, we have
\begin{equation}
\label{eq:proof-thm4-theta-noise}
\widehat{\boldsymbol\theta}_m
=
\boldsymbol\theta^*
+
\boldsymbol L_m\boldsymbol\varepsilon,
\qquad
\widehat{\boldsymbol\theta}(\boldsymbol w)
=
\boldsymbol\theta^*
+
\boldsymbol L(\boldsymbol w)\boldsymbol\varepsilon .
\end{equation}

Define
\begin{equation*}
Q_n(\boldsymbol w)
=
\left\{
\widehat{\boldsymbol\theta}(\boldsymbol w)
-
\widehat{\boldsymbol\theta}_M
\right\}^{\top}
\widehat{\boldsymbol\Sigma}^{T}
\left\{
\widehat{\boldsymbol\theta}(\boldsymbol w)
-
\widehat{\boldsymbol\theta}_M
\right\}
\end{equation*}
and
\begin{equation*}
\tau_{m,n}
=
\operatorname{tr}
\left\{
\widehat{\boldsymbol\Sigma}^{T}
\boldsymbol L_m
\boldsymbol L_M^{\top}
\right\},
\qquad m=1,\ldots,M .
\end{equation*}
Then the criterion can be written as
\begin{equation}
\label{eq:proof-thm4-C-decomp}
C(\boldsymbol w)
=
Q_n(\boldsymbol w)
+
\phi_n\widehat\sigma^2
\sum_{m=1}^{M}w_m\tau_{m,n}.
\end{equation}
Since $\widehat{\boldsymbol\Sigma}^{T}$ is nonnegative definite,
\begin{equation}
\label{eq:proof-thm4-Q-nonnegative}
Q_n(\boldsymbol w)\ge0
\end{equation}
for every $\boldsymbol w\in\mathcal W$.

We first establish a uniform expansion for the trace term. Let
\begin{equation*}
\widehat{\boldsymbol S}_m
=
\frac{1}{n_s}
\boldsymbol H^\top
\widehat{\boldsymbol\Delta}_m
\boldsymbol H,
\qquad
\widehat{\boldsymbol R}_m
=
\frac{1}{n_s}
\boldsymbol H^\top
\widehat{\boldsymbol\Delta}_m
\widehat{\boldsymbol\Delta}_M
\boldsymbol H .
\end{equation*}
Since
\begin{equation*}
\boldsymbol L_m
=
\widehat{\boldsymbol S}_m^{-1}
\left(
\frac{1}{n_s}
\boldsymbol H^\top\widehat{\boldsymbol\Delta}_m
\right),
\qquad
\boldsymbol L_M^\top
=
\left(
\frac{1}{n_s}
\widehat{\boldsymbol\Delta}_M\boldsymbol H
\right)
\widehat{\boldsymbol S}_M^{-1},
\end{equation*}
we have
\begin{equation}
\label{eq:proof-thm4-trace-representation}
n_s\tau_{m,n}
=
\operatorname{tr}
\left\{
\widehat{\boldsymbol\Sigma}^{T}
\widehat{\boldsymbol S}_m^{-1}
\widehat{\boldsymbol R}_m
\widehat{\boldsymbol S}_M^{-1}
\right\}.
\end{equation}

By the same entrywise Chebyshev and union-bound argument used in
Lemma~\ref{lem:proof-thm2-uniform-bounds}, together with Assumptions~4 and
7,
\begin{equation*}
\max_{1\le m\le M}
\left\|
\widehat{\boldsymbol S}_m
-
\boldsymbol\Sigma(\lambda_m)
\right\|
=
O_p\left(\frac{M}{\sqrt{n_s}}+\kappa_n\right).
\end{equation*}
Assumption~9 implies $M/\sqrt{n_s}=o(\eta_n)$ and
$\kappa_n=o_p(\eta_n)$. Hence
\begin{equation}
\label{eq:proof-thm4-S-opeta}
\max_{1\le m\le M}
\left\|
\widehat{\boldsymbol S}_m
-
\boldsymbol\Sigma(\lambda_m)
\right\|
=
o_p(\eta_n).
\end{equation}
Combining \eqref{eq:proof-thm4-S-opeta} with Assumption~1 and Weyl's
inequality gives
$\max_m\|\widehat{\boldsymbol S}_m^{-1}\|=O_p(1)$ and
\begin{equation}
\label{eq:proof-thm4-Sinv-expansion}
\max_{1\le m\le M}
\left\|
\widehat{\boldsymbol S}_m^{-1}
-
\boldsymbol\Sigma(\lambda_m)^{-1}
\right\|
=
o_p(\eta_n).
\end{equation}
Indeed,
\begin{equation*}
\widehat{\boldsymbol S}_m^{-1}
-
\boldsymbol\Sigma(\lambda_m)^{-1}
=
\widehat{\boldsymbol S}_m^{-1}
\{\boldsymbol\Sigma(\lambda_m)-\widehat{\boldsymbol S}_m\}
\boldsymbol\Sigma(\lambda_m)^{-1},
\end{equation*}
and both inverse matrices are uniformly bounded with probability tending to
one.

Next consider $\widehat{\boldsymbol R}_m$. Its population counterpart is
\begin{equation*}
\boldsymbol\Gamma(\lambda_m)
=
\E_{\boldsymbol x\sim\rho_{\mathcal X}^{S}}
\left[
\delta(\boldsymbol x)^{1+\lambda_m}
\boldsymbol h(\boldsymbol x)\boldsymbol h(\boldsymbol x)^\top
\right].
\end{equation*}
The oracle empirical part satisfies
\begin{equation*}
\max_{1\le m\le M}
\left\|
\frac{1}{n_s}
\sum_{i=1}^{n_s}
\delta(\boldsymbol x_i)^{1+\lambda_m}
\boldsymbol h(\boldsymbol x_i)\boldsymbol h(\boldsymbol x_i)^\top
-
\boldsymbol\Gamma(\lambda_m)
\right\|
=
O_p\left(\frac{M}{\sqrt{n_s}}\right)
=
o_p(\eta_n),
\end{equation*}
by the same entrywise Chebyshev and union-bound argument. Here Assumption~4
bounds $\delta(\boldsymbol x)^{1+\lambda_m}$ uniformly because
$1+\lambda_m\in[1,2]$, and Assumption~7 gives the fourth moment of
$\boldsymbol h(\boldsymbol x)$.

For the feasible replacement part, since $\lambda_M=1$ belongs to the
candidate exponent set, the candidate-set definition of $\kappa_n$ gives
\begin{equation*}
\max_{1\le i\le n_s}
|\widehat\delta(\boldsymbol x_i)-\delta(\boldsymbol x_i)|
\le
\kappa_n .
\end{equation*}
Together with Assumption~4 and $\kappa_n=o_p(1)$, this gives
$\max_i\widehat\delta(\boldsymbol x_i)=O_p(1)$. For each candidate exponent
$\lambda_m$,
\begin{align*}
&
\max_{1\le i\le n_s}
\left|
\widehat\delta(\boldsymbol x_i)^{1+\lambda_m}
-
\delta(\boldsymbol x_i)^{1+\lambda_m}
\right|
\\
&\quad
\le
\max_i\widehat\delta(\boldsymbol x_i)^{\lambda_m}
\max_i|\widehat\delta(\boldsymbol x_i)-\delta(\boldsymbol x_i)|
+
\overline\delta
\max_i
\left|
\widehat\delta(\boldsymbol x_i)^{\lambda_m}
-
\delta(\boldsymbol x_i)^{\lambda_m}
\right|.
\end{align*}
Taking the maximum over $m=1,\ldots,M$ on both sides gives
\begin{equation*}
\max_{1\le m\le M}
\max_{1\le i\le n_s}
\left|
\widehat\delta(\boldsymbol x_i)^{1+\lambda_m}
-
\delta(\boldsymbol x_i)^{1+\lambda_m}
\right|
=
O_p(\kappa_n)
=
o_p(\eta_n).
\end{equation*}
Consequently,
\begin{align*}
&
\max_{1\le m\le M}
\left\|
\frac{1}{n_s}
\sum_{i=1}^{n_s}
\left\{
\widehat\delta(\boldsymbol x_i)^{1+\lambda_m}
-
\delta(\boldsymbol x_i)^{1+\lambda_m}
\right\}
\boldsymbol h(\boldsymbol x_i)\boldsymbol h(\boldsymbol x_i)^\top
\right\|
\\
&\quad
\le
O_p(\kappa_n)
\frac{1}{n_s}
\sum_{i=1}^{n_s}
\|\boldsymbol h(\boldsymbol x_i)\|_2^2
=
O_p(\kappa_n)
=
o_p(\eta_n).
\end{align*}
Combining the oracle and feasible bounds yields
\begin{equation}
\label{eq:proof-thm4-R-opeta}
\max_{1\le m\le M}
\left\|
\widehat{\boldsymbol R}_m
-
\boldsymbol\Gamma(\lambda_m)
\right\|
=
o_p(\eta_n).
\end{equation}

Similarly, using the unlabeled target covariates and the identity
\begin{equation*}
\E_{\boldsymbol x\sim\rho_{\mathcal X}^{T}}
\|\boldsymbol h(\boldsymbol x)\|_2^4
=
\E_{\boldsymbol x\sim\rho_{\mathcal X}^{S}}
\left[
\delta(\boldsymbol x)\|\boldsymbol h(\boldsymbol x)\|_2^4
\right]
<\infty,
\end{equation*}
Chebyshev's inequality applied entrywise gives
\begin{equation}
\label{eq:proof-thm4-Sigmat-opeta}
\left\|
\widehat{\boldsymbol\Sigma}^{T}
-
\boldsymbol\Sigma^{T}
\right\|
=
O_p(n_t^{-1/2})
=
o_p(\eta_n),
\end{equation}
where the last equality follows from Assumption~9 and
$n=\min\{n_s,n_t\}$.

Combining \eqref{eq:proof-thm4-trace-representation},
\eqref{eq:proof-thm4-Sinv-expansion}, \eqref{eq:proof-thm4-R-opeta}, and
\eqref{eq:proof-thm4-Sigmat-opeta}, we obtain
\begin{equation}
\label{eq:proof-trace-nu-uniform}
\max_{1\le m\le M}
\left|
n_s\tau_{m,n}
-
\nu(\lambda_m)
\right|
=
o_p(\eta_n).
\end{equation}
Indeed, the population counterpart of $n_s\tau_{m,n}$ is
\begin{equation*}
\operatorname{tr}
\left[
\boldsymbol\Sigma^{T}
\boldsymbol\Sigma(\lambda_m)^{-1}
\boldsymbol\Gamma(\lambda_m)
\boldsymbol\Sigma(1)^{-1}
\right].
\end{equation*}
By the importance-weighting identity, $\boldsymbol\Sigma^{T}=\boldsymbol\Sigma(1)$.
Hence, by cyclic invariance of the trace,
\begin{align*}
\operatorname{tr}
\left[
\boldsymbol\Sigma^{T}
\boldsymbol\Sigma(\lambda_m)^{-1}
\boldsymbol\Gamma(\lambda_m)
\boldsymbol\Sigma(1)^{-1}
\right]
&=
\operatorname{tr}
\left[
\boldsymbol\Sigma(1)
\boldsymbol\Sigma(\lambda_m)^{-1}
\boldsymbol\Gamma(\lambda_m)
\boldsymbol\Sigma(1)^{-1}
\right]
\\
&=
\operatorname{tr}
\left\{
\boldsymbol\Sigma(\lambda_m)^{-1}
\boldsymbol\Gamma(\lambda_m)
\right\}
=
\nu(\lambda_m).
\end{align*}

We next control the residual variance estimator. Under correct specification,
\begin{equation*}
\widehat\sigma^2
=
\frac{\|\boldsymbol G\boldsymbol\varepsilon\|_2^2}{\operatorname{tr}(\boldsymbol G)}.
\end{equation*}
Since $\boldsymbol G$ is symmetric and idempotent,
$\operatorname{tr}(\boldsymbol G)=n_s-p$ and
$\operatorname{tr}(\boldsymbol G^2)=n_s-p$. Conditional on $\mathcal F$,
$\E(\widehat\sigma^2\mid\mathcal F)=\sigma^2$. Assumption~2 and the standard
variance bound for quadratic forms in conditionally independent errors give
$\operatorname{Var}(\widehat\sigma^2\mid\mathcal F)=O(n_s^{-1})$. Thus
\begin{equation}
\label{eq:proof-sigma-consistency}
\widehat\sigma^2
=
\sigma^2+o_p(1).
\end{equation}
In particular, because $\sigma^2>0$, $\widehat\sigma^2$ is bounded away from
zero with probability tending to one.

We now compare the criterion at the selected weights with the criterion at the
OLS endpoint. Let $\boldsymbol w^{(1)}$ denote the vertex that places unit
mass on the first candidate. Since $\widehat{\boldsymbol w}$ minimizes
$C(\boldsymbol w)$ over $\mathcal W$,
\begin{equation}
\label{eq:proof-min-ineq}
C(\widehat{\boldsymbol w})-C(\boldsymbol w^{(1)})\le0.
\end{equation}
Using \eqref{eq:proof-thm4-C-decomp}, we have
\begin{align*}
C(\widehat{\boldsymbol w})-C(\boldsymbol w^{(1)})
={}&
Q_n(\widehat{\boldsymbol w})-Q_n(\boldsymbol w^{(1)})
+
\phi_n\widehat\sigma^2
\sum_{m=1}^{M}\widehat w_m(\tau_{m,n}-\tau_{1,n}).
\end{align*}
Since $Q_n(\widehat{\boldsymbol w})\ge0$,
\begin{equation}
\label{eq:proof-min-lower}
C(\widehat{\boldsymbol w})-C(\boldsymbol w^{(1)})
\ge
-
Q_n(\boldsymbol w^{(1)})
+
\phi_n\widehat\sigma^2
\sum_{m=1}^{M}\widehat w_m(\tau_{m,n}-\tau_{1,n}).
\end{equation}

We next bound $Q_n(\boldsymbol w^{(1)})$. By
\eqref{eq:proof-thm4-theta-noise},
\begin{equation*}
Q_n(\boldsymbol w^{(1)})
=
\boldsymbol\varepsilon^\top
(\boldsymbol L_1-\boldsymbol L_M)^\top
\widehat{\boldsymbol\Sigma}^{T}
(\boldsymbol L_1-\boldsymbol L_M)
\boldsymbol\varepsilon .
\end{equation*}
The bound $\max_m\|\boldsymbol L_m\|_F=O_p(n_s^{-1/2})$ follows from the same
argument as in Lemma~\ref{lem:proof-thm2-uniform-bounds}. Since
$\|\widehat{\boldsymbol\Sigma}^{T}\|=O_p(1)$,
\begin{equation*}
\E\{Q_n(\boldsymbol w^{(1)})\mid\mathcal F\}=O_p(n_s^{-1}).
\end{equation*}
By Markov's inequality,
\begin{equation}
\label{eq:proof-Q1-rate}
Q_n(\boldsymbol w^{(1)})
=
O_p(n_s^{-1}).
\end{equation}

By \eqref{eq:proof-trace-nu-uniform}, uniformly in $m$,
$n_s\tau_{m,n}=\nu(\lambda_m)+o_p(\eta_n)$. Since
$\widehat{\boldsymbol w}\in\mathcal W$,
\begin{equation}
\label{eq:proof-weighted-trace-gap}
n_s
\sum_{m=1}^{M}
\widehat w_m(\tau_{m,n}-\tau_{1,n})
=
\sum_{m=1}^{M}
\widehat w_m
\{\nu(\lambda_m)-\nu(0)\}
+
o_p(\eta_n).
\end{equation}
By Assumption~8,
$\nu(\lambda_m)\ge\nu(0)$ for $m=1,\ldots,M$. Moreover, by the definition of
$\eta_n$,
\begin{equation*}
\nu(\lambda_m)-\nu(0)\ge\eta_n,
\qquad m\notin\mathcal N_n .
\end{equation*}
Therefore,
\begin{equation}
\label{eq:proof-penalty-gap}
n_s
\sum_{m=1}^{M}
\widehat w_m(\tau_{m,n}-\tau_{1,n})
\ge
\eta_n
\sum_{m\notin\mathcal N_n}
\widehat w_m
+
o_p(\eta_n).
\end{equation}

Combining \eqref{eq:proof-min-ineq}, \eqref{eq:proof-min-lower},
\eqref{eq:proof-Q1-rate}, \eqref{eq:proof-sigma-consistency}, and
\eqref{eq:proof-penalty-gap}, we obtain
\begin{equation*}
0
\ge
-
O_p(n_s^{-1})
+
\frac{\phi_n}{n_s}
\{\sigma^2+o_p(1)\}
\left[
\eta_n
\sum_{m\notin\mathcal N_n}
\widehat w_m
+
o_p(\eta_n)
\right].
\end{equation*}
Since $\sigma^2>0$, the preceding inequality implies
\begin{equation*}
\sum_{m\notin\mathcal N_n}
\widehat w_m
\le
O_p\left(\frac{1}{\phi_n\eta_n}\right)
+
o_p(1).
\end{equation*}
By Assumption~9, $\phi_n\eta_n\to\infty$. Hence
\begin{equation*}
\sum_{m\notin\mathcal N_n( a_{n})}
\widehat w_m
\xrightarrow{p}
0,
\end{equation*}
which proves (28).

It remains to prove the second assertion. If there exists a constant $c>0$
such that $\lambda_2\ge c$ for all sufficiently large $n$, then, because
$ a_{n}\to0$, $\mathcal N_n( a_{n})=\{1\}$ eventually.
Therefore,
\begin{equation*}
1-\widehat w_1
=
\sum_{m\notin\mathcal N_n( a_{n})}
\widehat w_m
\xrightarrow{p}
0,
\end{equation*}
or equivalently,
$\widehat w_1 \overset{p}{\longrightarrow} 1$.
This proves (29) and completes the proof. 

\qed
\section{Implementation details for the simulation studies}
\label{sec:supp-simulation-implementation}

This section provides implementation details for the density-ratio estimators used in the simulation studies. 

\subsection{One-dimensional misspecification design}
\label{subsec:supp-dr-one-dimensional}

We consider KLIEP, kernel density estimation (KDE), and the oracle density ratio. KLIEP is implemented with Gaussian kernels, and its bandwidth is selected by five-fold cross-validation following \citet{Sugiyamaetal2008}. For KDE, the source and target covariate densities are estimated separately using Gaussian kernels, with bandwidths selected by Silverman's rule of thumb \citep{Silverman1986}; their ratio is then evaluated at the source covariates. The oracle benchmark replaces the estimated density ratio by its true value.

\subsection{Six-dimensional misspecification design}
\label{subsec:supp-dr-six-dimensional}

We consider KLIEP, a Gaussian plug-in estimator, and the oracle density ratio. KLIEP is implemented as above. For the Gaussian plug-in estimator, multivariate Gaussian distributions are fitted separately to the source and target covariates. Let $\widehat{\boldsymbol\mu}^{S}$ and $\widehat V^{S}$ be the source sample mean and covariance matrix, and let $\widehat{\boldsymbol\mu}^{T}$ and $\widehat V^{T}$ be their target-sample counterparts. The density-ratio estimate at a source covariate $\boldsymbol{x}_i$ is
\[
\widehat{\delta}(\boldsymbol{x}_i)
=
\frac{
\varphi_d(\boldsymbol{x}_i;\widehat{\boldsymbol\mu}^{T},\widehat V^{T})
}{
\varphi_d(\boldsymbol{x}_i;\widehat{\boldsymbol\mu}^{S},\widehat V^{S})
},
\]
where $\varphi_d(\cdot;\boldsymbol\mu,V)$ denotes the $d$-dimensional Gaussian density.  The oracle benchmark uses the true density ratio.

\subsection{Correctly specified design}
\label{subsec:supp-dr-correct-specification}

The main paper reports results based on KLIEP under both candidate exponent grids. The supplementary results use KDE and the oracle density ratio. The KDE implementation estimates the source and target covariate densities separately and evaluates their ratio at the source covariates.
The oracle benchmark uses the true density ratio.

\section{Additional simulations under model misspecification}
\label{sec:supp-misspecification-simulations}

Section~5.1 of the main paper reports the six-dimensional results obtained with KLIEP under the fixed candidate grid. This section presents a one-dimensional misspecification experiment and the remaining six-dimensional results, including the diverging-grid KLIEP results and results based on alternative and oracle density ratios.

We consider a one-dimensional design adapted from \citet{SugiyamaKrauledatMuller2007}. The response is generated from
\[
y=f(x)+\varepsilon=\operatorname{sinc}(x)+\varepsilon,
\]
where $\operatorname{sinc}(x)=\sin(x)/x$ for $x\neq0$, $\operatorname{sinc}(0)=1$, and $\varepsilon\sim N(0,\sigma^2)$. The source and target covariate distributions are $N(1,0.5^2)$ and $N(\zeta,0.25^2)$, respectively, with $\zeta\in\{1.8,2.0\}$. For each value of $\zeta$, $\sigma^2$ is chosen so that the target-distribution coefficient of determination $R^2=\operatorname{Var}_{x\sim\rho_{\mathcal X}^{T}}\{f(x)\}/[\operatorname{Var}_{x\sim\rho_{\mathcal X}^{T}}\{f(x)\}+\sigma^2]$ takes values in $\{0.1,0.2,\ldots,0.9\}$. The working basis is $\boldsymbol h(x)=(1,x)^\top$ and is therefore misspecified.

The candidate exponent grids and competing methods are the same as in the main paper. We take $n_s\in\{50,100,150,200\}$, $n_t=1000$, and $n_{\mathrm{eval}}=1000$. Results are averaged over $B=1000$ Monte Carlo replications. For each method, the empirical target excess prediction error is normalized by that of AIWMA-$\log n$.

Across the settings considered, AIWMA-$\log n$ generally delivers favorable predictive performance by balancing insufficient correction against the variance inflation caused by large density-ratio values. OLS remains competitive when both $R^2$ and $n_s$ are small, whereas stronger correction becomes increasingly beneficial as the noise level decreases and the bias induced by covariate shift becomes more pronounced. By contrast, IWLS performs poorly under the stronger shift because of its substantial variance inflation. These qualitative conclusions are consistent across both candidate grids and all three density-ratio specifications.

\subsection{One-dimensional design using KLIEP}
\label{subsec:supp-one-dimensional-kliep}

Figures~\ref{fig:supp-sinc-kliep-fixed-zeta18}--\ref{fig:supp-sinc-kliep-diverging-zeta20} report the normalized TEPE results obtained with KLIEP density-ratio estimation under the fixed and diverging candidate grids.

\begin{figure}[H]
\centering
\includegraphics[
width=0.90\linewidth,
height=0.90\textheight,
keepaspectratio
]{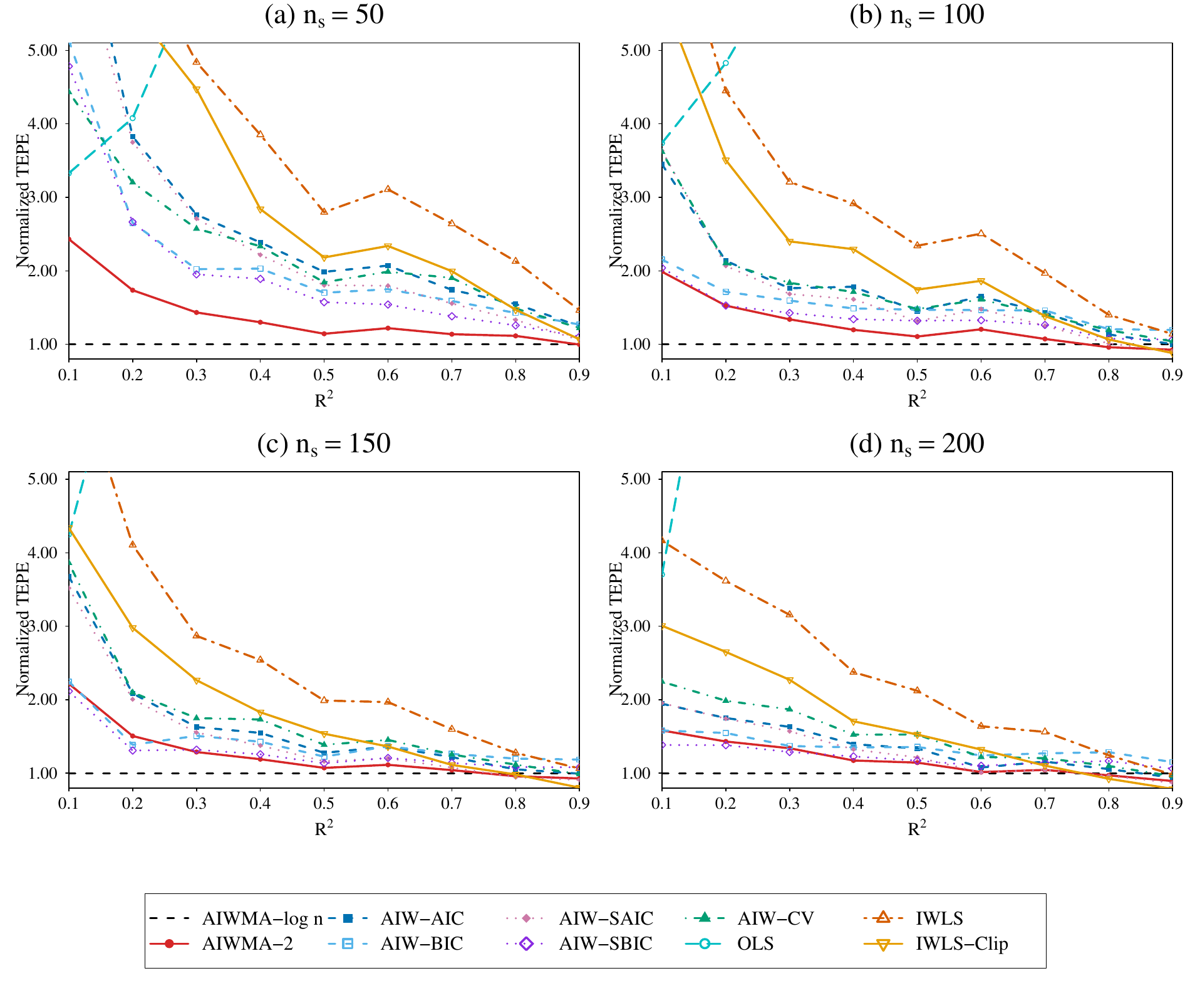}
\caption{Normalized target excess prediction error relative to AIWMA-$\log n$ in the one-dimensional misspecification design, using KLIEP density-ratio estimation and the fixed candidate grid, with $\zeta=1.8$.}
\label{fig:supp-sinc-kliep-fixed-zeta18}
\end{figure}

\begin{figure}[H]
\centering
\includegraphics[
width=0.90\linewidth,
height=0.90\textheight,
keepaspectratio
]{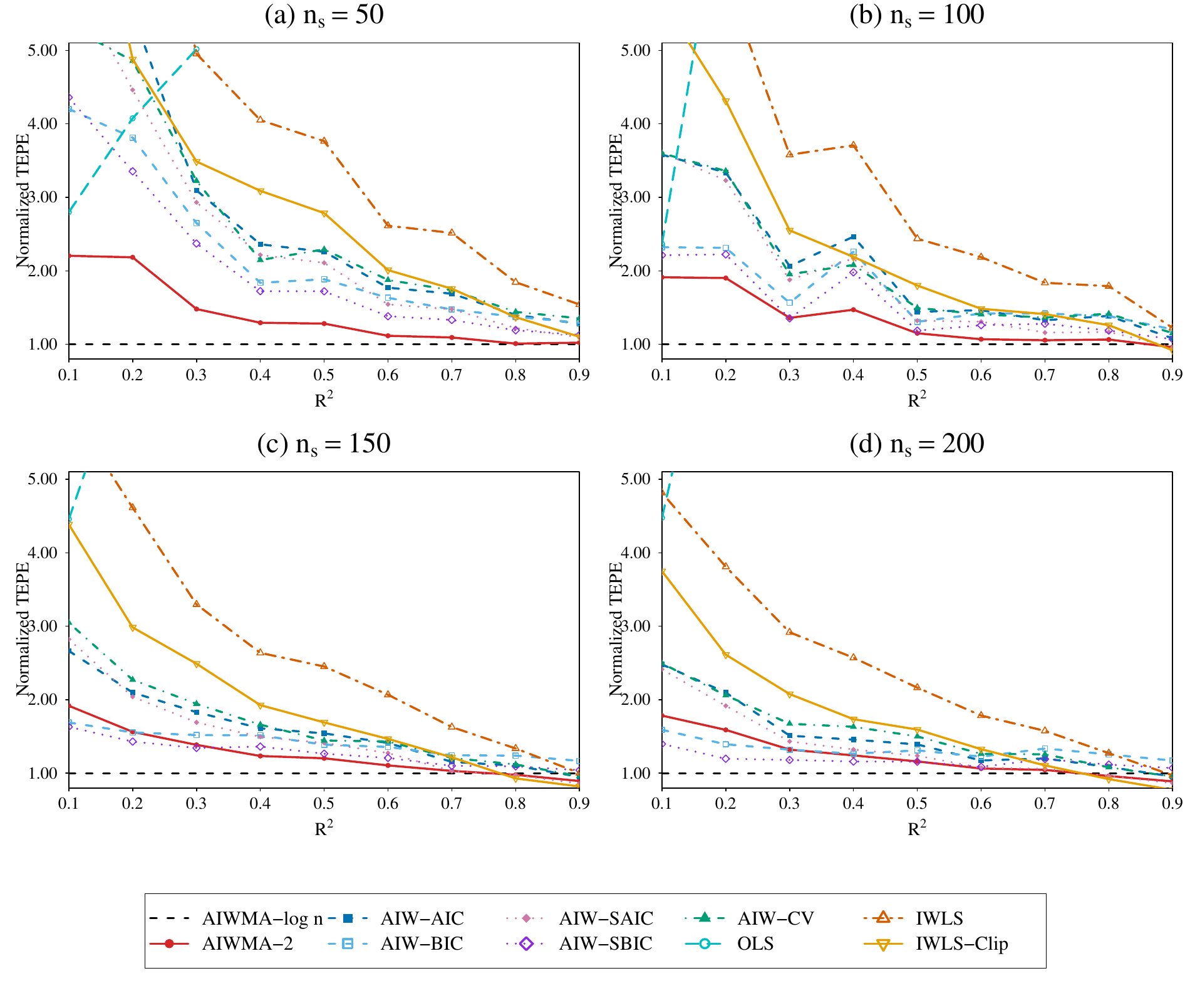}
\caption{Normalized target excess prediction error relative to AIWMA-$\log n$ in the one-dimensional misspecification design, using KLIEP density-ratio estimation and the diverging candidate grid, with $\zeta=1.8$.}
\label{fig:supp-sinc-kliep-diverging-zeta18}
\end{figure}

\begin{figure}[H]
\centering
\includegraphics[
width=0.90\linewidth,
height=0.90\textheight,
keepaspectratio
]{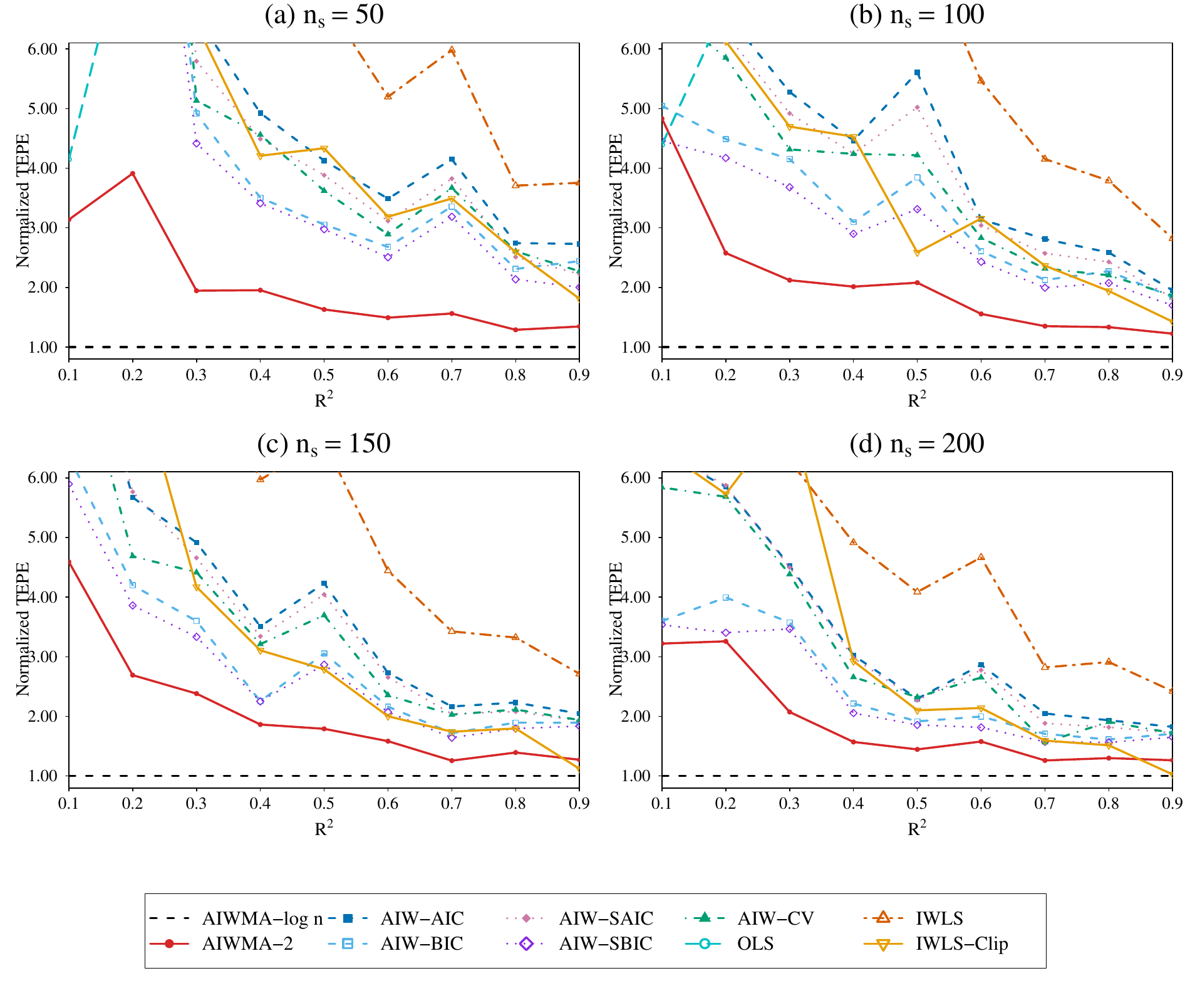}
\caption{Normalized target excess prediction error relative to AIWMA-$\log n$ in the one-dimensional misspecification design, using KLIEP density-ratio estimation and the fixed candidate grid, with $\zeta=2.0$.}
\label{fig:supp-sinc-kliep-fixed-zeta20}
\end{figure}

\begin{figure}[H]
\centering
\includegraphics[
width=0.90\linewidth,
height=0.90\textheight,
keepaspectratio
]{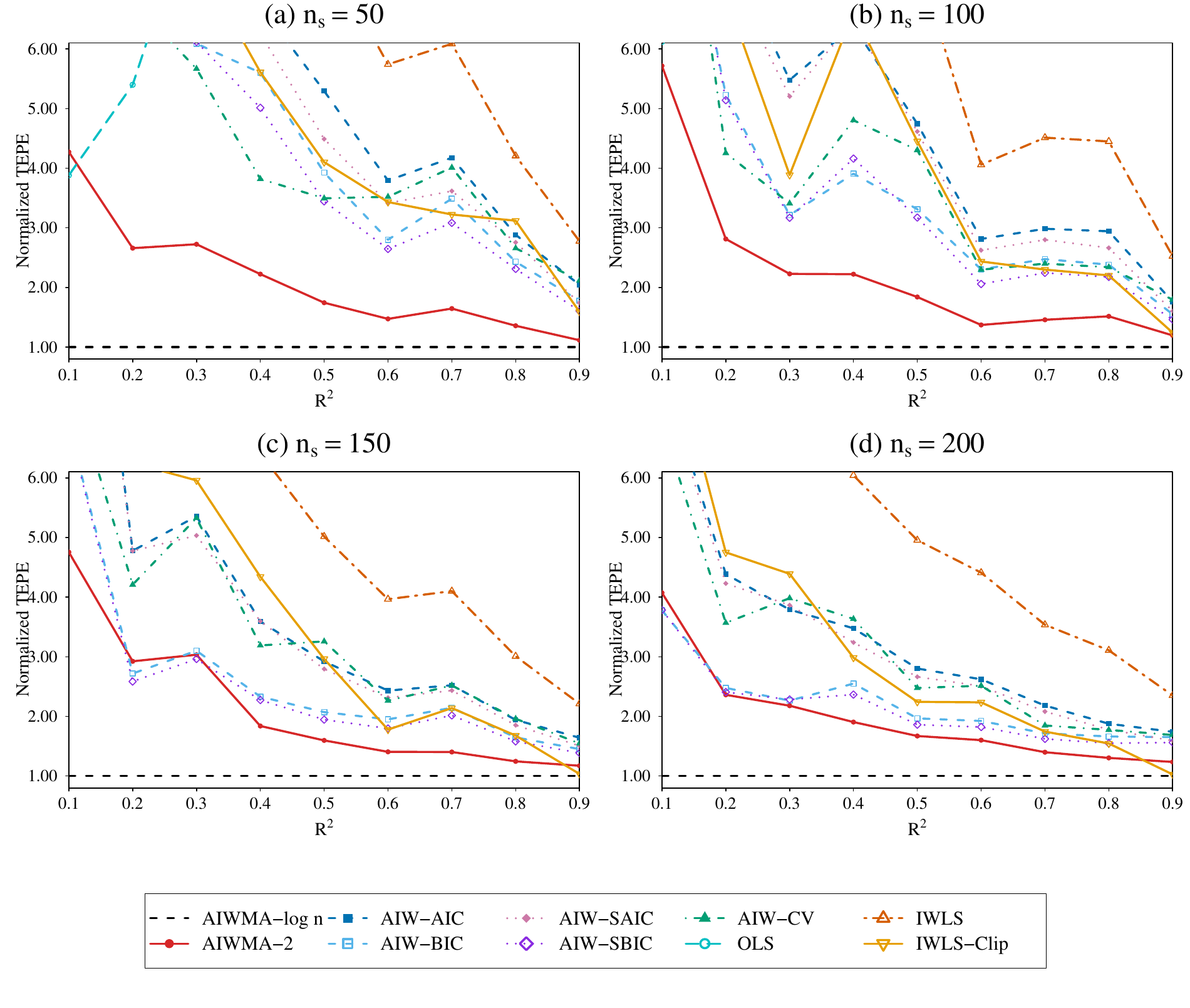}
\caption{Normalized target excess prediction error relative to AIWMA-$\log n$ in the one-dimensional misspecification design, using KLIEP density-ratio estimation and the diverging candidate grid, with $\zeta=2.0$.}
\label{fig:supp-sinc-kliep-diverging-zeta20}
\end{figure}

\subsection{One-dimensional design using KDE}
\label{subsec:supp-one-dimensional-kde}

Figures~\ref{fig:supp-sinc-kde-fixed-zeta18}--\ref{fig:supp-sinc-kde-diverging-zeta20} report the normalized TEPE results obtained with KDE density-ratio estimation under the fixed and diverging candidate grids.

\begin{figure}[H]
\centering
\includegraphics[
width=0.90\linewidth,
height=0.90\textheight,
keepaspectratio
]{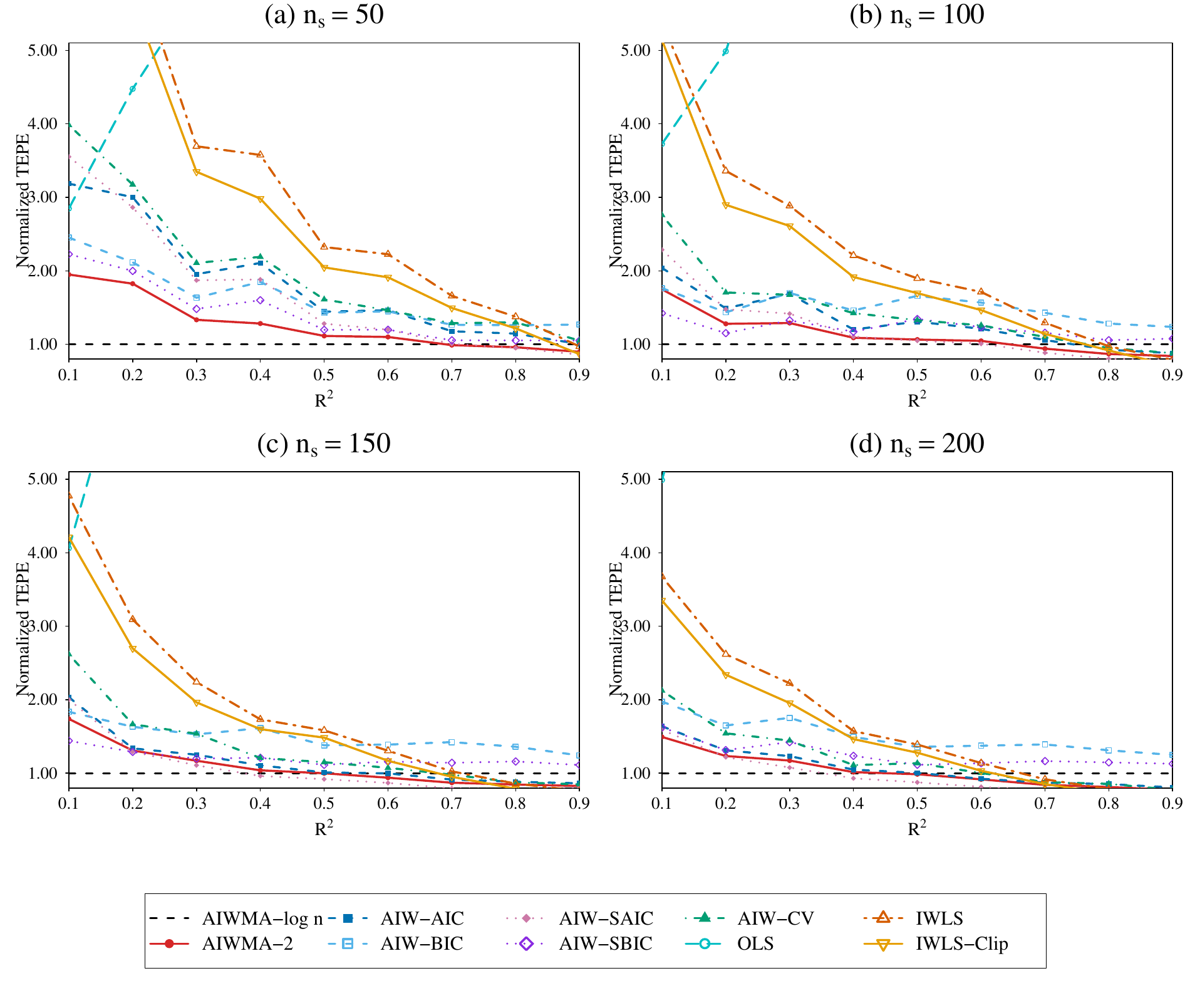}
\caption{Normalized target excess prediction error relative to AIWMA-$\log n$ in the one-dimensional misspecification design, using KDE density-ratio estimation and the fixed candidate grid, with $\zeta=1.8$.}
\label{fig:supp-sinc-kde-fixed-zeta18}
\end{figure}

\begin{figure}[H]
\centering
\includegraphics[
width=0.90\linewidth,
height=0.90\textheight,
keepaspectratio
]{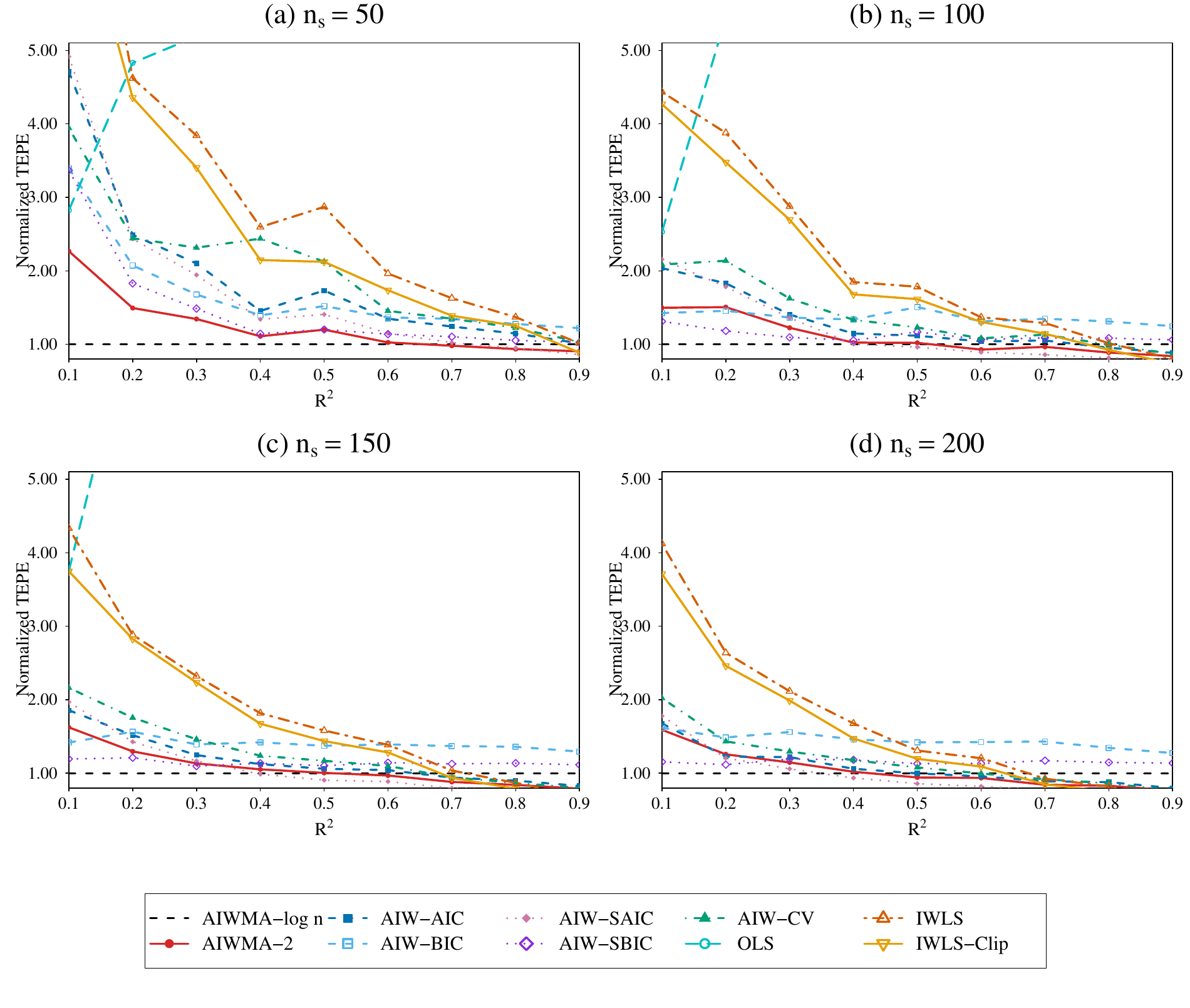}
\caption{Normalized target excess prediction error relative to AIWMA-$\log n$ in the one-dimensional misspecification design, using KDE density-ratio estimation and the diverging candidate grid, with $\zeta=1.8$.}
\label{fig:supp-sinc-kde-diverging-zeta18}
\end{figure}

\begin{figure}[H]
\centering
\includegraphics[
width=0.90\linewidth,
height=0.90\textheight,
keepaspectratio
]{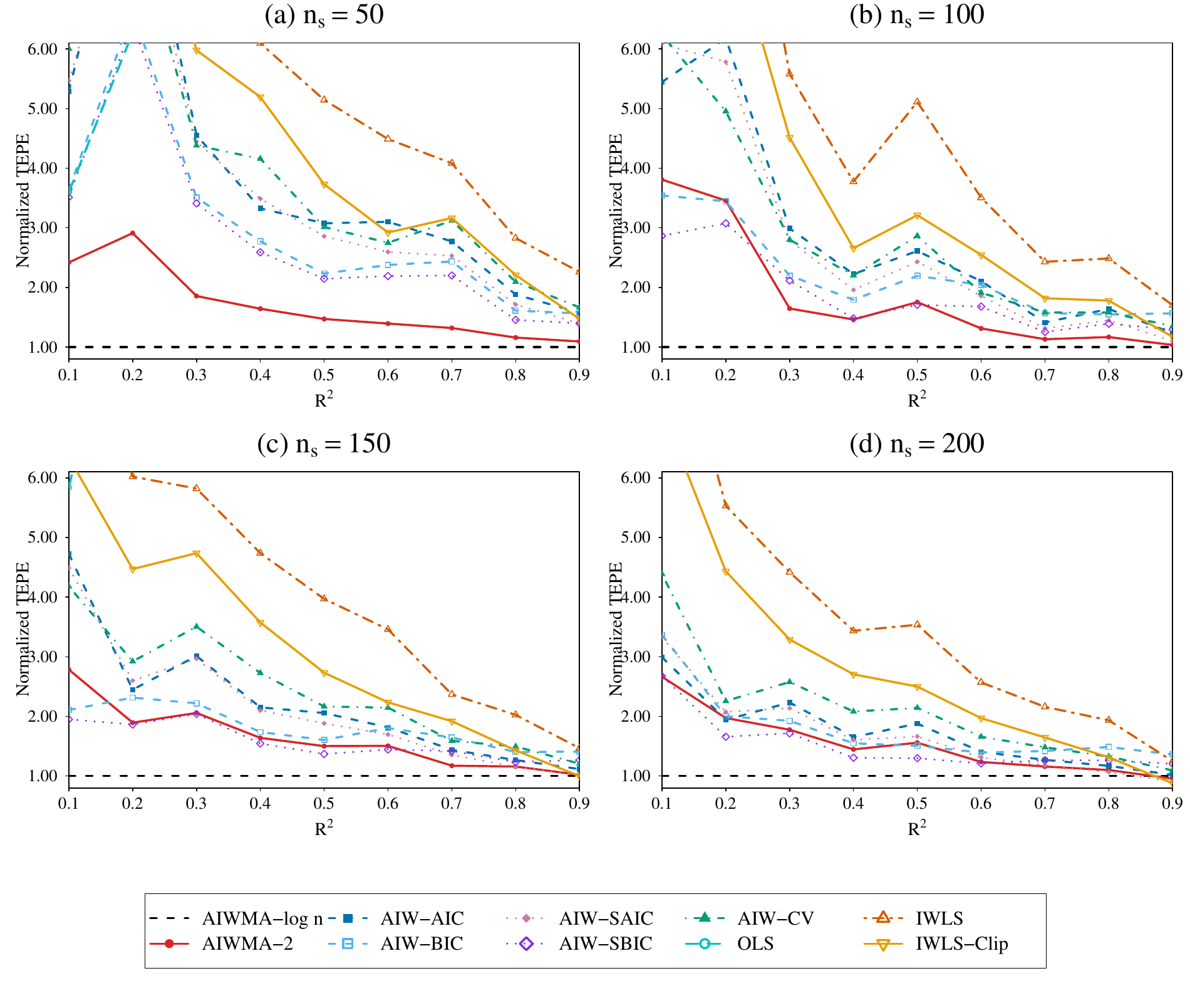}
\caption{Normalized target excess prediction error relative to AIWMA-$\log n$ in the one-dimensional misspecification design, using KDE density-ratio estimation and the fixed candidate grid, with $\zeta=2.0$.}
\label{fig:supp-sinc-kde-fixed-zeta20}
\end{figure}

\begin{figure}[H]
\centering
\includegraphics[
width=0.90\linewidth,
height=0.90\textheight,
keepaspectratio
]{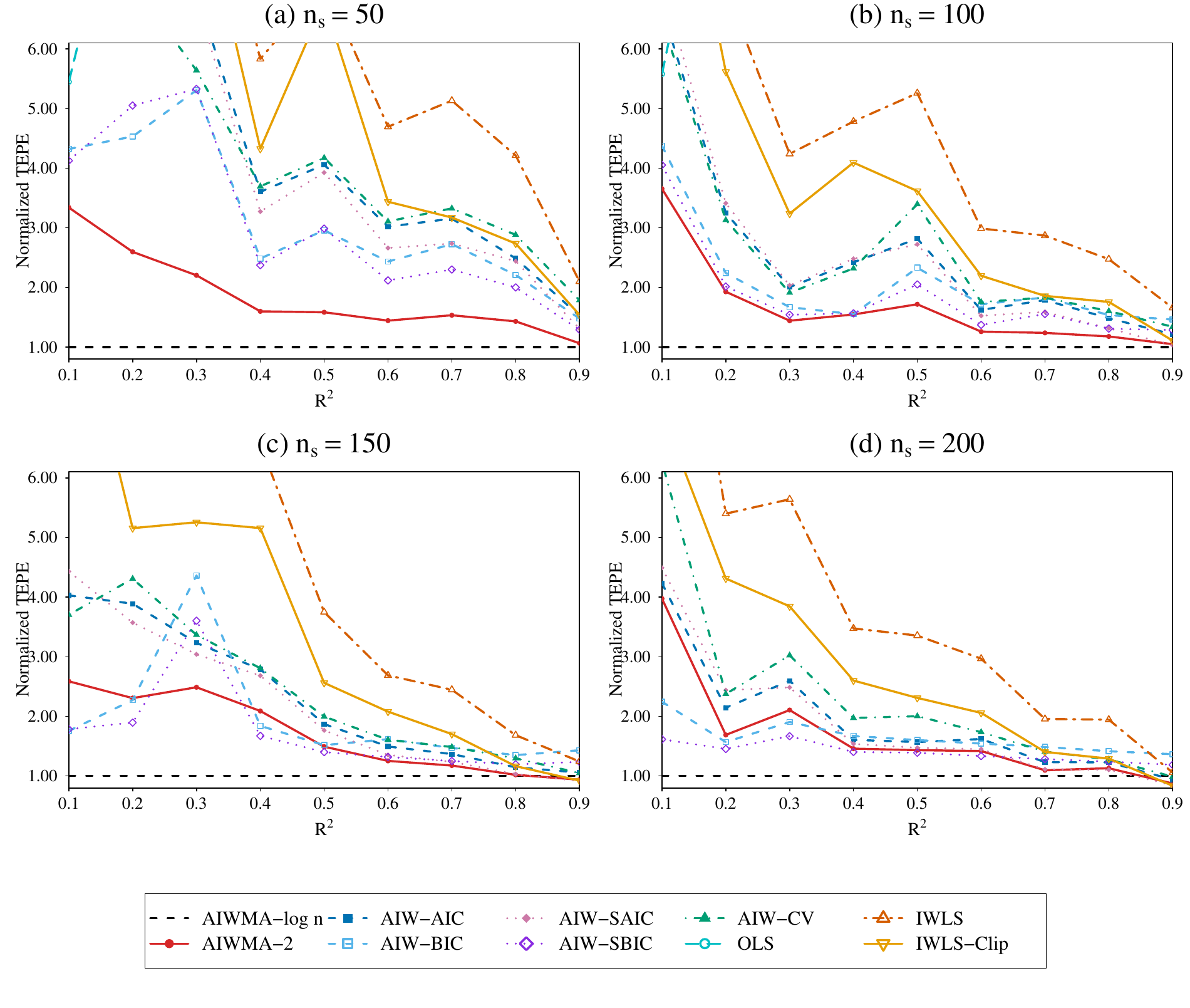}
\caption{Normalized target excess prediction error relative to AIWMA-$\log n$ in the one-dimensional misspecification design, using KDE density-ratio estimation and the diverging candidate grid, with $\zeta=2.0$.}
\label{fig:supp-sinc-kde-diverging-zeta20}
\end{figure}

\subsection{One-dimensional design using the oracle density ratio}
\label{subsec:supp-one-dimensional-oracle}

Figures~\ref{fig:supp-sinc-oracle-fixed-zeta18}--\ref{fig:supp-sinc-oracle-diverging-zeta20} report the normalized TEPE results obtained with the oracle density ratio under the fixed and diverging candidate grids.

\begin{figure}[H]
\centering
\includegraphics[
width=0.90\linewidth,
height=0.90\textheight,
keepaspectratio
]{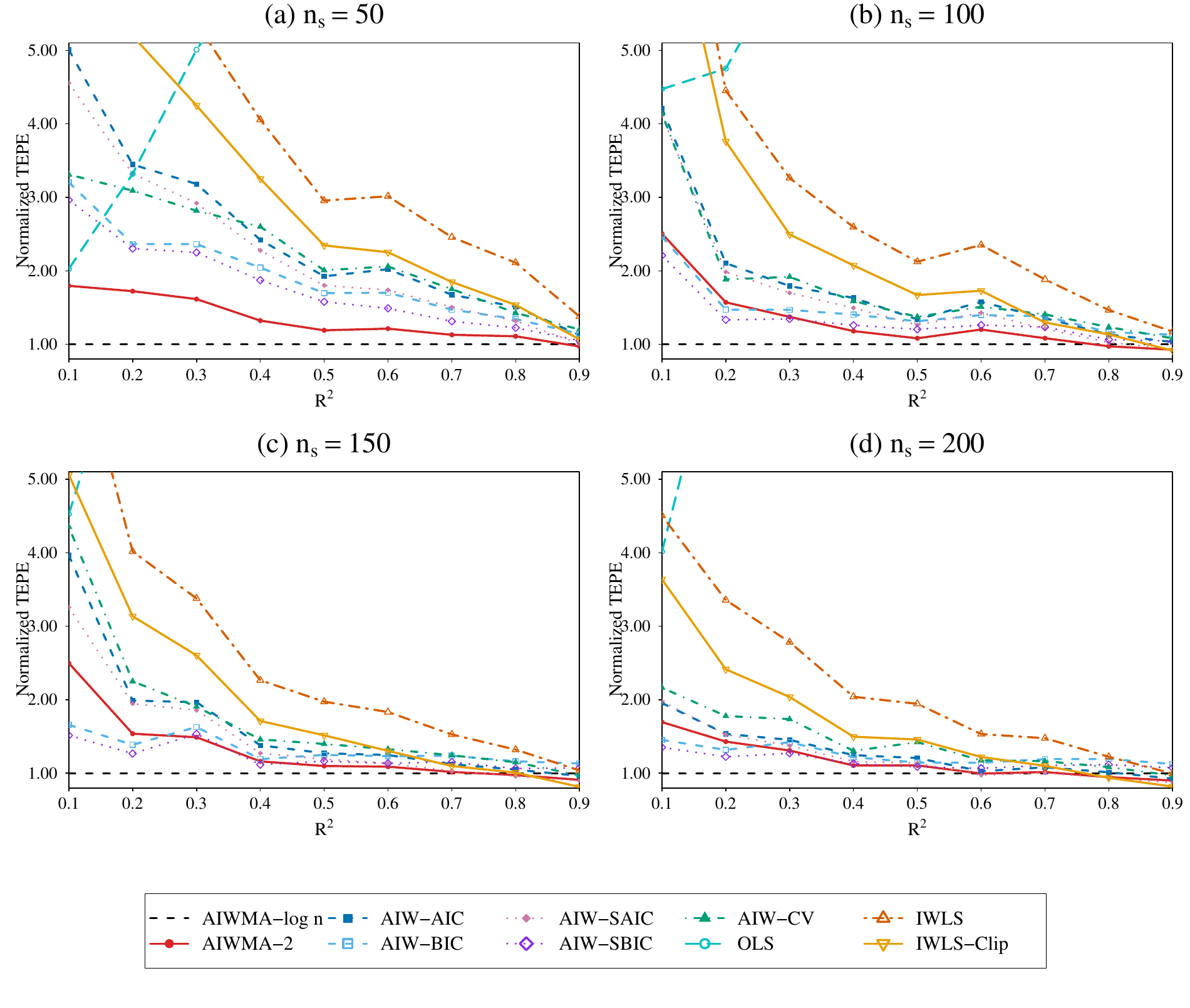}
\caption{Normalized target excess prediction error relative to AIWMA-$\log n$ in the one-dimensional misspecification design, using the oracle density ratio and the fixed candidate grid, with $\zeta=1.8$.}
\label{fig:supp-sinc-oracle-fixed-zeta18}
\end{figure}

\begin{figure}[H]
\centering
\includegraphics[
width=0.90\linewidth,
height=0.90\textheight,
keepaspectratio
]{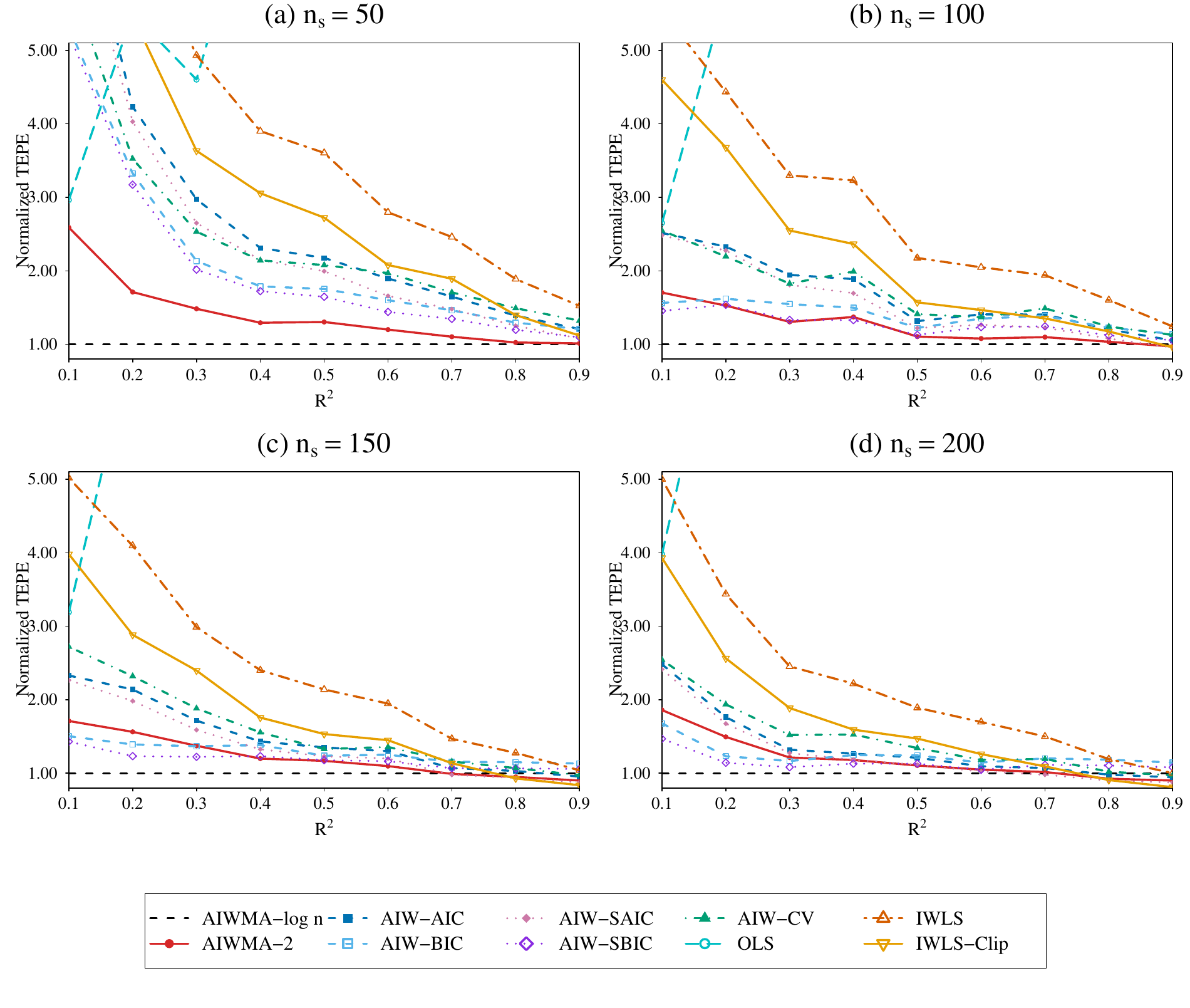}
\caption{Normalized target excess prediction error relative to AIWMA-$\log n$ in the one-dimensional misspecification design, using the oracle density ratio and the diverging candidate grid, with $\zeta=1.8$.}
\label{fig:supp-sinc-oracle-diverging-zeta18}
\end{figure}

\begin{figure}[H]
\centering
\includegraphics[
width=0.90\linewidth,
height=0.90\textheight,
keepaspectratio
]{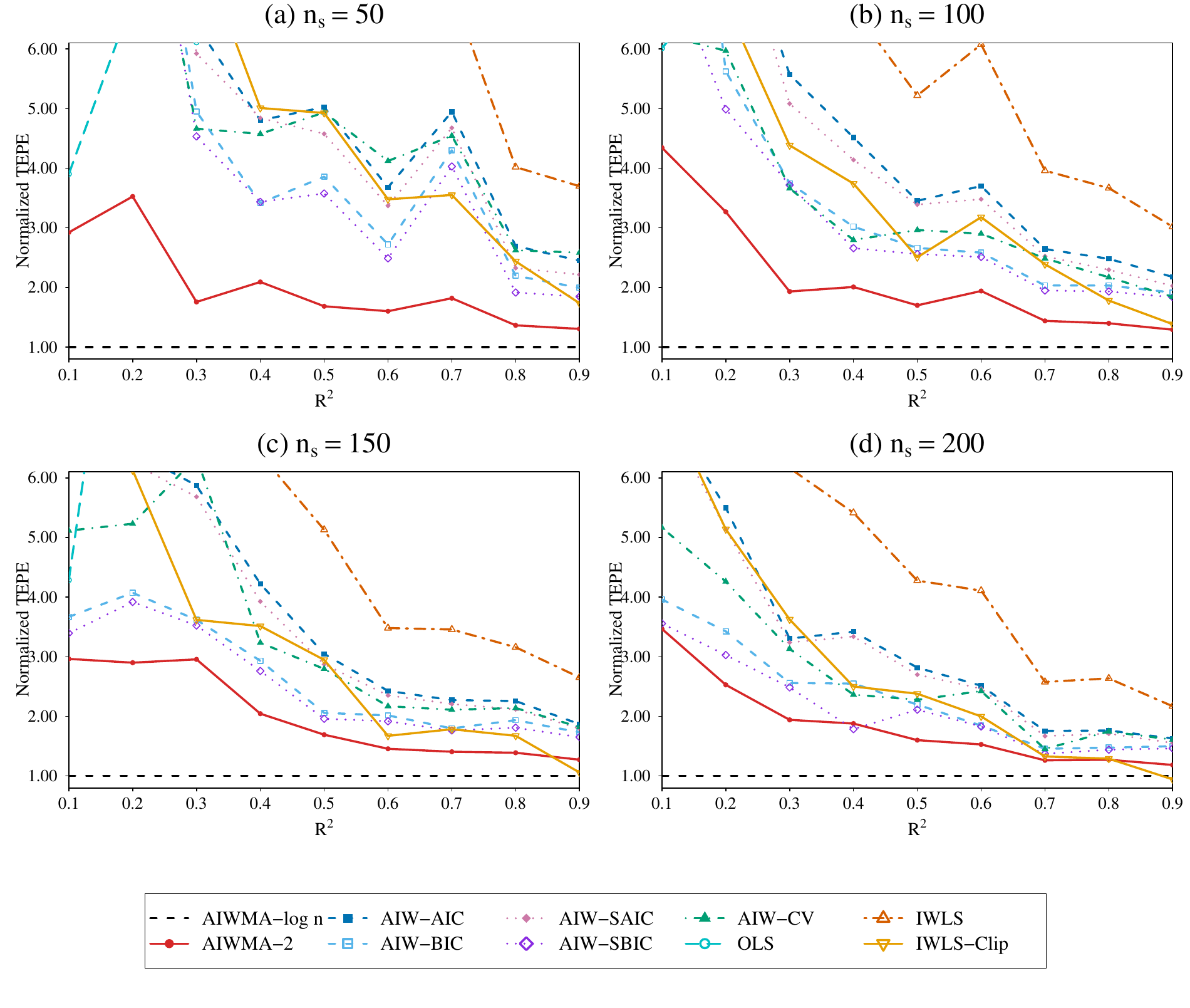}
\caption{Normalized target excess prediction error relative to AIWMA-$\log n$ in the one-dimensional misspecification design, using the oracle density ratio and the fixed candidate grid, with $\zeta=2.0$.}
\label{fig:supp-sinc-oracle-fixed-zeta20}
\end{figure}

\begin{figure}[H]
\centering
\includegraphics[
width=0.90\linewidth,
height=0.90\textheight,
keepaspectratio
]{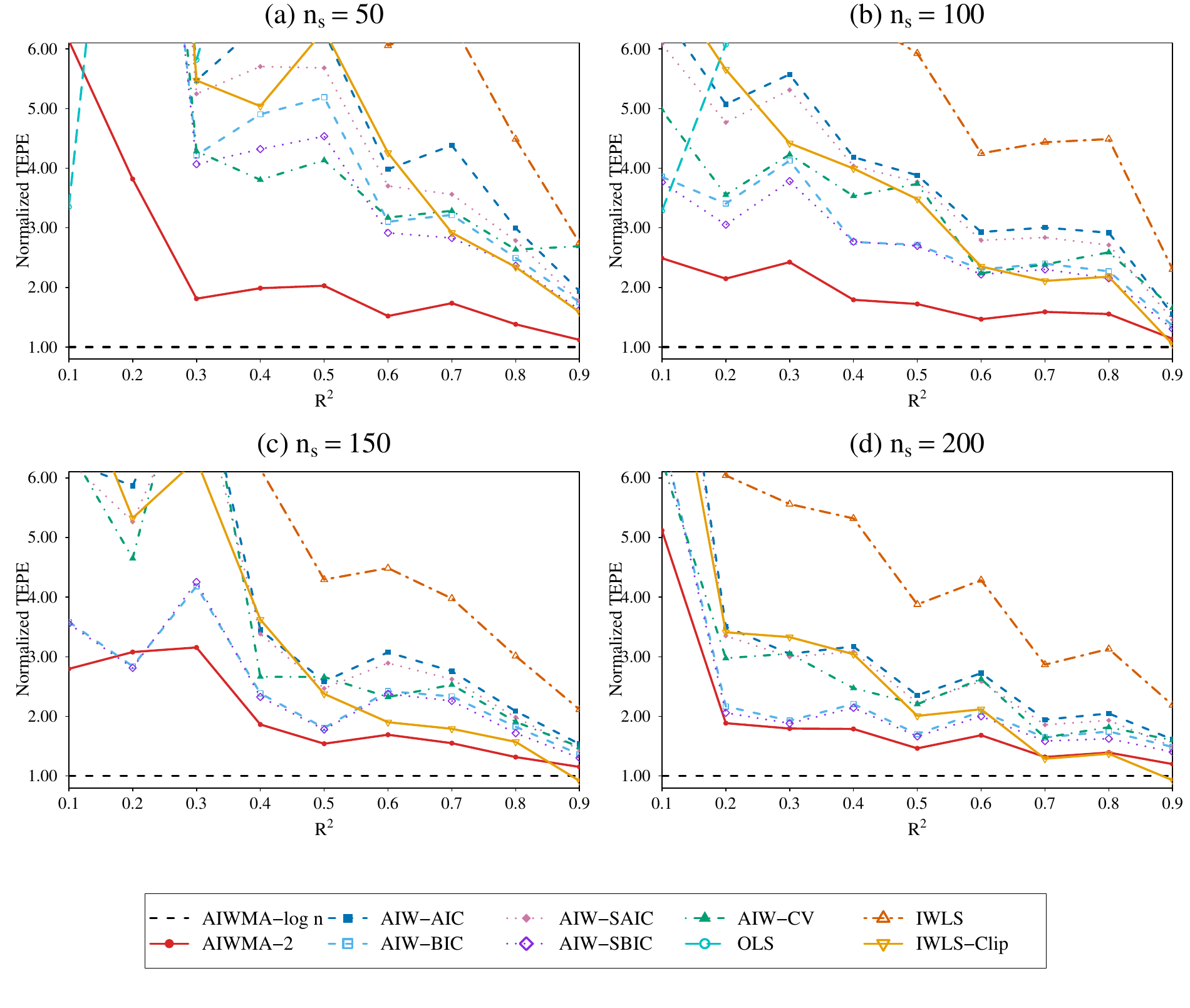}
\caption{Normalized target excess prediction error relative to AIWMA-$\log n$ in the one-dimensional misspecification design, using the oracle density ratio and the diverging candidate grid, with $\zeta=2.0$.}
\label{fig:supp-sinc-oracle-diverging-zeta20}
\end{figure}

We retain the data-generating process, sample sizes, evaluation criterion, and competing methods considered in Section~5.1 of the main paper. The main paper reports the TEPE results obtained using KLIEP density-ratio estimation under the fixed candidate grid. Here, we report the corresponding TEPE results using KLIEP under the diverging grid, together with those obtained using the Gaussian plug-in and oracle density ratios under both candidate grids. Overall, the findings are qualitatively consistent with those in the main paper: AIWMA-$\log n$ continues to deliver favorable predictive performance across different shift strengths, signal levels, candidate grids, and density-ratio specifications.

\subsection{Six-dimensional design using KLIEP}
\label{subsec:supp-six-dimensional-kliep}
Figures~\ref{fig:supp-six-kliep-diverging-zeta12} and~\ref{fig:supp-six-kliep-diverging-zeta15} report the normalized TEPE results obtained with KLIEP density-ratio estimation under the  diverging candidate grids.

\begin{figure}[H]
\centering
\includegraphics[
width=0.90\linewidth,
height=0.90\textheight,
keepaspectratio
]{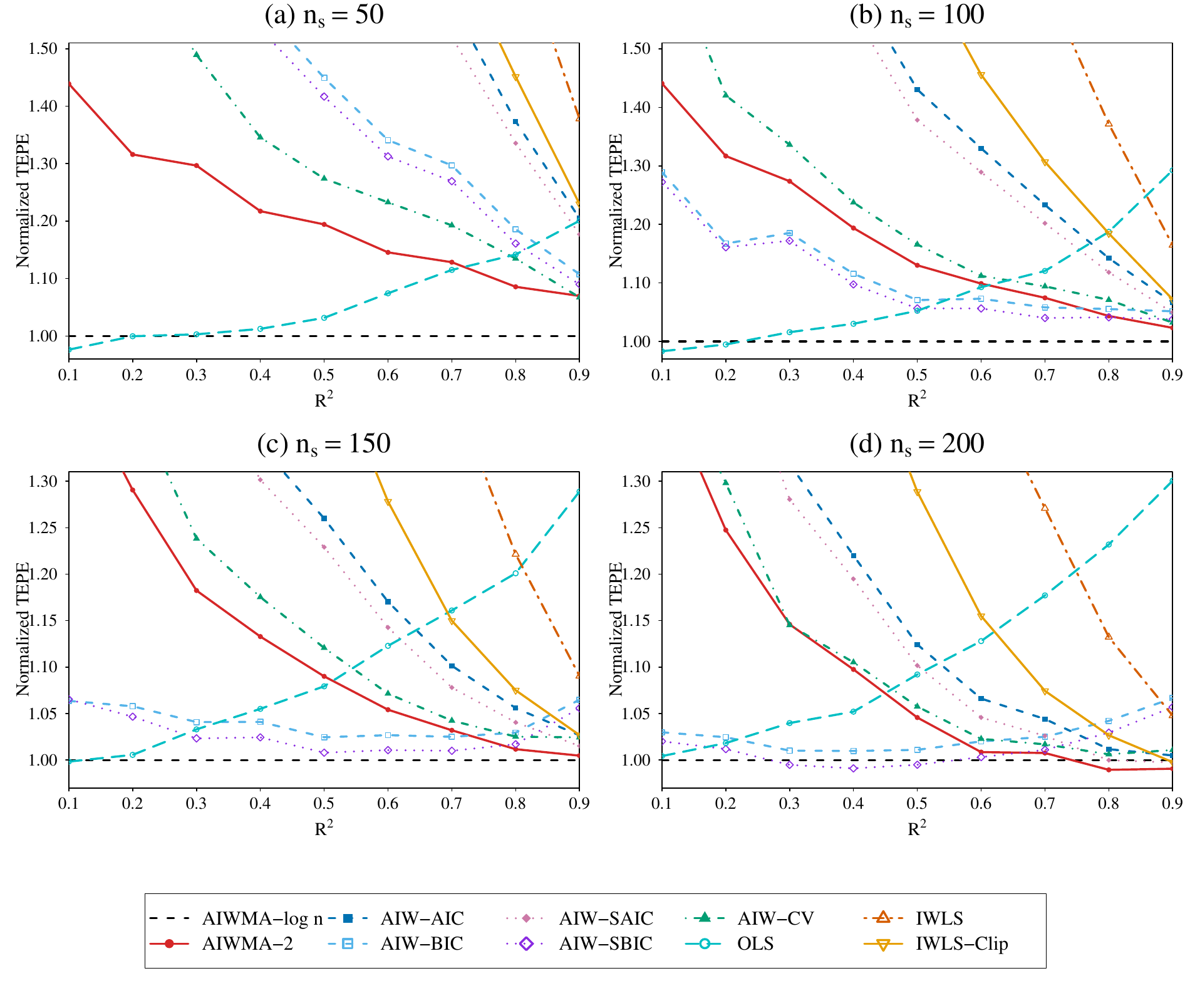}
\caption{Normalized target excess prediction error relative to AIWMA-$\log n$ in the six-dimensional misspecification design, using KLIEP density-ratio estimation and the diverging candidate grid, with $\zeta=1.2$.}
\label{fig:supp-six-kliep-diverging-zeta12}
\end{figure}

\begin{figure}[H]
\centering
\includegraphics[
width=0.90\linewidth,
height=0.90\textheight,
keepaspectratio
]{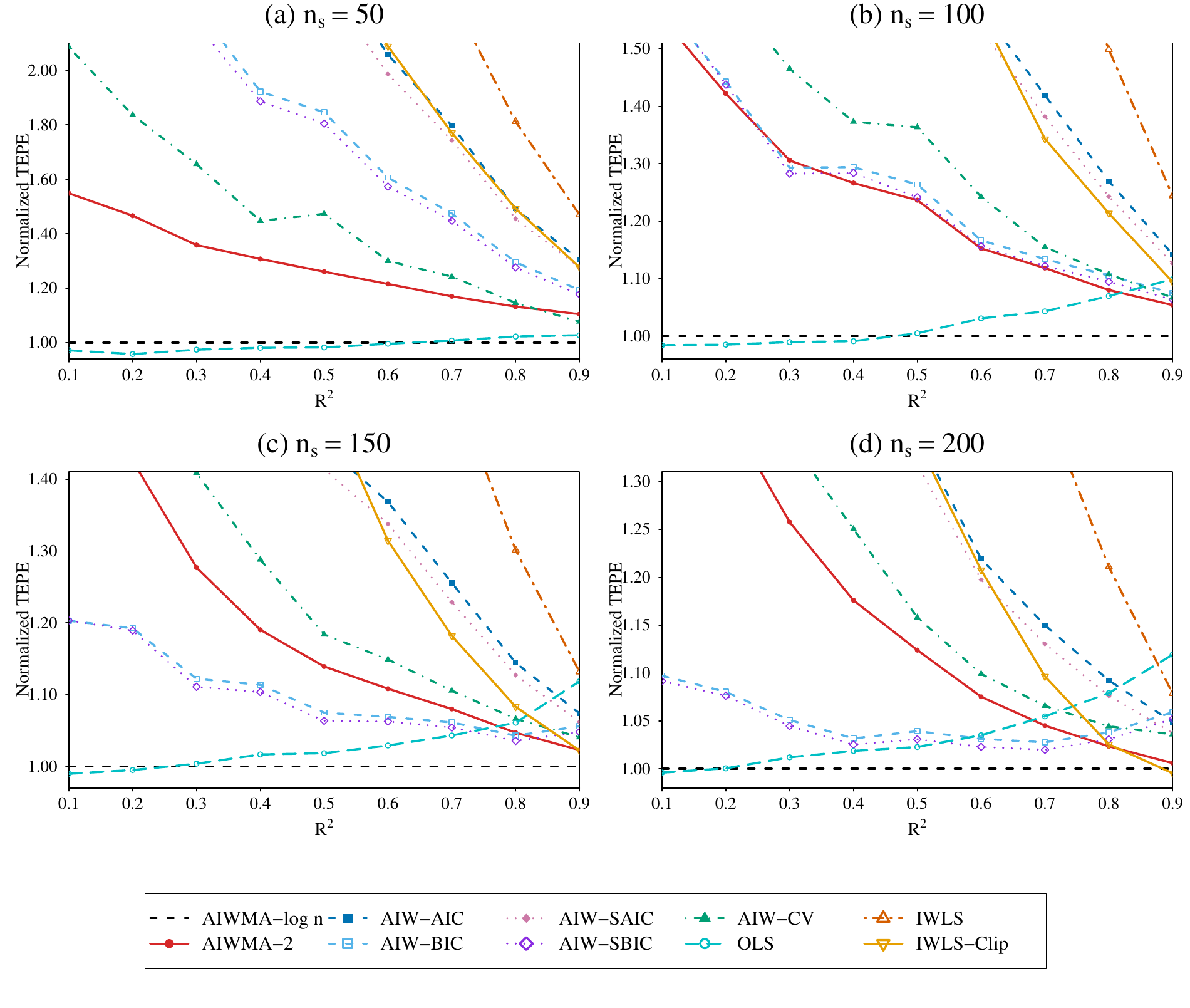}
\caption{Normalized target excess prediction error relative to AIWMA-$\log n$ in the six-dimensional misspecification design, using KLIEP density-ratio estimation and the diverging candidate grid, with $\zeta=1.5$.}
\label{fig:supp-six-kliep-diverging-zeta15}
\end{figure}

\subsection{Six-dimensional design using the Gaussian plug-in estimator}
\label{subsec:supp-six-dimensional-plugin}
Figures~\ref{fig:supp-six-plugin-fixed-zeta12}--\ref{fig:supp-six-plugin-diverging-zeta15} report the normalized TEPE results obtained with Gaussian plug-in density-ratio estimation under the fixed and diverging candidate grids.

\begin{figure}[H]
\centering
\includegraphics[
width=0.90\linewidth,
height=0.90\textheight,
keepaspectratio
]{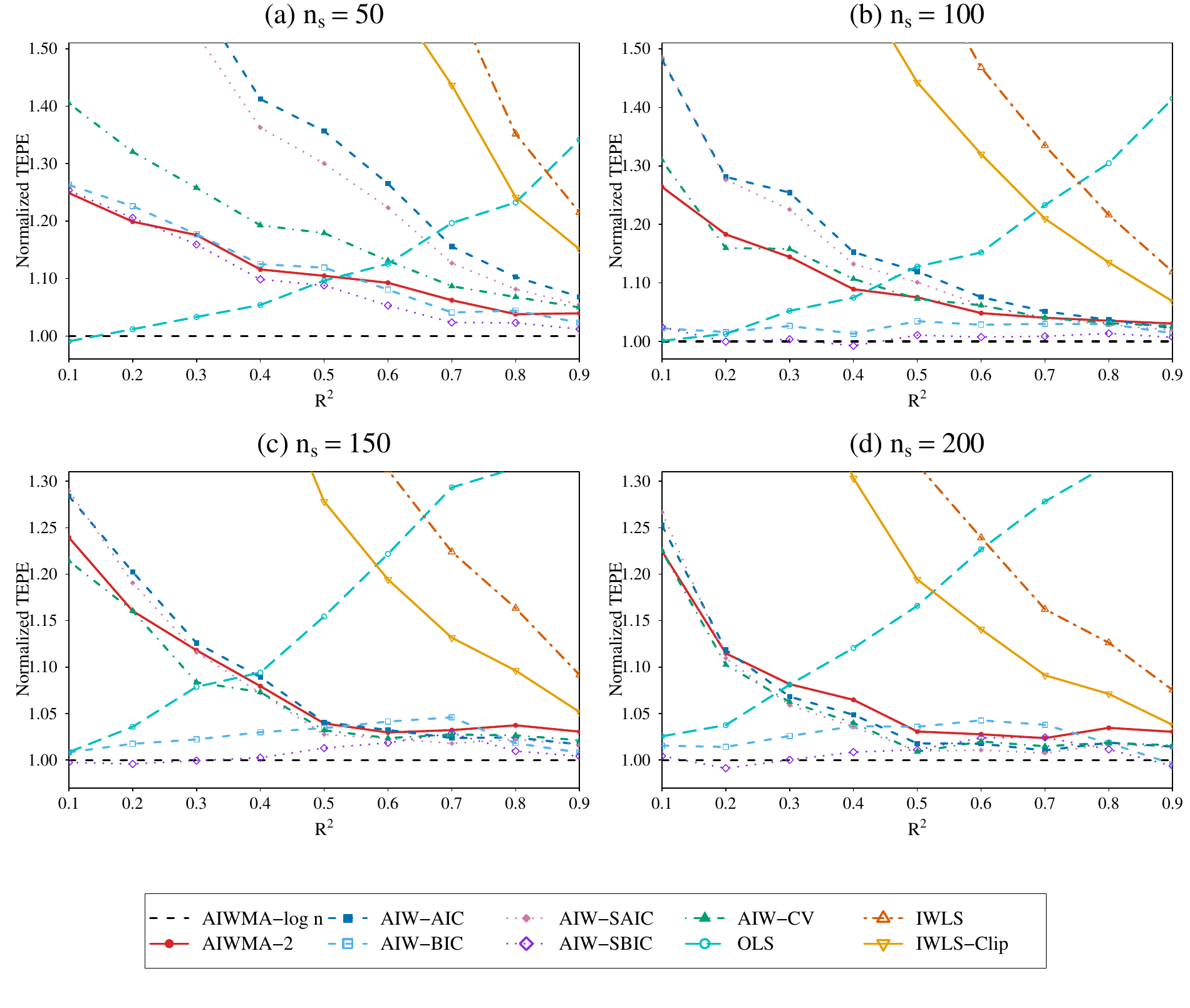}
\caption{Normalized target excess prediction error relative to AIWMA-$\log n$ in the six-dimensional misspecification design, using Gaussian plug-in density-ratio estimation and the fixed candidate grid, with $\zeta=1.2$.}
\label{fig:supp-six-plugin-fixed-zeta12}
\end{figure}

\begin{figure}[H]
\centering
\includegraphics[
width=0.90\linewidth,
height=0.90\textheight,
keepaspectratio
]{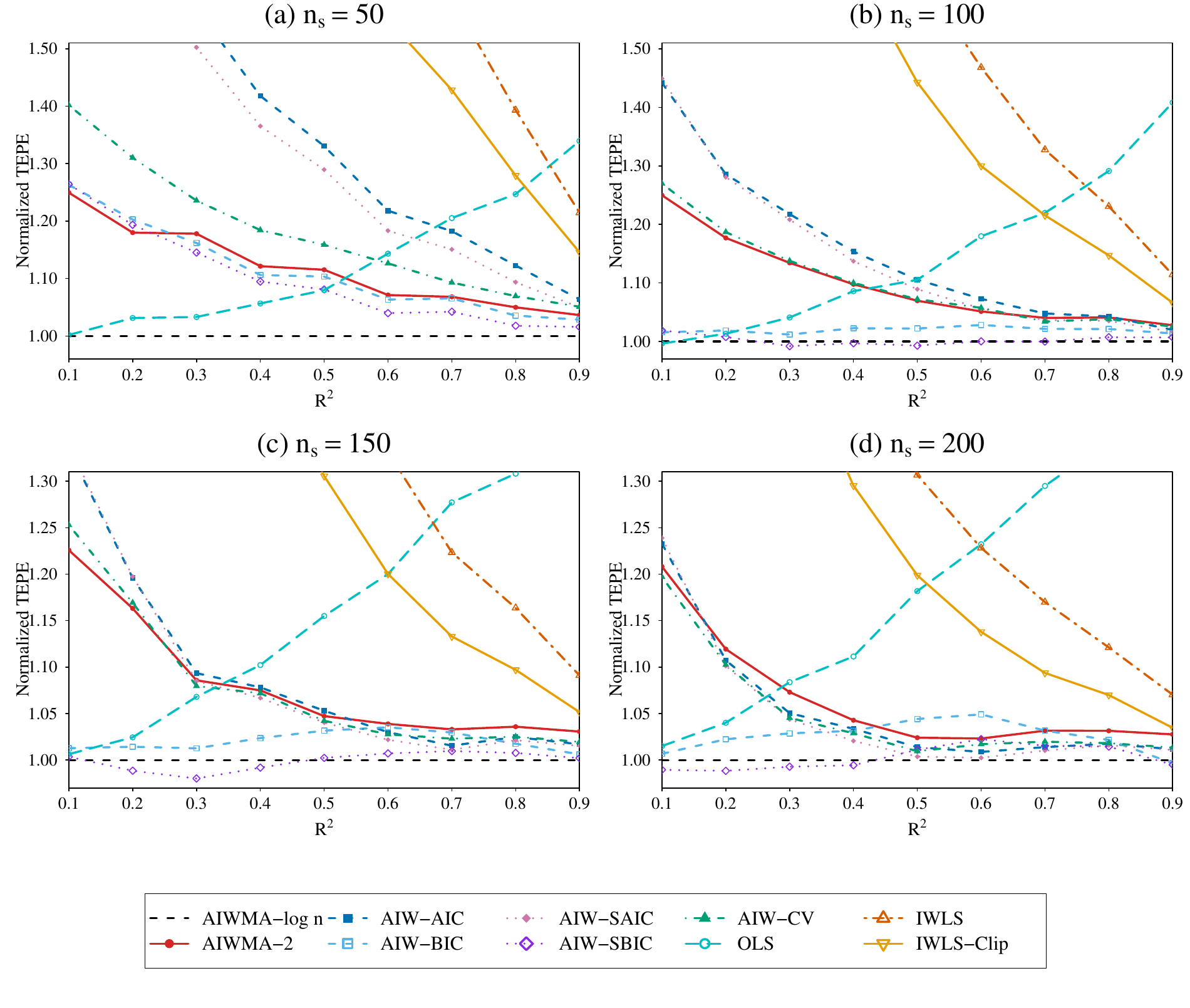}
\caption{Normalized target excess prediction error relative to AIWMA-$\log n$ in the six-dimensional misspecification design, using Gaussian plug-in density-ratio estimation and the diverging candidate grid, with $\zeta=1.2$.}
\label{fig:supp-six-plugin-diverging-zeta12}
\end{figure}

\begin{figure}[H]
\centering
\includegraphics[
width=0.90\linewidth,
height=0.90\textheight,
keepaspectratio
]{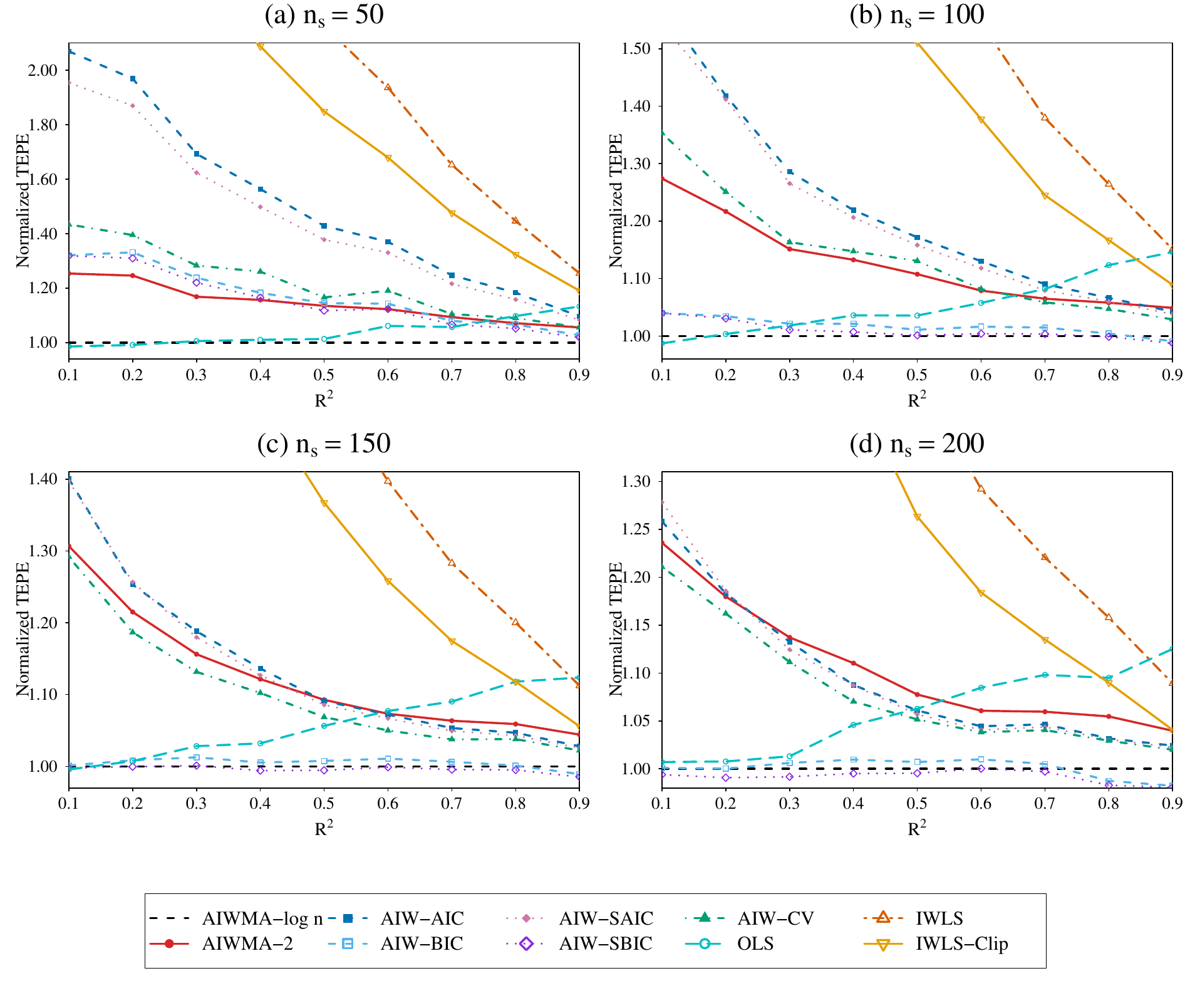}
\caption{Normalized target excess prediction error relative to AIWMA-$\log n$ in the six-dimensional misspecification design, using Gaussian plug-in density-ratio estimation and the fixed candidate grid, with $\zeta=1.5$.}
\label{fig:supp-six-plugin-fixed-zeta15}
\end{figure}

\begin{figure}[H]
\centering
\includegraphics[
width=0.90\linewidth,
height=0.90\textheight,
keepaspectratio
]{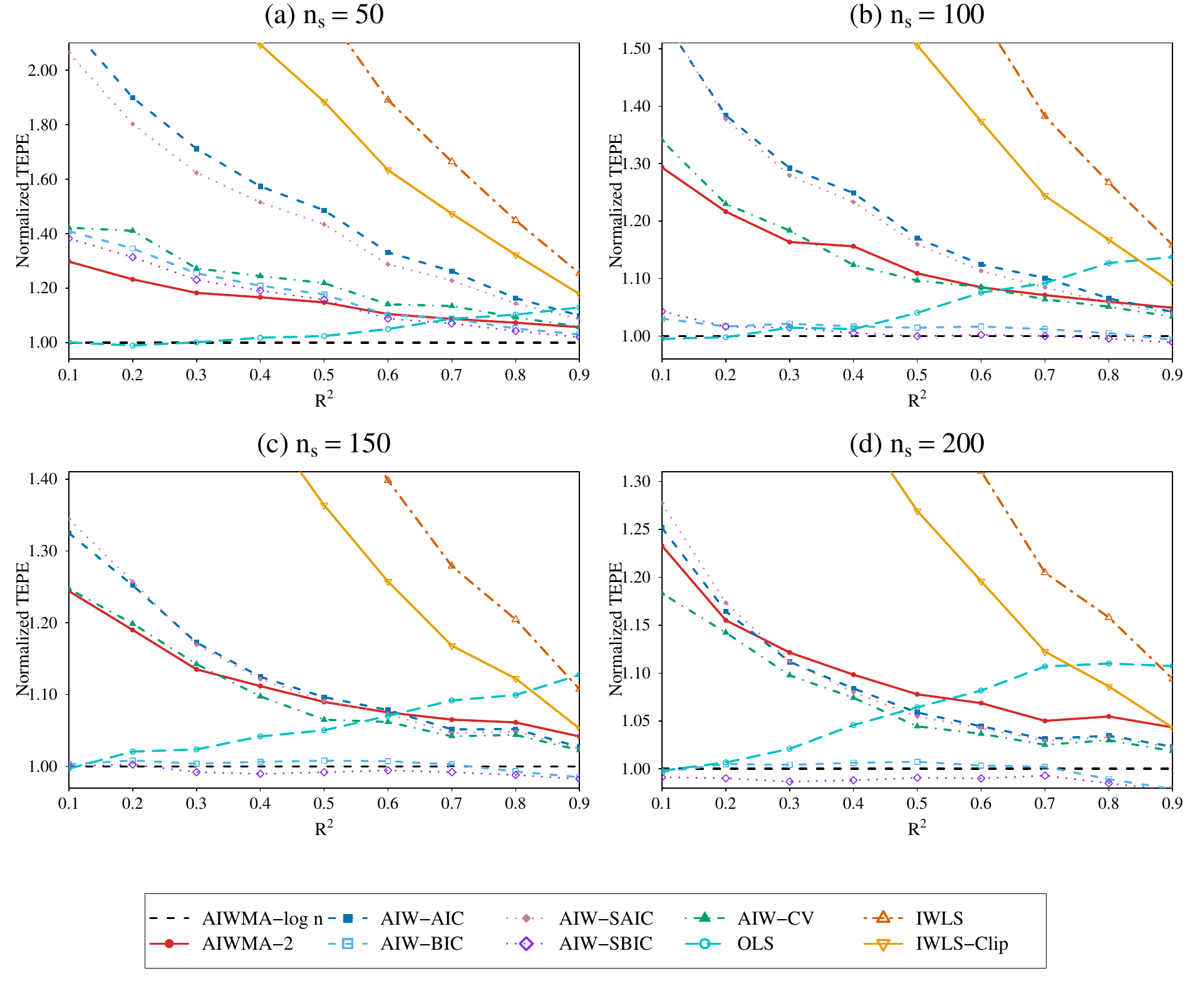}
\caption{Normalized target excess prediction error relative to AIWMA-$\log n$ in the six-dimensional misspecification design, using Gaussian plug-in density-ratio estimation and the diverging candidate grid, with $\zeta=1.5$.}
\label{fig:supp-six-plugin-diverging-zeta15}
\end{figure}

\subsection{Six-dimensional design using the oracle density ratio}
\label{subsec:supp-six-dimensional-oracle}
Figures~\ref{fig:supp-six-oracle-fixed-zeta12}--\ref{fig:supp-six-oracle-diverging-zeta15} report the normalized TEPE results obtained with the oracle density ratio under the fixed and diverging candidate grids.

\begin{figure}[H]
\centering
\includegraphics[
width=0.90\linewidth,
height=0.90\textheight,
keepaspectratio
]{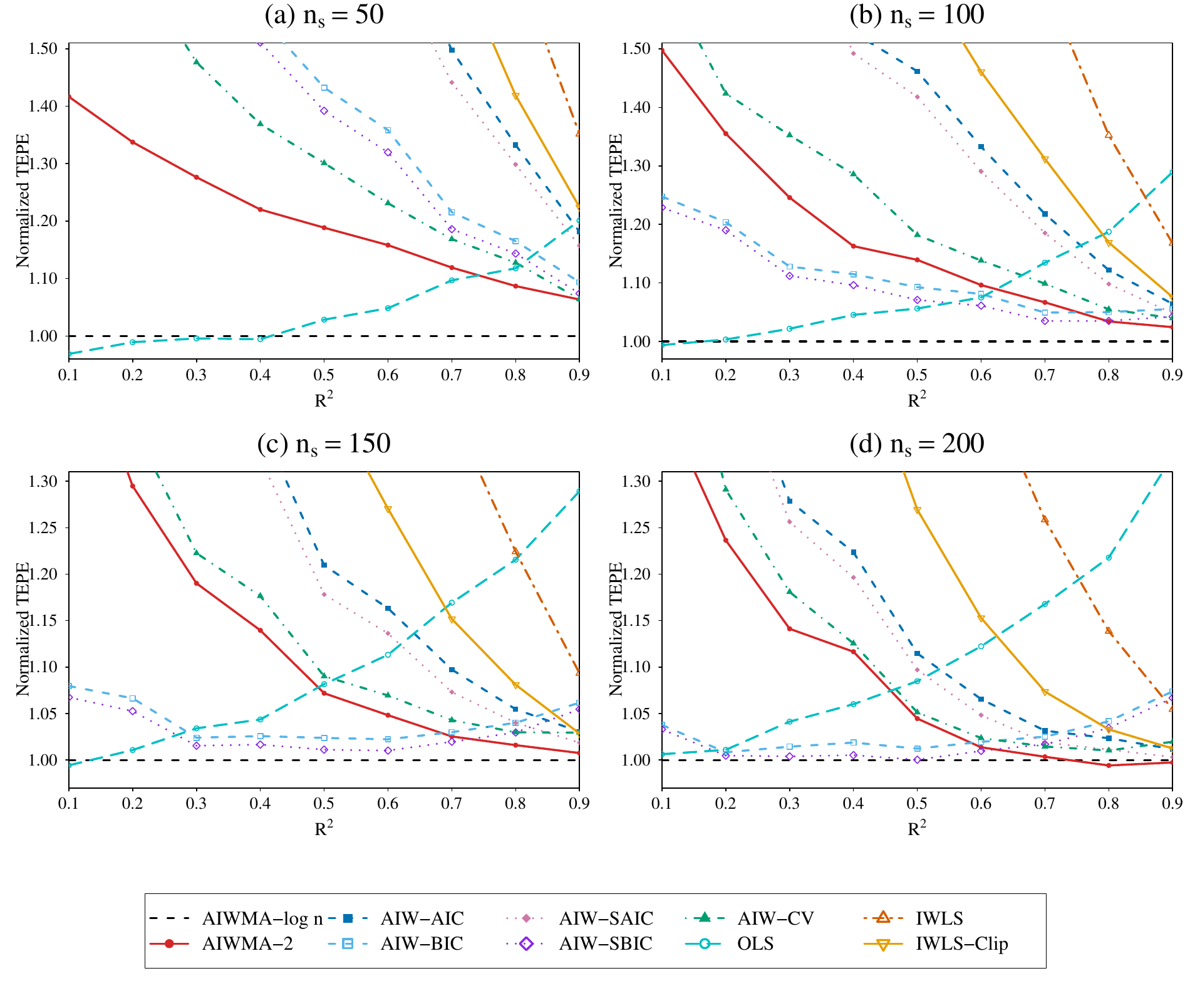}
\caption{Normalized target excess prediction error relative to AIWMA-$\log n$ in the six-dimensional misspecification design, using the oracle density ratio and the fixed candidate grid, with $\zeta=1.2$.}
\label{fig:supp-six-oracle-fixed-zeta12}
\end{figure}

\begin{figure}[H]
\centering
\includegraphics[
width=0.90\linewidth,
height=0.90\textheight,
keepaspectratio
]{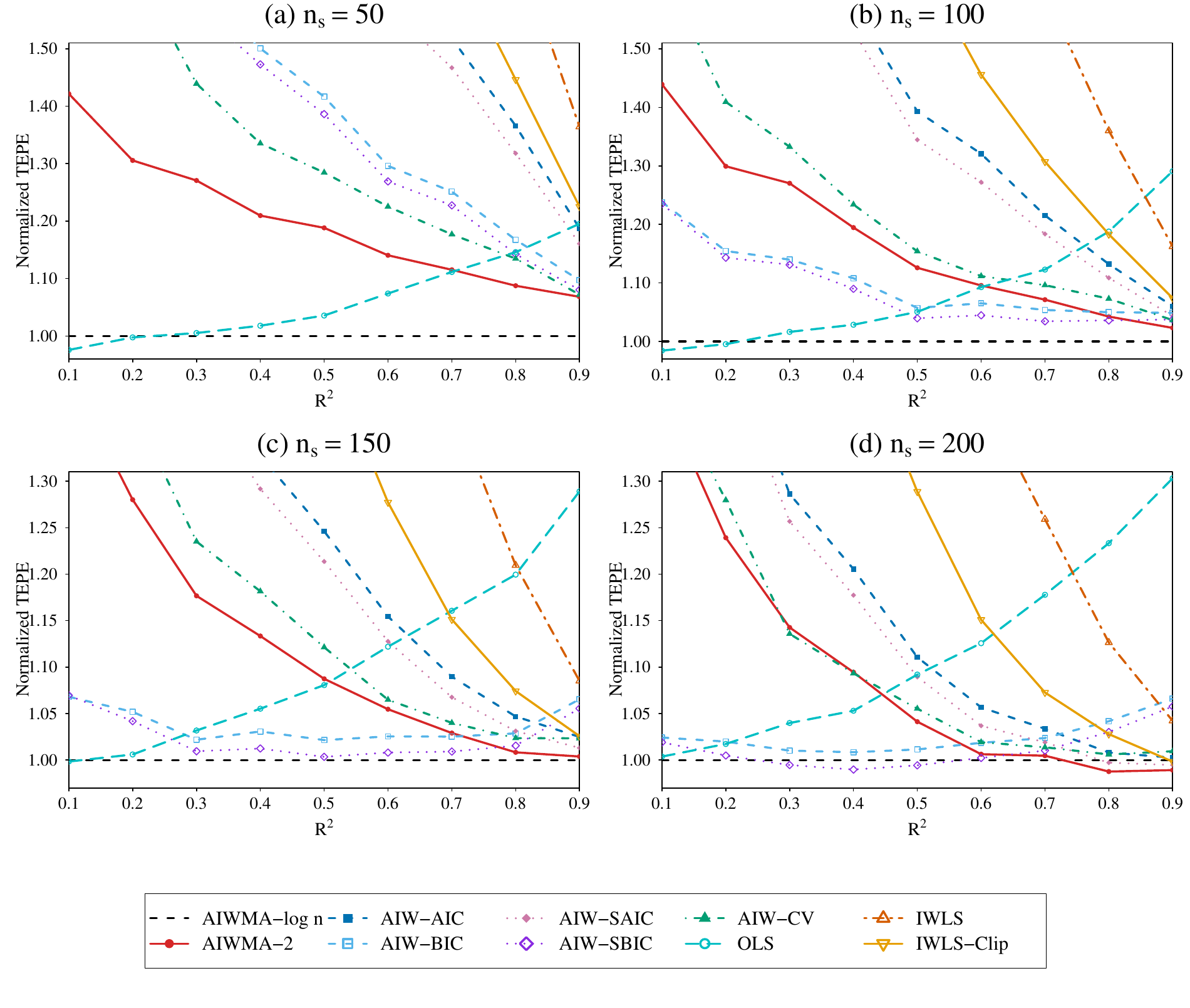}
\caption{Normalized target excess prediction error relative to AIWMA-$\log n$ in the six-dimensional misspecification design, using the oracle density ratio and the diverging candidate grid, with $\zeta=1.2$.}
\label{fig:supp-six-oracle-diverging-zeta12}
\end{figure}

\begin{figure}[H]
\centering
\includegraphics[
width=0.90\linewidth,
height=0.90\textheight,
keepaspectratio
]{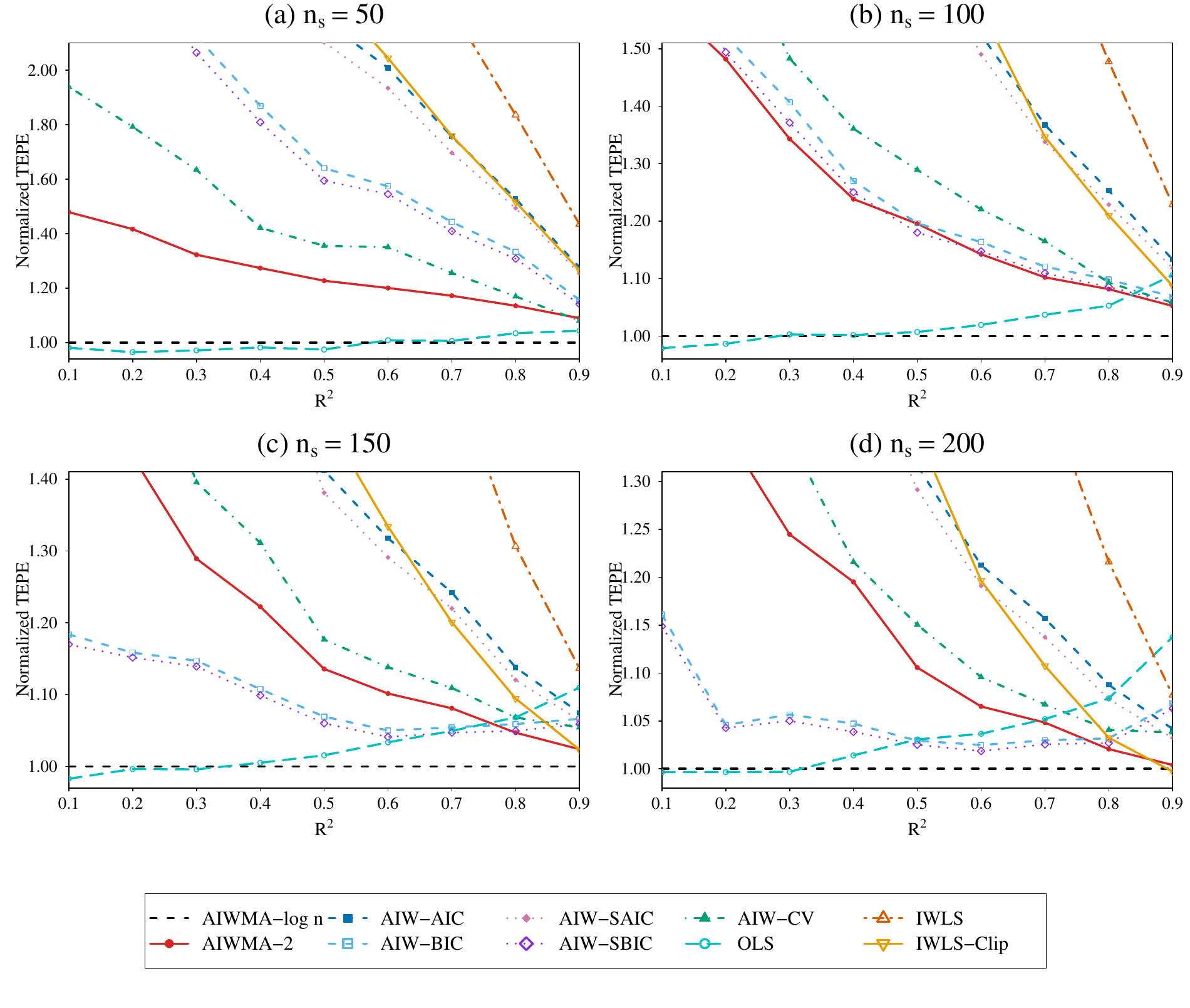}
\caption{Normalized target excess prediction error relative to AIWMA-$\log n$ in the six-dimensional misspecification design, using the oracle density ratio and the fixed candidate grid, with $\zeta=1.5$.}
\label{fig:supp-six-oracle-fixed-zeta15}
\end{figure}

\begin{figure}[H]
\centering
\includegraphics[
width=0.90\linewidth,
height=0.90\textheight,
keepaspectratio
]{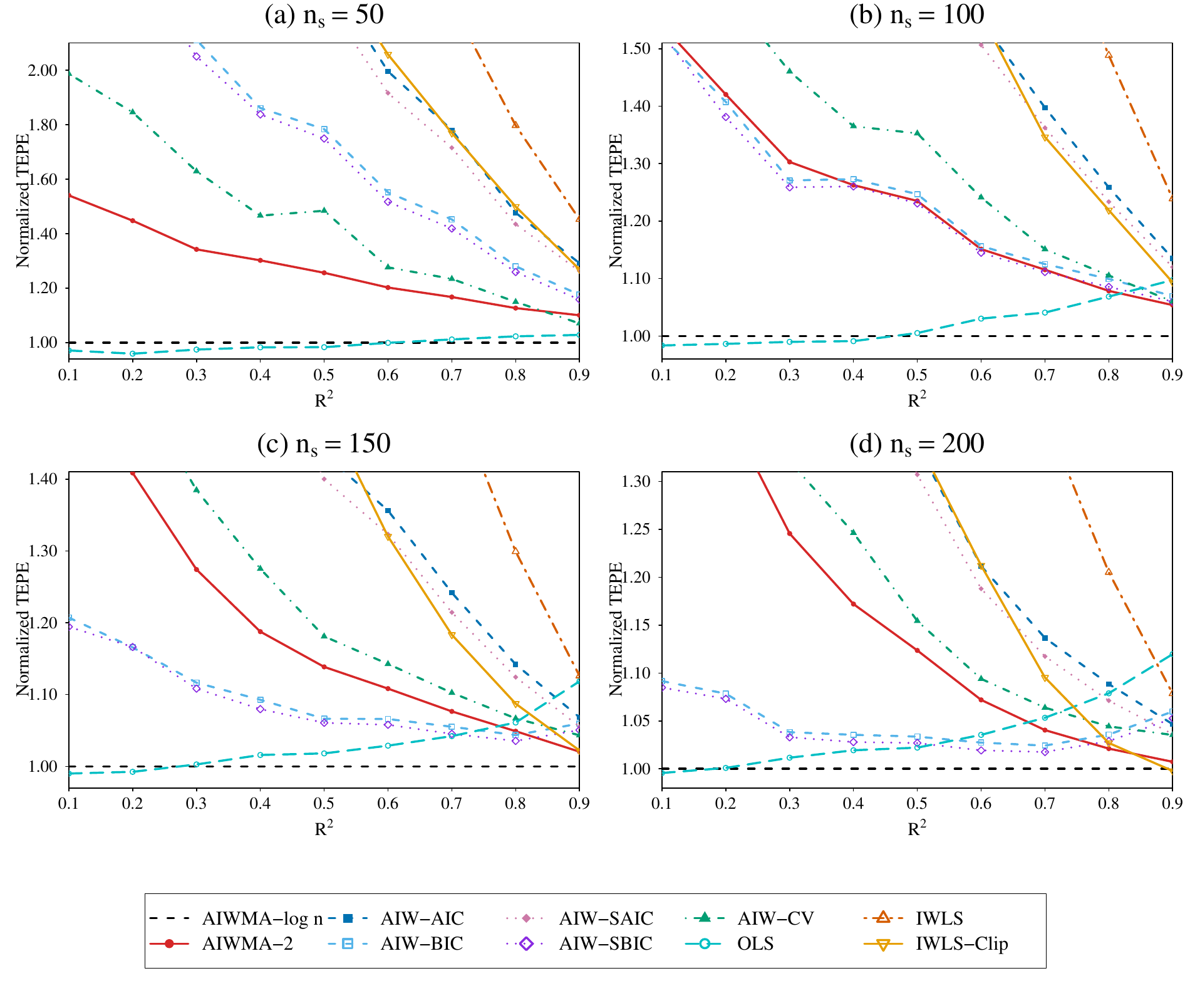}
\caption{Normalized target excess prediction error relative to AIWMA-$\log n$ in the six-dimensional misspecification design, using the oracle density ratio and the diverging candidate grid, with $\zeta=1.5$.}
\label{fig:supp-six-oracle-diverging-zeta15}
\end{figure}

\section{Additional simulations under correct specification}
\label{sec:supp-correct-specification-simulations}

Section~5.2 of the main paper reports the weight-concentration results obtained with KLIEP under both the fixed and diverging candidate grids. Figures~\ref{fig:supp-weight-concentration-kde} and~
\ref{fig:supp-weight-concentration-oracle} report the corresponding
results based on KDE and the oracle density ratio, respectively. The data-generating process, candidate grids, shift levels, target $R^2$ values, sample sizes, and number of Monte Carlo replications are unchanged.

The diagnostic is the average weight assigned to the OLS endpoint, denoted by $w_{\lambda=0}$. Under both density-ratio specifications, this average generally increases with $n_s$ across all three shift configurations and the three target $R^2$ levels. Concentration is most transparent under the oracle density ratio. These results indicate that the qualitative weight-concentration pattern is not specific to KLIEP.

\begin{figure}[H]
\centering
\includegraphics[
width=0.90\linewidth,
height=0.90\textheight,
keepaspectratio
]{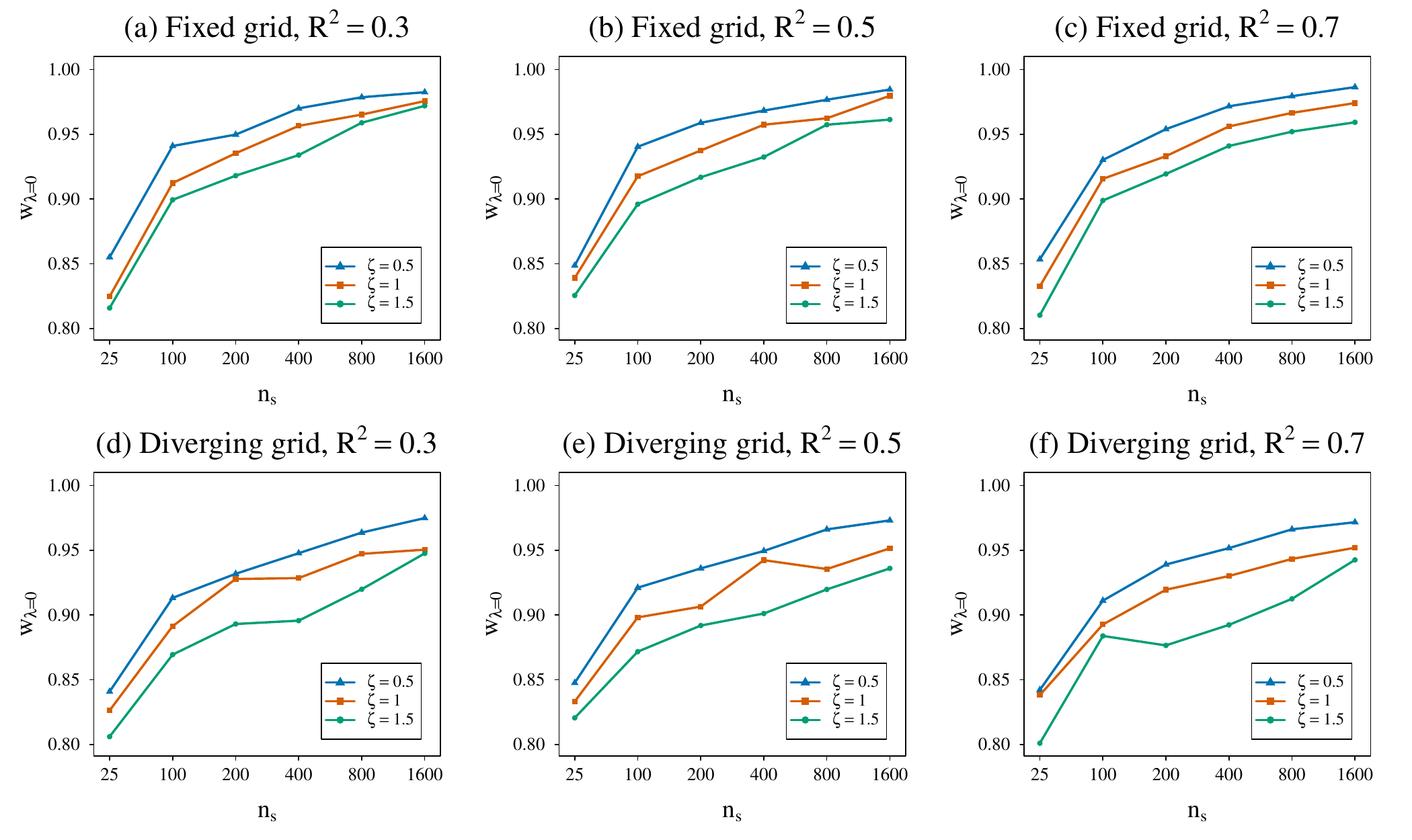}
\caption{Average weight assigned to the OLS endpoint under correct specification, using KDE density-ratio estimation. The upper and lower rows correspond to the fixed and diverging candidate grids, respectively.}
\label{fig:supp-weight-concentration-kde}
\end{figure}

\begin{figure}[H]
\centering
\includegraphics[
width=0.90\linewidth,
height=0.90\textheight,
keepaspectratio
]{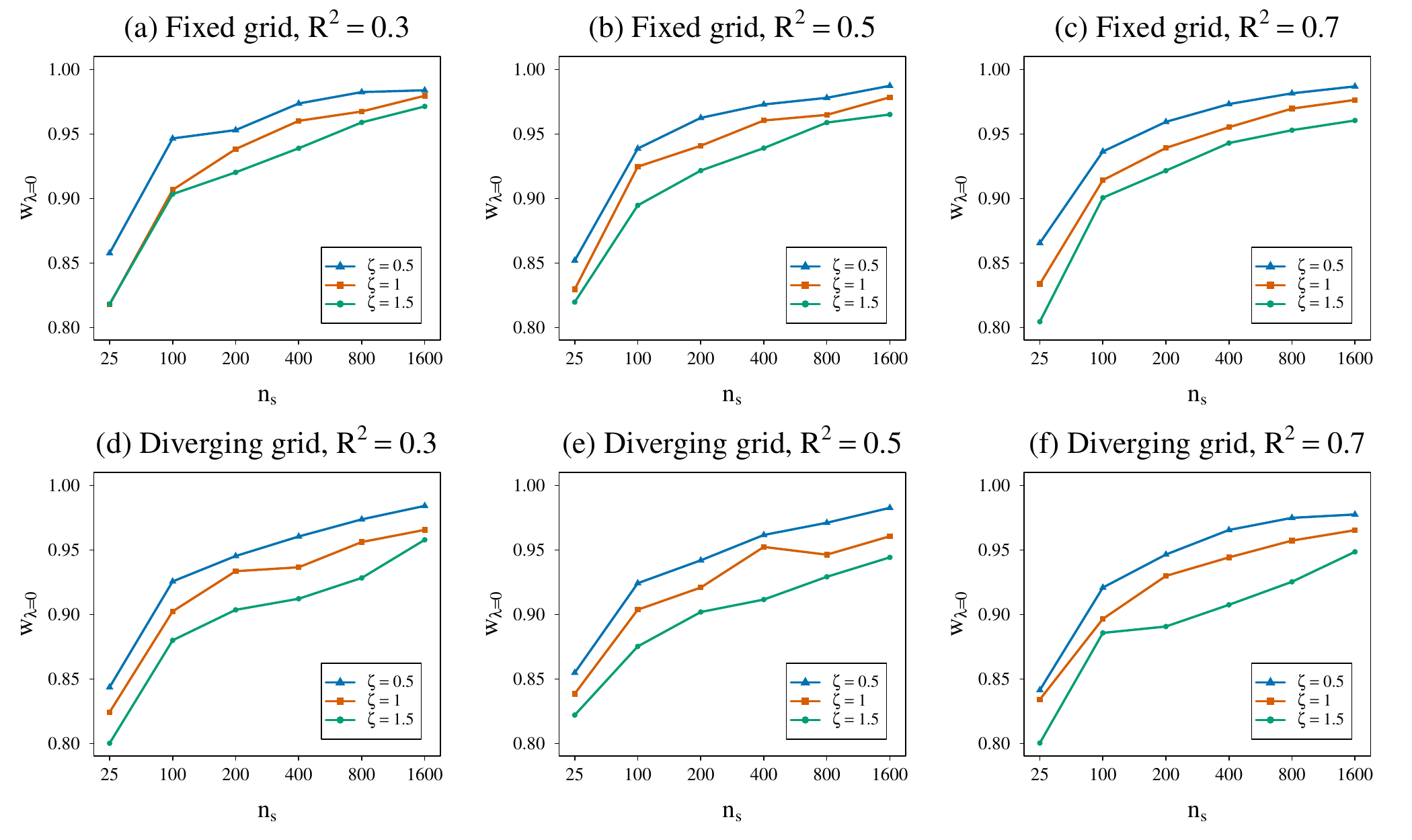}
\caption{Average weight assigned to the OLS endpoint under correct specification, using the oracle density ratio. The upper and lower rows correspond to the fixed and diverging candidate grids, respectively.}
\label{fig:supp-weight-concentration-oracle}
\end{figure}

\section{Data processing for the real-data analysis}
\label{sec:real_data_processing}

The response is infant birth weight, recorded in grams as \texttt{DBWT} and
converted to kilograms for analysis \citep{NCHS2016Natality,NCHS2024Natality}. We restrict the analysis to singleton
hospital births to U.S. residents, with maternal age between 15 and 49 years,
gestational age between 24 and 42 weeks, and birth weight between 227 and
8165 grams. Records with a missing or unknown value for any working covariate
are excluded. 
The working model contains seven continuous covariates: maternal age
(\texttt{MAGER}), prepregnancy body mass index (\texttt{BMI}), number of
prenatal visits (\texttt{PREVIS}), number of cigarettes smoked before
pregnancy (\texttt{CIG\_0}), number of cigarettes smoked during the first
trimester (\texttt{CIG\_1}), gestational weight gain (\texttt{WTGAIN}), and
the obstetric estimate of gestational age (\texttt{OEGest\_Comb}). The two
cigarette-count variables are transformed as $\log(1+c)$, where $c$ is the
reported number of cigarettes. 
The model also contains 13 categorical covariates: maternal education
(\texttt{MEDUC}), maternal race or ethnicity (\texttt{MRACEHISP}), maternal
nativity (\texttt{MBSTATE\_REC}), marital status (\texttt{DMAR}), parity
(\texttt{LBO\_REC}), timing of prenatal-care initiation
(\texttt{PRECARE5}), participation in the Special Supplemental Nutrition
Program for Women, Infants, and Children (\texttt{WIC}), prepregnancy
diabetes (\texttt{RF\_PDIAB}), gestational diabetes
(\texttt{RF\_GDIAB}), prepregnancy hypertension (\texttt{RF\_PHYPE}),
gestational hypertension (\texttt{RF\_GHYPE}), previous preterm birth
(\texttt{RF\_PPTERM}), and infant sex (\texttt{SEX}). These variables are
represented by treatment-coded indicators. After categorical expansion, the
working model contains 27 nonintercept covariates and an intercept, giving
parameter dimension $p=28$. 
The annual public-use files contain several million records. After applying
the eligibility and complete-case restrictions separately to the 2016 and
2024 files, a reproducible uniform reservoir sample of at most 300,000
records is retained from each year to form the source and target analysis
pools. Within each replication, the labeled source sample is drawn without
replacement from the 2016 pool. The unlabeled target sample and the labeled
target evaluation sample are drawn without replacement from the 2024 pool
and are mutually disjoint.

\end{document}